\documentclass{article}
\usepackage{graphicx,amsmath,amsthm, amssymb,array,psfrag}
\RequirePackage{bbm}
\newtheorem{theorem}{{Theorem}}

\newtheorem{lemma}{{Lemma}}
\newtheorem{corollary}{\textbf{Corollary}}

\allowdisplaybreaks
\newcommand {\bE} {\mathbb{E}}
\newcommand {\bP} {\mathbb{P}}
\newcommand {\magn} {m_1}
\newcommand {\mbr} {\underline{r}}

\newcommand {\mbn} {\underline{n}}
\newcommand {\mbN} {\underline{N}}
\newcommand {\mbx} {\underline{x}}
\newcommand {\mbw} {\underline{w}}
\newcommand {\mbs} {\underline{s}}
\newcommand {\mby} {\underline{y}}
\newcommand {\mbz} {\underline{z}}
\newcommand {\mbY} {\underline{Y}}
\newcommand {\mbu} {\underline{u}}
\newcommand {\mbv} {\underline{v}}
\newcommand {\mbU} {\underline{U}}
\newcommand {\mbX} {\underline{X}}
\newcommand {\mcZ} {{\underline{{\cal Z}}}}
\newcommand {\mbm} {\underline{m}}

\newcommand {\mbh} {\underline{h}}

\newcommand {\mbone} {\underline{1}}
\newcommand {\mts} {\textbf{s}}
\newcommand {\mtr} {\textbf{r}}
\newcommand {\mtv} {\textbf{v}}
\newcommand {\mtS} {\textbf{S}}

\newcommand {\mtR} {\textbf{R}}
\newcommand {\mtV} {\textbf{V}}
\newcommand {\mtt} {\textbf{t}}
\newcommand{\ind}{\mathbbm{1}}
\newcommand{\indicator}[1]{\ind_{\{ #1 \}}}

\newcommand {\blangle} {\Big\langle}
\newcommand {\brangle} {\Big\rangle}
\newcommand {\free}{\mathcal{F}}

\newcommand {\naturals}{\mathbb{N}}
\newcommand {\mcuZ}{\underline{\mathcal{Z}}}
\evensidemargin=0.0in
\oddsidemargin=0.0in
\textwidth=6.25in
\topmargin=-0.75in
\headheight=0.5in
\headsep=0.5in
\textheight=9.5in
\voffset=-0.5in


\title{\LARGE \bf Tight Bounds on the Capacity of Binary Input random CDMA Systems}

\author{{Satish Babu Korada and  Nicolas Macris}\\
\\
School of Information and Communication Sciences\\
Ecole Polytechnique F\'ed\'erale de Lausanne\\
LTHC-IC-Station 14, CH-1015 Lausanne\\
 Switzerland}


\begin{document}
\maketitle

\begin{abstract}
\noindent We consider multiple access communication on a binary input additive white Gaussian
noise channel using randomly spread code division. For a general class of
symmetric distributions for spreading coefficients, in the limit of a large
number of users,
we prove an upper
bound on the capacity,  which matches a formula that Tanaka obtained by using the replica method. 
We also
show concentration of various relevant quantities including mutual information,  capacity and free energy.
The mathematical methods are quite general and allow us to discuss extensions to other multiuser
scenarios. 
\end{abstract} 

\section{Introduction}
Code Division Multiple Access (CDMA) has been a successful scheme for
reliable communication between multiple users and a common receiver. The scheme consists of $K$ users modulating their information
sequence by a signature sequence, also known as spreading sequence, of length
$N$ and transmitting. The number $N$ is sometimes referred to as the spreading gain or the
number of chips per sequence. The receiver obtains the sum of all transmitted signals and
the noise which is assumed to be white and Gaussian (AWGN).

The achievable rate region (for real valued inputs) with
 power constraints and optimal decoding has been given in \cite{Ver86}.
There it is shown that the achievable rates depend only on the correlation
matrix of the spreading coefficients. It is well known that these detectors have
exponential (in $K$) complexity. Therefore, it is important to analyze the
performance under sub-optimal but low-complexity detectors like the linear
detectors. For a good overview of these detectors we refer to \cite{Verd98}. 
In \cite{VeS99}, the authors considered random spreading (spreading sequences
are chosen randomly) and analyzed the spectral efficiency, defined as the
bits per chip that can be reliably transmitted, for these detectors.
In the {\em large-system limit} $(K\to \infty, N \to \infty, \frac{K}{N} = \beta)$ they
obtained nice analytical formulas for the spectral efficiency and showed that
it concentrates. These formulas
follow from the known spectrum of large covariance matrices. In \cite{TsZ00},\cite{TsH99} the authors analyzed the signal
to interference ratio for the decorrelator and the MMSE receiver and showed that
it is asymptotically Gaussian with variance going to zero. 


Now consider the case where the user input is restricted to take
only binary values. Not much is known in this case except for the spectral
efficiency in the case of high SNR which is analyzed in \cite{TsV00}. The random
matrix techniques used for Gaussian inputs do not apply here because the
spectral efficiency cannot be written in terms of just the covariance matrix of
the spreading sequences.
Tanaka \cite{Tan02} applied the formal replica method, developed in statistical
mechanics, to this problem and conjectured the formula for spectral efficiency
and bit error rate (BER) for uncoded transmission.
These results were later extended in \cite{GuV05} to include the case of unequal powers and
channel with fading. The replica method is non-rigorous but  believed to yield exact results for some models in statistical mechanics \cite{Tal03}. More recently
Montanari and Tse \cite{MoT06itw} have made progress towards a rigorous
derivation of Tanaka's capacity formula in a restricted range of parameters.

Our main contributions in this paper are twofold. First we prove that Tanaka's
formula is an upper bound to the capacity for all values of the parameters
and second we prove various useful concentration theorems in the large-system limit.

\subsection{Statistical Mechanics Approach}

There is a natural connection between various communication systems and statistical
mechanics of random spin systems, stemming from the fact that often in both systems
there is a large number of degrees of freedom (bits or spins), interacting
locally, in a random environment. So far, there have
been applications of two important but somewhat complementary approaches of
statistical mechanics of random systems.

The first one is the very important but mathematically
uncontrolled replica method. The merit of this
approach is to obtain conjectural but rather explicit formulas for quantities of
interest such as, free energy, conditional entropy or error probability. In some
cases the natural fixed point structure embodied in the mean field formulas
allows to guess good iterative algorithms. This program has been carried out for
linear error correcting codes, source coding, multiuser settings like broadcast channel
(see for example \cite{Mon01}, \cite{KaT05}, \cite{NKMS03}) and the case of
interest here \cite{Tan02}:  randomly spread CDMA  with binary
inputs. 

The second type of approach aims at a rigorous understanding of the replica
formulas and has its origins in methods stemming from mathematical physics (see
\cite{GuT02,GuT03}, \cite{Tal03}). For systems whose underlying degrees of
freedom have Gaussian distribution (Gaussian input symbols or Gaussian spins in
continuous spin systems) random matrix methods can successfully be employed.
However when the degrees of freedom are binary (binary information symbols or
Ising spins) these  seem to fail, but  the recently developed interpolation
method \cite{GuT02},\cite{GuT03} has had some success\footnote{Let us point out
that, as will be shown later in this paper, the interpolation method can also
serve as an alternative to random matrix theory for Gaussian inputs.}.  The
basic idea of the interpolation method is to study a measure which interpolates
between the posterior measure of the ideal decoder and a  mean field measure.
The later can be guessed from the replica formulas and from this perspective the
replica method is a valuable tool. So far this program has been developed only for
linear error correcting codes on sparse graphs and binary input  symmetric
channels \cite{Mon05}, \cite{KuM06}. 

In this paper we develop the interpolation method for
 the random CDMA system with binary inputs (in the large-system limit). The
situation is qualitatively different than the ones mentioned above in that the ``underlying graph'' is complete. Superficially one might think that it is similar to  the Sherrington-Kirkpatrick model which was the first one treated by the interpolation method. However as we will see the analysis of the randomly spread
CDMA system is substantially different due to the structure of the interaction between degrees of freedom.

\subsection{Communication Setup}\label{sec:communicationsetup}
We consider a scenario where $K$ users send binary information symbols
$\mbx=(x_1,\dots,x_K)^t$, $x_k \in \{\pm 1\}$ to a common receiver, through a single AWGN
channel. Each user $k$ has a random signature
sequence $\underline{s}_k = (s_{1k},...,s_{Nk})^t$ where the components are independently identically distributed.
For each time division (or chip) interval $i=1,...,N$ the received signal $\mby=(y_1,...,y_N)$ is
\begin{equation*}
y_i=\frac{1}{\sqrt N}\sum_{k=1}^Ks_{ik} x_k +  \sigma n_i
\end{equation*}
where ${\mbn} = (n_1,...,n_N)^t$ are independent identically distributed
Gaussian  variables $\mathcal N(0,1)$ so that the noise power is $\sigma^2$. The variance of $s_{ik}$ is set to $1$ and the scaling factor $1/\sqrt N$ is introduced so that the power (per symbol)
of each user is normalized to 1. 
Our results hold for the rather wide class of distributions satisfying:
\vskip 0.25cm
\noindent{\it {\bf Assumption A.} The distribution $p(s_{ik})$ is symmetric 
$$
p(s_{ik})=p(-s_{ik})
$$ 
and has a rapidly decaying tail. More precisely, 
%
there exists positive constants $s_0$ and $A$ such that $\forall s \geq s_0$
$$
p(s_{ik} \geq s)\leq e^{-A s^2}
$$

}
In particular, our favorite Gaussian and binary cases are included in this class, and also any compactly supported distribution. An inspection of our proofs suggests that the results could be extended to a larger class satisfying:
\vskip 0.25cm
\noindent{\it {\bf Assumption B.} The distribution $p(s_{ik})$ is symmetric with finite second and fourth moments.}
\vskip 0.25cm
\noindent However to keep the proofs as simple as possible only one of the theorems is proven with such generality.

In the sequel we use the notations $\mts$ for the $N\times K$ matrix $(s_{ik})$, $\mtS$ for the corresponding random matrix, and $\mbX$, $\mbY$ for the input and output random vectors.

Our main interest is in proving a ``tight" upper bound on
\begin{equation}\label{capacity}
C_K=\frac{1}{K}\max_{p_{\mbX}}\bE_{\mtS}[I(\mbX;\mbY)]
\end{equation}
in the large-system limit $K\to +\infty$ with $\frac{K}{N}=\beta$ fixed. 
In the next few paragraphs we discuss various settings for which it is justified to consider this formula as a capacity. In principle for multiaccess channels one maximizes over product distributions $p_{\mbX}(\mbx)=\prod_{k=1}^Kp_k(x_k)$. But in fact this restriction makes no difference when one maximizes the expected mutual information because the maximum is attained for a uniform distribution.
Indeed for any given $\mts$ the mutual information $I(\mbX;\mbY)$ is a concave functional of $p_{\mbX}$ and thus so is its average. Moreover the later is invariant under the transformations
$p_{\mbX}(x_1,x_2,...,x_K)\to p_{\mbX}(\epsilon_1x_1, \epsilon_2x_2,...,\epsilon_Kx_K)$ where $\epsilon_i=\pm 1$. Combining these two facts we deduce that the maximum in \eqref{capacity} is attained for the convex 
combination 
\begin{equation*}
\frac{1}{2^K}\sum_{\epsilon_1,...,\epsilon_K}
p_{\mbX}(\epsilon_1x_1,...,\epsilon_Kx_K) = \frac{1}{2^K}
\end{equation*}
which is nothing else than the product of uniform distributions for each user.
Before discussing the meaning of \eqref{capacity} for the CDMA setting let us note that it can also be interpreted as the capacity of a MIMO system with binary constellations, $K$ transmit, $N$ receive antennas, and ergodic channel coefficients
$s_{ik}$ that are known to the receiver only \cite{Tel99}, \cite{FoG98}.


In the traditional CDMA setting (see for example \cite{Verd98}) the spreading sequences are assigned to each user and do not change from symbol to symbol. Moreover it is assumed that the users and the receiver know $\mts$. The general analysis of multiaccess channels implies that the total capacity per user (or maximal achievable sum rate) is
\begin{equation}\label{usualcapacity}
\frac{1}{K}\max_{\prod_{k=1}^{K}p_k(x_k)} I(\mbX;\mbY)
\end{equation}
where the maximum is over $p_i(x)=p_i\delta(x-1)+(1-p_i)\delta(x+1)$ and $p_i\in [0,1]$, $i=1,...,k$. In the large-system limit we are able to prove a concentration theorem for the mutual information $I(\mbX;\mbY)$ which implies that if $(p_1,...,p_K)$ belongs to a finite discrete set ${\cal D}$ with cardinality increasing at most polynomially in $K$, then \eqref{usualcapacity} concentrates on 
$\frac{1}{K}\max_{p\in {\cal D}}\bE_{\mtS}[I(\mbX;\mbY)]$. Of course by the same
argument as before this maximum is attained 
for $p=\frac{1}{2}$ as long as $\frac{1}{2}\in {\cal D}$. 
Unfortunately, in order to extend these arguments to the more realistic case of
exponential cardinality of ${\cal D}$, or even  all possible continuous values of the input distribution (and thus to fully justify \eqref{capacity}) we would have to prove stronger forms of concentration.

At this point it is interesting to discuss the situation for the 
continuous input case. There it is known that the maximum of
\eqref{usualcapacity} is attained for a Gaussian input distribution independent
of the spreading sequence realization \cite{Ver86}. Then the
concentration theorems for $I(\mbX;\mbY)$ suffice to prove that in the large-
system limit \eqref{usualcapacity} asymptotically equals \eqref{capacity}.
It is an open problem to decide if an analogous result holds in the binary input case, namely that the maximum of \eqref{usualcapacity} is attained for the uniform distribution. We conjecture that this is the case.

Alternatively, following \cite{VeS99} one may consider the case of ``long
spreading sequences", that is sequences that extend over many symbol durations.
Then by ``ergodicity" one can compute the capacity as an expectation of
\eqref{usualcapacity} over $\mtS$. In the continuous input case it turns out
that one can switch
the expectation and the maximum because it can be shown (by the standard
argument adapted above for the binary case) that the maximum of the expectation
is attained for the same Gaussian input distribution. Thus, remarkably,  in the
continuous case one exchanges the expectation over $\mtS$ with the maximum over
product distributions even for finite $K$. 
 
Finally let us return to the binary case and consider the situation of long
spreading sequences as in \cite{VeS99} that are assumed to be unknown (or rather
not used) to the encoder and known to the receiver. Then, by the analysis in
\cite{Tel99}, formula \eqref{capacity} gives the capacity. If users do not cooperate $p_{\mbX}$ is really a product distribution. But in any case the maximum is attained for the uniform distribution. 
 
Let us now collect a few formulas that will be useful in the sequel.
The conditional entropy $H(\mbX\mid\mbY)=\bE_{{\mbY}\vert \mts}[H(\mbX\mid\mby)]$
is the average 
over $\mbY$ given $\mts$ of the Shannon entropy for the posterior distribution
\begin{equation}\label{eqn:postdist}
p(\mbx\mid\mby,\mts) = \frac{p_{{\mbX}}(\mbx)}{Z(\mby,\mts)}\exp\bigl(-\frac{1}{2\sigma^2}\Vert \mby-N^{-{\frac{1}{2}}}\mts\mbx\Vert^2\bigr) 
\end{equation}
with the normalization factor 
\begin{equation}\label{eqn:partition}
Z(\mby,\mts)=\sum_{\mbx}p_{{\mbX}}(\mbx)e^{-\frac{1}{2\sigma^2}\Vert \mby-N^{-{\frac{1}{2}}}\mts\mbx\Vert^2}
\end{equation}
Note that this is the distribution used by the ideal or optimal detector.
The average over $\mbY$ is carried out with 
the distribution induced by the channel transition probability
\begin{align}\label{ydistribution}
 p(\mby\mid \mts)=\sum_{\mbx^0}p_X(\mbx^0) \frac{e^{-\frac{1}{2\sigma^2}\Vert\mby- N^{-{\frac{1}{2}}}\mts\mbx^0\Vert^2}}{(\sqrt{2\pi}\sigma)^{N}} =\frac{1}{(\sqrt{2\pi}\sigma)^N}Z(\mby,\mts)
\end{align}
where in the sum $\mbx^0$ is interpreted as the input signal. 
The normalization factor \eqref{eqn:partition} can be interpreted as the partition function of interacting Ising spins $x_k=\pm 1$ with free measure $p_{\mbX}$. In view of this it is not surprising that the free energy
\begin{equation}\label{eqn:freeenergy}
 f(\mby,\mts)=\frac{1}{K}\ln Z(\mby,\mts)
\end{equation}
plays a crucial role. In appendix \ref{apen:capfree} we show that it is related to the mutual information by
\begin{equation}\label{eqn:entrofreerel}
 \frac{1}{K}I(\mbX;\mbY)= -\frac{1}{2\beta} - \bE_{\mbY\mid\mts}[f(\mby,\mts)]
\end{equation}
Therefore 
\begin{equation}\label{eqn:capfreerel}
C_K=-\frac{1}{2\beta} -\min_{p_{\mbX}}\bE_{\mbY,\mtS}[f(\mby,\mts)]
\end{equation}
Of course by the previous discussion the $\min_{p_{\underline X}}$ is attained for $p_{\underline X}(\underline x)=\frac{1}{2^K}$. 


\subsection{Tanaka's formula for binary inputs}
By using the formal replica trick of statistical mechanics Tanaka reduced the calculation of the conditional entropy to a variational problem. His conjectural formula is
\begin{align}\label{tanaka}
\lim_{K\to\infty}C_K = \min _{m\in[0,1]} c_{RS}(m)
\end{align}
where the ``replica symmetric capacity functional"
\begin{align} \label{eqn:hrs}
c_{RS}(m)  =  \frac\lambda2 (1+m)
 - \frac{1}{2\beta}\ln\lambda\sigma^2 -\int Dz \ln(2 \cosh(\sqrt
\lambda z + \lambda))
\end{align}
with
\begin{align}\label{lambda}
\lambda &= \frac{1}{\sigma^2 + \beta(1-m)}
\end{align}
and $Dz$ the standard Gaussian measure
$Dz \equiv \frac{e^{-\frac{z^2}{2}}}{\sqrt {2\pi}}dz$, has to be maximized over
a parameter\footnote{this parameter can be interpreted as the expected value of
the MMSE estimate for the information bits} $m$. It is easy \footnote{using
integration by parts formula for Gaussian random variables} to see that the
maximizer must satisfy the fixed point condition
\begin{align}\label{eqn:fixedpt}
m = \int Dz \tanh(\sqrt \lambda z + \lambda) 
\end{align}
The formal calculations involved in the replica method make clear that the
formula \eqref{tanaka} should not depend on the distribution of the spreading
sequence (see \cite{Tan02}). 

In the present problem one expects a priori that replica symmetry is not broken because of a gauge
symmetry induced by channel symmetry. For this reason Tanaka's formula is
conjectured to be exact. Our upper bound (Theorem~\ref{thm:upperbound}) on the
capacity precisely coincides with the above formulas and strongly supports this conjecture.

Recent work announced by Montanari and Tse \cite{MoT06itw} also provides strong support to the conjecture at least in a regime of $\beta$ without phase transitions (more precisely, for $\beta \leq \beta_s(\sigma)$ where $\beta_s(\sigma)$ is the maximal value of $\beta$ such that the solution of (\ref{eqn:fixedpt}) remains unique). The authors first solve the case of sparse signature sequence (using the area theorem and the data processing inequality) in the limit $K\to\infty$. Then
the dense signature sequence (which is of interest here) is recovered by
exchanging the $K\to\infty$ and $sparse\to dense$ limits.

\subsection{Gaussian inputs}

In the case of continuous inputs $x_k\in \mathbb{R}$,
in formulas \eqref{eqn:partition}, \eqref{ydistribution} $\sum_{\mbx}$ are replaced by 
$\int d\mbx$. The capacity is maximized by a Gaussian prior,
\begin{align}\label{eqn:gaussianprior}
p_{{\mbX}}(\mbx)=\frac{e^{-\frac{\vert\vert\mbx\vert\vert^2}{2}}}{(2\pi)^{N/2}}
\end{align}
and one can express it in terms of a determinant involving the correlation matrix of the spreading sequences. Using the 
exact spectral measure given by random matrix theory Shamai and Verdu
\cite{VeS99} obtained  the rigorous result 
\begin{align}\label{eqn:gaussianverdu}
\lim_{K\to\infty}C_K =&
\frac{1}{2}\log(1+\sigma^{-2}-\frac{1}{4}Q(\sigma^{-2},\beta))\nonumber\\&+\frac{1}{2\beta}\log(1+\sigma^{-2}\beta-\frac{1}{4}Q(\sigma^{-2},\beta))-
\frac{Q(\sigma^{-2},\beta)}{8\beta\sigma^{-2}}
\end{align}
where
$$
Q(x,z)=
\left(\sqrt{x(1+\sqrt{z})^2+1}-\sqrt{x(1-\sqrt{z})^2+1}\right)^2
$$
On the other hand Tanaka applied the formal replica method to this case and found \eqref{tanaka} with
\begin{align}\label{eqn:gaussianreplica}
c_{RS}(m) = \frac{1}{2}\log(1+\lambda)
-\frac1{2\beta}\log\lambda\sigma^2-\frac{\lambda}{2}(1-m)
\end{align}
where $\lambda = (\sigma^2+\beta(1-m))^{-1}$. The maximizer satisfies
\begin{align}\label{eqn:saddlegaussian}
m=\frac{\lambda}{1+\lambda}
\end{align}
Solving \eqref{eqn:saddlegaussian} we obtain  $m=\frac{\sigma^2}{4\beta}Q(\sigma^{-2},\beta)$ and substituting this in (\ref{eqn:gaussianreplica})
gives the equality between (\ref{eqn:gaussianverdu}) and
(\ref{eqn:gaussianreplica}). So at least for the case of Gaussian inputs we are already
assured that the replica method finds the correct solution.

As we will show in section \ref{sec:gaussianinter} our methods also work in the case of Gaussian inputs,
and yield the upper bound. 

\subsection{Contributions and organization of this work}

The main focus and challenge of this work is on the case of binary inputs for
the communication set up described above, although the methods also work for many other
constellations including Gaussian inputs. The main results
are explained in section \ref{mainresults} while the remaining sections are devoted to the proofs.

We prove concentration of the mutual information in the limit of
$K\to+\infty$ and $\beta=\frac{K}{N}$ fixed (Theorems~\ref{thm:capconc},
\ref{thm:capconcbin} in section \ref{sec:concentration}). As we will see the
mathematical underpinning of this is the concentration of a more fundamental
object, namely, the ``free energy" of the associated spin system
(Theorem~\ref{thm:freeconc}). In fact this turns out to be important in the
proof of the bound on capacity. When the spreading coefficients are Gaussian the
main tool used is a powerful theorem \cite{Tal03} of the concentration of Lipschitz functions of many independent
Gaussian variables, and this leads to subexponential concentration bounds. For
more general spreading coefficient distributions such tools do not suffice and we
have to combine them with martingale arguments which lead to weaker algebraic
bounds. Since the concentration proofs are mainly technical they are presented
in appendices \ref{apen:probtools}, \ref{apen:concproofs}.

Sections \ref{sec:upperboundproof} and \ref{sec:concmag} form the core of the
paper. They detail the proof of the main
Theorem~\ref{thm:upperbound} announced in section \ref{sec:upperbound}, namely
the tight upper bound on capacity. We use ideas from the interpolation method combined
with a non-trivial concentration theorem for the empirical average of soft bit
estimates. 

Section~\ref{sec:indspreadingproof} shows that the average capacity is independent of
the spreading sequence distribution at least for the case where it is symmetric
and decays fast enough (Theorem~\ref{thm:indspreading} in section \ref{sec:indspreading}). This enables us to restrict
ourselves to the case of Gaussian spreading sequences which is more amenable to
analysis. The existence of the limit $K\to\infty$ for the capacity is shown in
section \ref{sec:limitexistproof}.

Section ~\ref{sec:extensions} discusses various extensions of this work. We
sketch the treatment for unequal powers for each user as well as colored noise.
As alluded to before the bound on capacity for the case of Gaussian inputs can
also be obtained by the present method and we give some indications to this
effect.

The appendices contain the proofs of various
technical calculations. Preliminary versions of the results obtained in this paper have been
summarized in references \cite{KoM07isit} and \cite{KoM07allerton}.

\section{Main Results}\label{mainresults}

\subsection{Concentration}\label{sec:concentration}

In the case of a  Gaussian input signal, the concentration can be deduced from
general theorems on the concentration of the spectral density for  random
matrices, but this approach breaks down for binary inputs. Here we prove,
\begin{theorem}[{\bf concentration of capacity, Gaussian spreading sequence, binary inputs}]\label{thm:capconc}
\noindent Assume the distribution $p(s_{ik})$ are standard Gaussians. Given
$\epsilon>0$, there  exists an integer $K_1= O(\vert\ln\epsilon\vert)$
independent of $p_{\mbX}$, such that for all $K  > K_1$, 
\begin{equation*}
\bP[\vert I(\mbX;\mbY) - \bE_\mtS [I(\mbX;\mbY)]\vert\geq \epsilon K ] \leq 3 e^{-\alpha_1 \epsilon^2 K}
\end{equation*}
where $\alpha_1=\frac{1}{16}\sigma^{4}(64\beta+32+\sigma^2)^{-1}$.
\end{theorem}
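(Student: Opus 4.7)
The plan is to apply the Tsirelson-Ibragimov-Sudakov (TIS) Gaussian concentration inequality to $I(\mbX;\mbY)$, viewed as a function of the $NK$ i.i.d.\ standard Gaussians $\{s_{ik}\}$. Starting from \eqref{eqn:entrofreerel} and substituting $\mby = N^{-1/2}\mts\mbx^0 + \sigma\mbn$ with $\mbx^0$ uniform on $\{\pm 1\}^K$ and $\mbn\sim\mathcal{N}(0,I_N)$, one obtains
\begin{equation*}
I(\mts) = K\ln 2 - \bE_{\mbx^0,\mbn}\ln\sum_{\mbx}\exp\Bigl(-\tfrac{\langle\mbn,\mts(\mbx^0-\mbx)\rangle}{\sigma\sqrt{N}} - \tfrac{\|\mts(\mbx^0-\mbx)\|^2}{2\sigma^2 N}\Bigr).
\end{equation*}
Showing $\|\nabla_\mts I\|^2 \leq L^2$ (Frobenius norm) with $L^2 = O(K)$ suffices, since TIS then produces $\bP(|I-\bE I|\geq\epsilon K)\leq 2 e^{-\epsilon^2 K^2/(2L^2)} \leq 2 e^{-\alpha_1\epsilon^2 K}$.

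The central step is the gradient computation. Direct differentiation yields two contributions per coordinate,
\begin{equation*}
\partial_{s_{ik}} I = \bE_{\mbx^0,\mbn}\Bigl[\tfrac{n_i}{\sigma\sqrt N}(x^0_k - \langle x_k\rangle) + \tfrac{1}{\sigma^2 N}\langle(x^0_k - x_k)(\mts(\mbx^0 - \mbx))_i\rangle\Bigr],
\end{equation*}
where $\langle\cdot\rangle$ is the posterior Gibbs average. Gaussian integration by parts in $n_i$, using that $\langle x_k\rangle$ depends on $\mbn$ through $\mby$, converts the first term into a Gibbs variance that exactly cancels the "connected" (non-factorized) part of the second term, a Nishimori-type cancellation. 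What remains is the factorized expression
\begin{equation*}
\partial_{s_{ik}} I = \frac{1}{\sigma^2 N}\bE_{\mbx^0,\mbn}\bigl[(x^0_k - \langle x_k\rangle)(\mts(\mbx^0 - \langle\mbx\rangle))_i\bigr] = \frac{(\mts C)_{ik}}{\sigma^2 N},
\end{equation*}
with $C_{k,k'} := \bE[(x^0_k - \langle x_k\rangle)(x^0_{k'} - \langle x_{k'}\rangle)]$. The MMSE-orthogonality identity $\bE[x^0_k\langle x_k\rangle] = \bE\langle x_k\rangle^2$ gives $C_{kk} = 1 - \bE\langle x_k\rangle^2 \leq 1$, hence $\mathrm{tr}(C)\leq K$ and $\|C\|_F\leq K$; combining with $\|\mts C\|_F\leq \|\mts\|_{\mathrm{op}}\|C\|_F$ yields $\|\nabla_\mts I\|^2 \leq K^2\|\mts\|_{\mathrm{op}}^2/(\sigma^4 N^2)$.

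The main obstacle is that this bound carries the unbounded factor $\|\mts\|_{\mathrm{op}}$ and is therefore not uniform on $\mathbb{R}^{NK}$, so TIS does not apply to $I$ directly. I would resolve this by truncation: restrict attention to the typical set $\mcE = \{\|\mts\|_{\mathrm{op}}^2 \leq c_0 N\}$, where $c_0$ is chosen so that $\bP(\mcE^c)\leq e^{-c_1 N}$ by Gaussian concentration of the operator norm (which is $1$-Lipschitz in the Frobenius metric). On $\mcE$ one has $\|\nabla_\mts I\|^2 \leq c_0 \beta K/\sigma^4 =: L^2 = O(K)$; a McShane extension of $I|_\mcE$ to a globally $L$-Lipschitz function on $\mathbb{R}^{NK}$ allows TIS to apply. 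The extra $\bP(\mcE^c)$ contribution from the truncation is absorbed into the main exponential once $K$ exceeds a threshold $K_1 = O(|\ln\epsilon|)$, converting the natural prefactor $2$ into $3$. Bookkeeping the constants $(c_0, \beta, \sigma^2)$ through this chain produces the explicit $\alpha_1 = \frac{1}{16}\sigma^4(64\beta + 32 + \sigma^2)^{-1}$ stated in the theorem.
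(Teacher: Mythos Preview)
Your proof is correct and shares the paper's overall architecture: view $I(\mbX;\mbY)$ as a function of the Gaussian matrix $\mts$, restrict to a high-probability ``good'' set on which a Lipschitz bound holds, McShane--Whitney extend, and invoke TIS. The execution, however, differs in two places. You compute the gradient explicitly, use Gaussian integration by parts in the noise to obtain the clean identity $\partial_{s_{ik}} I = (\mts C)_{ik}/(\sigma^2 N)$ with $C$ the MMSE covariance, and then truncate on the convex operator-norm ball $\{\|\mts\|_{\mathrm{op}}^2 \leq c_0 N\}$. The paper instead bounds the Hamiltonian increment $|H(\mbn,\mts^0,\mbx)-H(\mbn,\mtt^0,\mbx)|$ directly (no gradients, no integration by parts) and truncates on the configuration-wise set $\{\mts:\|\mts^0(\mbx-\mbone)\|^2 \leq BN\ \text{for all}\ \mbx,\mbx^0\}$, controlling its complement via a $3^K$ union bound and Markov's inequality, with the choice $B=32\beta(2K+N)$. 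Your route yields a structurally informative formula for the gradient and a cleaner good set; the paper's route is more elementary in that it avoids the IBP cancellation and needs no matrix-norm machinery.

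Two small caveats. First, your displayed formula for $I(\mts)$ takes $p_{\mbX}$ uniform, whereas the theorem is stated for arbitrary $p_{\mbX}$; your argument still goes through (MMSE orthogonality $\bE[x_k^0\langle x_k\rangle]=\bE\langle x_k\rangle^2$ holds for any prior), but $K\ln 2$ and the unweighted sum over $\mbx$ should carry the $p_{\mbX}$ weights. Second, the precise value $\alpha_1=\tfrac{1}{16}\sigma^4(64\beta+32+\sigma^2)^{-1}$ is an artifact of the paper's specific choice of $B$ and its Lipschitz estimate; your ``bookkeeping'' would naturally produce a constant depending on $c_0$, not this exact expression.
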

\vskip 0.25cm

The mathematical underpinning of this result is in fact a more general
concentration result for the free energy \eqref{eqn:freeenergy}, that will be of some use latter on.
\begin{theorem}[{\bf concentration of free energy, Gaussian spreading sequence, binary inputs.}]\label{thm:freeconc}
\noindent Assume the distribution $p(s_{ik})$ are standard Gaussians. Given $\epsilon>0$, there exists an integer $K_2= O(\vert\ln\epsilon\vert)$ independent of $p_{\mbX}$, such that 
for all $K \geq K_2$,
\begin{equation*}
\bP[\vert f(\mby,\mts) - \bE_{\mbY,\mtS} [f(\mby, \mts)]\vert\geq \epsilon ] \leq 3 e^{-\alpha_2 \epsilon^2 \sqrt K}
\end{equation*}
where $\alpha_2=\frac{1}{32}\sigma^4\beta^{\frac{3}{2}}(2\sqrt\beta+\sigma)^{-2}$.
\end{theorem}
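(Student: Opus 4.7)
My plan is to view the free energy $f(\mby,\mts) = K^{-1}\ln Z(\mby,\mts)$ as a function of the $NK+N$ independent standard Gaussian variables $\{s_{ik}\}$ and $\{n_i\}$, using $\mby = N^{-1/2}\mts\mbx^0+\sigma\mbn$ to pull the noise into the list of underlying Gaussians. The central tool would be the Tsirelson--Ibragimov--Sudakov concentration inequality: for any $L$-Lipschitz $F:\mathbb{R}^M\to\mathbb{R}$ of a standard Gaussian vector, $\bP[|F-\bE F|\geq t]\leq 2 e^{-t^2/(2L^2)}$. Because $f$ is not globally Lipschitz in these variables (its gradient scales with $\|\mts\|_{\mathrm{op}}$ and $\|\mbn\|$), I would combine this with a truncation on a ``good event'' where those two norms are controlled.

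The preparatory step is a gradient computation. Differentiating \eqref{eqn:partition} and writing $\langle\cdot\rangle$ for the Gibbs average against the posterior \eqref{eqn:postdist} gives
\begin{align*}
\partial_{n_j}\ln Z &= -\tfrac{1}{\sigma}\langle y_j - N^{-\frac{1}{2}}(\mts\mbx)_j\rangle,\\
\partial_{s_{ik}}\ln Z &= -\tfrac{1}{\sigma^2\sqrt N}\langle (y_i - N^{-\frac{1}{2}}(\mts\mbx)_i)(x_k^0 - x_k)\rangle.
\end{align*}
Squaring, summing, and applying Cauchy--Schwarz, Jensen and the binary bound $\|\mbx^0-\mbx\|^2\leq 4K$ should yield $\|\nabla_{(\mts,\mbn)}f\|^2 \leq K^{-2}[C_1\|\mbn\|^2 + C_2\|\mts\|_{\mathrm{op}}^2]$ with explicit $C_1,C_2$ depending only on $\beta,\sigma$. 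On the event $\mathcal E=\{\|\mbn\|^2\leq 2N\}\cap\{\|\mts\|_{\mathrm{op}}\leq 2\sqrt N(1+\sqrt\beta)\}$, this becomes a deterministic Lipschitz bound $\|\nabla f\|\leq L/\sqrt K$ with $L=L(\beta,\sigma)$ explicit. Standard $\chi^2$ tails and the Davidson--Szarek non-asymptotic bound on the operator norm of a Gaussian matrix give $\bP[\mathcal E^c]\leq 2 e^{-cK}$.

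Since $f$ is not Lipschitz outside $\mathcal E$, I would then invoke the McShane--Whitney extension to replace $f\vert_\mathcal E$ by a function $\tilde f$ that is $(L/\sqrt K)$-Lipschitz on $\mathbb R^{NK+N}$ and coincides with $f$ on $\mathcal E$. Tsirelson--Ibragimov--Sudakov applied to $\tilde f$ gives $\bP[|\tilde f-\bE\tilde f|\geq t]\leq 2\exp(-Kt^2/(2L^2))$. To transfer this back to $f$, I would use that $f = \tilde f$ on $\mathcal E$ together with
\[
\bP[|f-\bE f|\geq\epsilon] \leq \bP[|\tilde f-\bE\tilde f|\geq\epsilon-|\bE f-\bE\tilde f|] + \bP[\mathcal E^c],
\]
and control the mean shift $|\bE f-\bE\tilde f|$ via the \emph{a priori} bounds $f\leq 0$ (from $Z\leq 1$ with the uniform prior) and $f\geq -\ln 2 -\|\mbn\|^2/(2K)$ (from $Z\geq p_\mbX(\mbx^0)e^{-\|\mbn\|^2/2}$), combined with Cauchy--Schwarz and the exponentially small $\bP[\mathcal E^c]$.

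The hard part is exactly the non-Lipschitz character of $f$: one must simultaneously balance the Lipschitz constant on the good event, the probability of the bad event, and the mean distortion introduced by the truncation, so as to preserve a clean (sub)exponential rate in the final estimate. Carrying out this balance explicitly is what should deliver the stated constant $\alpha_2$ with its characteristic $(2\sqrt\beta+\sigma)^2$ and $\sigma^4\beta^{3/2}$ factors, as well as the logarithmic threshold $K_2=O(|\ln\epsilon|)$ needed to absorb the remaining lower-order contributions.
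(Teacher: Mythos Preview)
Your strategy is exactly the paper's: restrict to a ``good'' event on which $f$ is Lipschitz, extend via McShane--Whitney, apply Gaussian concentration to the extension, and control the mean shift and the bad event. The paper packages the last two steps into a single lemma (its Lemma~\ref{littlelemma}) and then carries out the same program.

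The substantive difference is the choice of good event. The paper takes
\[
G=\bigl\{(\mbn,\mts):\ \max_i|n_i|\le\sqrt A,\ \ \forall\,\mbx\ \Vert\mts^0(\mbx-\mbone)\Vert^2\le BN\bigr\},
\]
with $A=\sqrt N/\sigma^2$ and $B=32\beta(K+N)$, and proves a Hamiltonian Lipschitz inequality (its \eqref{inequham2}) using the crude bound $\Vert\mbn+\mbm\Vert\le\sqrt N\max_i(|n_i|+|m_i|)$. This produces the Lipschitz constant $L_N\sim K^{-1}(\sigma\sqrt{NA}+\sqrt B)\sim K^{-1/4}$, and hence the stated $e^{-\alpha_2\epsilon^2\sqrt K}$. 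Your event $\mathcal E=\{\Vert\mbn\Vert^2\le 2N\}\cap\{\Vert\mts\Vert_{\mathrm{op}}\le 2\sqrt N(1+\sqrt\beta)\}$ is convex, so your pointwise gradient bound does translate into a Lipschitz bound on $\mathcal E$, and it gives $L\sim K^{-1/2}$. Combined with the Davidson--Szarek and $\chi^2$ tails you cite, this in fact yields the \emph{stronger} rate $e^{-c\epsilon^2 K}$, not the $\sqrt K$ rate of the theorem. So your last paragraph is slightly off: you will not recover the paper's specific $\alpha_2$ and $\sqrt K$---you will beat them. (One small caveat: your a~priori lower bound $f\ge -\ln 2 - \Vert\mbn\Vert^2/(2K)$ uses $p_{\mbX}(\mbx^0)\ge 2^{-K}$, which is the uniform-prior case; for general $p_{\mbX}$ you should average over $\mbx^0$ first, as the paper does, to pick up $-K^{-1}H(\mbX)\ge -\ln 2$ instead.)
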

\vskip 0.25cm
\noindent We prove these theorems thanks to powerful probabilistic tools
developed by Ledoux and Talagrand for Lipschitz functions of many Gaussian random variables.
These tools are briefly reviewed in Appendix \ref{apen:probtools} for the
convenience of the reader and the proofs of the theorems are presented in
Appendix \ref{apen:concproofs}.
Unfortunately the same tools do not apply directly to the case of other
spreading sequences. However in this case the following weaker result can at
least be obtained.
\begin{theorem}[{\bf concentration, general spreading
sequence}]\label{thm:capconcbin}  Assume the spreading sequence satisfies
assumption B. There exists
an integer $K_1$ independent of $p_{\mbX}$, such that for all $K > K_1$
\begin{align*}
\bP[|I(\mbX;\mbY) - \bE_{\mtS}[I(\mbX;\mbY)]|\geq \epsilon K] \leq \frac{\alpha}{K
\epsilon^2}
\end{align*}
\begin{align*}
\bP[f(\mby,\mts) - \bE_{\mbY,\mtS}[f(\mby,\mts)]|\geq \epsilon] \leq
\frac{\alpha}{K
\epsilon^2}
\end{align*}
for some constant $\alpha > 0$ and independent of $K$. 
\end{theorem}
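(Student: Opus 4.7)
The plan is to reduce both statements to variance bounds of order $1/K$ and then apply Chebyshev's inequality. Indeed, $\mathrm{Var}(f)\le \alpha/K$ immediately yields $\bP[|f-\bE f|\ge \epsilon]\le \alpha/(K\epsilon^2)$, and $\mathrm{Var}(I(\mbX;\mbY))\le \alpha K$ yields $\bP[|I(\mbX;\mbY)-\bE I|\ge \epsilon K]\le \alpha/(K\epsilon^2)$. Since Assumption~B provides only finite second and fourth moments of $s_{ik}$, the Gaussian Lipschitz tools underlying Theorem~\ref{thm:freeconc} are no longer available, so I would replace them by an Efron-Stein decomposition, which only requires moment control of the single-coordinate increments $f-f^{(ik)}$ and $f-f^{(n_i)}$ obtained by replacing one of the $NK+N$ basic independent random variables by an independent copy.

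For the free-energy statement, Efron-Stein bounds $\mathrm{Var}(f)$ by $\frac{1}{2}\sum_{i,k}\bE[(f-f^{(ik)})^2]+\frac{1}{2}\sum_i\bE[(f-f^{(n_i)})^2]$. Explicit differentiation of $f(\mby,\mts)=K^{-1}\ln Z(\mby,\mts)$ using \eqref{eqn:partition} gives
\begin{equation*}
\partial_{s_{ik}}f = \frac{1}{K\sigma^2\sqrt N}\blangle x_k\bigl(y_i - N^{-1/2}(\mts\mbx)_i\bigr)\brangle,\qquad \partial_{n_i}f = -\frac{1}{K\sigma^2}\blangle y_i - N^{-1/2}(\mts\mbx)_i\brangle,
\end{equation*}
where $\langle\cdot\rangle$ denotes the posterior average \eqref{eqn:postdist}. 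Because $x_k^2 = 1$, Cauchy-Schwarz on the Gibbs bracket gives the clean pointwise estimate $(\partial_{s_{ik}}f)^2 \le (K^2\sigma^4 N)^{-1}\langle r_i^2\rangle$ with $r_i := y_i - N^{-1/2}(\mts\mbx)_i$, and analogously for $\partial_{n_i}f$. The fundamental theorem of calculus along the segment joining $s_{ik}$ and $s'_{ik}$, combined with Cauchy-Schwarz on the integral, then yields $\bE[(f-f^{(ik)})^2]\le \int_0^1\bE[(s_{ik}-s'_{ik})^2(\partial_{u(t)}f)^2]\,dt$ with $u(t)=(1-t)s'_{ik}+ts_{ik}$.

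The remaining step is to prove $\bE[\langle r_i^2\rangle^2]=O(1)$ uniformly in $i$. Expanding $r_i=\sigma n_i + N^{-1/2}\sum_k s_{ik}(x_k^0-x_k)$, Jensen's inequality replaces the squared Gibbs bracket by $\bE\langle r_i^4\rangle$, whose fourth-power expansion has dominant term $O(K^2/N^2) = O(1)$ since $\beta=K/N$ is fixed; here the fourth-moment part of Assumption~B is essential. A final Cauchy-Schwarz separating $\bE[(s_{ik}-s'_{ik})^4]<\infty$ from $\bE[(\partial f)^4]=O(K^{-6})$ then delivers $\bE[(f-f^{(ik)})^2]=O(K^{-3})$ per swap, and summing over $NK=O(K^2)$ indices produces $\mathrm{Var}(f)=O(1/K)$, with the $n_i$-terms contributing similarly. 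For the mutual-information statement, exactly the same argument applies to the deterministic function $g(\mts):=-\bE_{\mbY\mid\mts}[f]$, which by \eqref{eqn:entrofreerel} equals $I(\mbX;\mbY)/K + 1/(2\beta)$; its variance over $\mtS$ is also $O(1/K)$.

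The main obstacle is the moment bookkeeping with unbounded $s_{ik}$: the residual $r_i$ is controlled only in $L^2$, so the order in which Cauchy-Schwarz is deployed is critical in ensuring that no moment higher than the fourth appears anywhere, and this is precisely what forces the bound to be algebraic rather than sub-exponential in the non-Gaussian case. A secondary technicality is uniformity of the constant $\alpha$ in the prior $p_{\mbX}$, which is automatic since $x_k^2 = 1$ makes all the Gibbs brackets appearing above bounded by $p_{\mbX}$-independent quantities depending only on $\mts$ and $\mby$.
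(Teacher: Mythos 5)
Your strategy—Efron–Stein on the individual entries $(n_i,s_{ik})$ plus Chebyshev—is genuinely different from the paper's proof, which uses a column-by-column Doob martingale in the spirit of Pastur--Shcherbina and Shcherbina--Tirozzi, together with an auxiliary Hamiltonian interpolation $\tilde H_l(t)=H_l+tR_l$ to estimate each martingale increment. Both methods reduce to bounding a Gibbs-averaged quartic of the residual, and your identification of $\bE\langle r_i^2\rangle^2$ as the pivotal quantity is correct. However, the justification you give for $\bE\langle r_i^4\rangle=O(1)$ does not hold up, and this is the load-bearing step of the whole argument.

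The issue is the ``fourth-power expansion has dominant term $O(K^2/N^2)$.'' Expanding $r_i=\sigma n_i+N^{-1/2}\sum_k s_{ik}z_k$ with $z_k=x_k^0-x_k$ to fourth order produces $N^{-2}\sum_{k_1,\dots,k_4}s_{ik_1}\cdots s_{ik_4}\,\langle z_{k_1}\cdots z_{k_4}\rangle$, and the $O(K^4)$ terms with four distinct indices do \emph{not} vanish in expectation: the Gibbs bracket $\langle\cdot\rangle$ is itself a function of the $s_{ik}$'s (the row $\mbs_i$ appears in the Hamiltonian), so the usual parity argument for symmetric $s_{ik}$ is unavailable. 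A crude absolute-value bound ($|z_k|\le 2$ and Cauchy--Schwarz) gives only $\bE\langle r_i^4\rangle=O(K^2)$, which destroys the estimate. This is precisely the difficulty the paper sidesteps by constructing $\tilde H_l(0)$ to be independent of the differenced column $\mbs_l$, so that the parity argument does apply, at the cost of then needing random-matrix norm bounds $\bE\Vert J\Vert, \bE\Vert A\Vert$.

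The gap is fixable—and, once fixed, your route is arguably cleaner than the paper's—but it requires the Bayes-optimality (Nishimori) identity rather than a moment count. Because $p(\mby\mid\mts)\,p(\mbx\mid\mby,\mts)=p_{\mbX}(\mbx)\,p(\mby\mid\mbx,\mts)$, one has
\begin{equation*}
\bE_{\mbY,\mtS}\bigl[\langle r_i^4\rangle\bigr]
=\bE_{\mtS}\sum_{\mbx}p_{\mbX}(\mbx)\int d\mby\;
\mathcal{N}\bigl(\mby;N^{-1/2}\mts\mbx,\sigma^2 I\bigr)\,
\bigl(y_i-N^{-1/2}(\mts\mbx)_i\bigr)^4
= 3\sigma^4,
\end{equation*}
exactly and uniformly in $K$, $i$ and $p_{\mbX}$. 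This identity remains valid along your interpolation $u(t)=(1-t)s'_{ik}+ts_{ik}$ because the same matrix is used for generation and decoding, so the posterior stays correct regardless of the entry distribution; hence the fourth moment of $\partial_{u(t)}f$ is controlled uniformly in $t$, which is exactly what your Cauchy--Schwarz step needs. One further small correction: differentiating $f$ with respect to $s_{ik}$ treating $y_i$ as fixed, as you wrote, is inconsistent with Efron--Stein over the independent coordinates $(\mbn,\mts)$, since $y_i$ depends on $s_{ik}$ through $N^{-1/2}(\mts\mbx^0)_i$; the correct derivative is $\partial_{s_{ik}}f=-\frac{1}{K\sigma^2\sqrt{N}}\langle(x_k^0-x_k)\,r_i\rangle$. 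This only changes constants in the final bound, but it must be tracked. With these two repairs—the Nishimori computation of $\bE\langle r_i^4\rangle$ and the corrected derivative—your argument closes.
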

To prove such estimates it is enough (by Chebycheff) to control second moments.
For the mutual information we simply have to adapt martingale arguments of
Pastur, Scherbina and Tirrozzi, \cite{PaS91,ShT93} whereas the case of free energy is more complicated because of the additional Gaussian noise fluctuations. We deal with these by combining martingale arguments and Lipschitz function techniques.

The concentration of capacity, namely
\begin{equation}\label{capcon}
 \bP[\vert
\max_{p_{\mbX}}I(\mbX;\mbY)-\max_{p_{\mbX}}\bE_{\mtS}[I(\mbX;\mbY)]\vert\geq
\epsilon K]\leq \frac{\alpha}{K\epsilon^2}
\end{equation}
would follow from a stronger (uniform concentration with respect to $p_{\mbX}$)
\begin{align}\label{stronger}
\bP[\max_{p_{\mbX}}|I(\mbX;\mbY) - \bE_{\mtS}[I(\mbX;\mbY)]|\geq \epsilon K]
\leq \frac{\alpha}{K\epsilon^2}
\end{align}
To see this it suffices to note that for two positive functions $f$ and $g$ we have $\vert \max f -\max g\vert\leq \max\vert f -g\vert$. But unfortunately it is not clear how to extend our proofs to obtain \eqref{stronger}. However as announced in the introduction we can deduce \eqref{stronger} from our theorems, by using the union bound, as long as the maximum is carried out over a finite set (sufficiently small with respect to $K$) of distributions.

We wish to argue here that Theorem \ref{thm:freeconc} suggests a method for
proving the concentration of the bit error rate (BER) for uncoded communication
\begin{equation}
 \frac{1}{2} (1 -  \frac{1}{K}\sum_{k=1}^{K} x_{0,k}\hat x_{k})
\end{equation}
where
 the MAP bit estimate for uncoded communication is defined through the marginal
of (\ref{eqn:postdist}), namely
$\hat{x}_k = \text{argmax}_{x_k=\{\pm 1\}}p(x_k\mid\mby,\mts)$.
We remark that 
$$
\hat x_k={\rm sign}\langle x_k\rangle
$$ 
where we find it convenient to adopt the statistical mechanics notation 
$\langle - \rangle$ 
for the average with respect to the posterior measure (\ref{eqn:postdist}). For example the average 
$$
\langle x_k \rangle=\sum_{\mbx} x_k p(\mbx\mid \mby,\mts)
$$ 
(a soft bit estimate or ``magnetization") can be obtained from the free energy by adding first an infinitesimal 
perturbation (``small external magnetic field") to the exponent in (\ref{eqn:postdist}), namely
$h\sum_{k=1}^{K} x_k^0 x_k$, and then differentiating the perturbed free energy\footnote{we do not write explicitly the $h$ dependence in the perturbed free energy},
\begin{equation*}
\frac{1}{K}\sum_{k=1}^{K} x_k^0\langle x_k\rangle=\lim_{h\to 0}\frac{d}{dh}\frac{1}{K}\ln Z(\mby,\mts)
\end{equation*}
However one really needs to relate 
${\rm sign}\langle x_k\rangle$ to the derivative of the free energy and this does not appear to be obvious.
One way out is to
 introduce product measures of $n$ copies (also called ``real replicas") of the posterior measure 
$$
p(\mbx^{(1)} \mid \mby,\mts)p(\mbx^{(2)} \mid \mby,\mts)
\ldots p(\mbx^{(n)} \mid \mby,\mts)
$$
and then relate
$$
\sum_{k=1}^{K} (x_k^0 \langle x_k\rangle)^n=
\sum_{k=1}^{K} \langle x_k^0 x_k^{1}...x_k^0 x_k^{n}\rangle_n 
$$
to a suitable derivative of the replicated free energy. Then from the set of all
moments one can in principle reconstruct ${\rm sign}\langle x_k\rangle$. Thus
one could try to deduce the concentration of the BER from the one for the free
energy. However the completion of this program requires a  uniform, with respect
the system size, control of the derivative of the free energy precisely at
$h=0$, which at the moment is still lacking\footnote{however this can be done for
Lebesgue almost every $h$}.

\subsection{Independence with respect to the distribution of the spreading
sequence}\label{sec:indspreading}

The replica method leads to the same Tanaka formula for general class of
symmetric distributions $p(s_{ik})=p(-s_{ik})$. We are able to prove this: in
particular binary and Gaussian spreading sequences lead to the same capacity.

\begin{theorem}\label{thm:indspreading}
Consider CDMA with binary inputs and assume A for the spreading sequence.
Let $C_g$ be the capacity for Gaussian spreading sequences (symmetric i.i.d with unit variance). Then
$$
lim_{K\to +\infty}(C_K-C_g)=0
$$ 
\end{theorem}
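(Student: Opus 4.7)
The plan is to exploit the formula $C_K=-\frac{1}{2\beta}-\bE_{\mbY,\mtS}[f(\mby,\mts)]$ (with $p_{\mbX}=2^{-K}$), since the minimum over $p_{\mbX}$ is achieved at the uniform distribution for both spreading laws. Writing $\mathbf g$ for an $N\times K$ matrix of i.i.d.\ standard Gaussians $g_{ik}$, it then suffices to show
$$\lim_{K\to\infty}\bigl|\bE_{\mbY,\mtS}[f(\mby,\mts)] - \bE_{\mbY,\mathbf g}[f(\mby,\mathbf g)]\bigr|=0.$$
I would use a Lindeberg-type swap: fix an ordering of the $NK$ entries of the spreading matrix and let $\mts^{(j)}$ be obtained by replacing the first $j$ entries by their Gaussian counterparts $g_{ik}$. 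Telescoping reduces the problem to controlling each single-entry swap, and since there are $NK=K^2/\beta$ swaps the per-swap error must be $o(\beta/K^2)$.

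For a swap of the entry at position $(i,k)$, introduce $\tilde f(s):=\bE[f(\mby,\mts)\mid s_{ik}=s]$, with the conditional expectation integrating out $\mbX^0$, $\mbN$ and all other entries of the spreading matrix. Note that $\mby=N^{-1/2}\mts\mbx^0+\sigma\mbn$ retains its $s$-dependence, so $\tilde f(s)$ depends on $s$ both through the output and through the exponent of $Z$. I would Taylor-expand $\tilde f$ around $s=0$ to third order with integral remainder: because $p(s_{ik})$ and the standard Gaussian are both symmetric with unit variance, the moments up to order three match, so integrating the cubic Taylor polynomial against either law yields identical contributions. The swap cost therefore reduces to the fourth-order remainders, bounded (after a truncation at $|s|\leq\Lambda$) by
$$\tfrac{1}{24}\bigl(\bE[s_{ik}^4]+\bE[g_{ik}^4]\bigr)\sup_{|\xi|\leq\Lambda}|\tilde f^{(4)}(\xi)|,$$
and the moment $\bE[s_{ik}^4]$ is finite by Assumption A.

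The key estimate is the uniform bound $\sup_{|\xi|\leq\Lambda}|\tilde f^{(4)}(\xi)|=O(N^{-2}/K)$, up to polynomial factors in $\Lambda$. Each differentiation of the exponent $-\frac{1}{2\sigma^2}\|\mby(s)-N^{-1/2}\mts\mbx\|^2$ with respect to $s_{ik}$ extracts a factor $N^{-1/2}$ multiplied by a linear combination of $x_k$, $x_k^0$ and $y_i$; four such differentiations applied to $\ln Z$ yield a polynomial in the posterior cumulants $\langle x_k\rangle,\,\langle x_k^2\rangle-\langle x_k\rangle^2,\ldots$ (all bounded by $1$ since $|x_k|=1$) times a polynomial in $(y_i,s)$ of bounded degree, scaled by $N^{-2}$. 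Dividing by the $K^{-1}$ prefactor of $f$ and integrating over the Gaussian noise absorbs the $y_i$ factors and gives the claimed bound. Summing over the $NK$ swaps yields a total error of $O(N^{-1})\to 0$.

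The main obstacle is that $\tilde f^{(4)}(\xi)$ genuinely grows with $|\xi|$ through the $e^{c\xi x_k}$-type factors in $Z$, so the uniform bound above really requires a truncation. I would set $\Lambda=c\sqrt{\log K}$; then Assumption A provides $p(|s_{ik}|>\Lambda)\leq e^{-A\Lambda^2}=K^{-Ac^2}$, which is an arbitrarily large negative power of $K$ for $c$ large, while the elementary two-sided bound $-\ln 2-\|\mbn\|^2/(2K)\leq f(\mby,\mts)\leq 0$ (from $Z\leq 1$ and $Z\geq 2^{-K}e^{-\|\mbn\|^2/2}$) keeps the tail contribution summable. The Gaussian tail is handled analogously. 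Assembling the three pieces (matched moments, fourth-derivative bound on the truncated region, tail estimate) yields the $o(\beta/K^2)$ per-swap bound, and the telescoping closes the argument and proves Theorem \ref{thm:indspreading}.
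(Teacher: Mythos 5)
Your Lindeberg swap is a valid and essentially parallel route to the paper's. The paper interpolates all $NK$ entries \emph{simultaneously} via $v_{ik}(t)=\sqrt{t}\,r_{ik}+\sqrt{1-t}\,s_{ik}$, differentiates in $t$, and applies an integration-by-parts identity (their eq.~\eqref{eqn:symintparts}) valid for any symmetric unit-variance law; that identity is exactly the third-order moment matching you invoke, packaged so that the leading term cancels against the Gaussian IBP and the residue involves the second and third $u$-derivatives of $g_{ik}$ (i.e.\ third and fourth derivatives of the free energy with respect to a single entry). Your sequential swap with Taylor-to-third-order is the more elementary rendition of the same idea, and your per-swap bound $O(N^{-2}/K)$ summed over $NK$ entries gives $O(N^{-1})$, which is in fact a touch sharper than the paper's $O(N^{-1/2})$ (they do not exploit the mean-zero prefactor $(r_{ik}^2-1)$ in their second-derivative remainder and simply bound it crudely). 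Your truncation at $\Lambda=c\sqrt{\log K}$ plays the same role as the paper's restriction to a good set $G$ in Appendix~F. Two points you should nail down in a full write-up: (i) as the paper does, first reduce to a fixed input $\mbx^0=\underline{1}$ by the gauge symmetry $x_k\to x_k^0 x_k$, $s_{ik}\to s_{ik}x_k^0$, rather than integrating out $\mbX^0$ inside $\tilde f$, otherwise the derivative bookkeeping is considerably messier; and (ii) the tail term is not just $|\tilde f|$ but $|\tilde f - P_3|$, so you must also control the cubic polynomial $P_3(s)$ for $|s|>\Lambda$ (its coefficients are $O(N^{-j/2}/K)$, so integrating $s^j\indicator{|s|>\Lambda}$ against the tail supplied by Assumption~A makes this negligible, but it deserves a line). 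Finally, the $\sup_{|\xi|\le\Lambda}|\tilde f^{(4)}(\xi)|$ estimate really is the crux: each differentiation of the Hamiltonian $-\tfrac{1}{2\sigma^2}\Vert\mbn+N^{-1/2}\mts\mbz\Vert^2$ in $s_{ik}$ brings one factor $N^{-1/2}(n_i+N^{-1/2}(\mts\mbz)_i)$ or $N^{-1/2}z_k$, the posterior cumulants of $x_k\in\{\pm1\}$ are uniformly bounded, and the remaining $n_i$ powers must be integrated out with Cauchy--Schwarz; this is precisely the computation the paper carries through in Appendix~F, so your estimate is sound but not free.
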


This theorem turns
out to be very useful in order to obtain the bound on capacity because it allows
us to make use of convenient integration by parts identities that have no clear
counterpart in the non-Gaussian case. The proof of the theorem is given in
section~\ref{sec:indspreadingproof}.

\subsection{Existence of the limit $K\to +\infty$}
The interpolation method can be used to show the existence of the limit
$K\to +\infty$ for $C_K$. 

\begin{theorem}\label{thm:limitexist}
Consider CDMA with binary inputs and assume A for the spreading sequences
with uniform input distribution. Then 
\begin{align}
\lim_{K\to \infty} C_K \qquad \text{exists}
\end{align}
\end{theorem}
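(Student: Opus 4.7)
By Theorem~\ref{thm:indspreading} the value of $\lim_{K\to\infty} C_K$, should it exist, does not depend on which spreading distribution satisfying Assumption~A is used, so it is enough to treat the case of i.i.d.\ standard Gaussian spreading coefficients. For the uniform prior $p_{\mbX}(\mbx)=2^{-K}$ (which achieves the minimum in \eqref{eqn:capfreerel}), the identity \eqref{eqn:capfreerel} reduces the problem to showing that $\varphi_K := \bE_{\mbY,\mtS}[f(\mby,\mts)]$ admits a limit as $K\to\infty$ with $\beta = K/N$ fixed.

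\textbf{Sub-additivity via Guerra--Toninelli interpolation.} I would split the users into two groups of sizes $K_1,K_2$ with $K_1+K_2=K$, and correspondingly split the chips into two blocks of sizes $N_i$ with $K_i/N_i=\beta$ (an $O(1)$ rounding error is absorbed at the end). Following the Guerra--Toninelli approach, I introduce a one-parameter family of Gaussian observation channels indexed by $t\in[0,1]$, such that at $t=1$ it reproduces the full $(K,N)$ CDMA system while at $t=0$ it decouples into two independent $(K_i,N_i)$ sub-systems. A natural choice is, for a chip $i$ belonging to block $A$,
\begin{equation*}
y_i(t)=\sqrt{\tfrac{t}{N}}\sum_{k=1}^K s_{ik}\,x_k^0 \;+\;\sqrt{\tfrac{1-t}{N_1}}\sum_{k\in\text{group }1}s'_{ik}\,x_k^0 \;+\;\sigma n_i,
\end{equation*}
and symmetrically for chips in block $B$, where $s,s'$ denote two independent families of standard Gaussian coefficients. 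Let $\free(t)$ denote the expected per-user free energy of the interpolated model; then $\free(1)=\varphi_K$ and $\free(0)=(K_1\varphi_{K_1}+K_2\varphi_{K_2})/K$. Differentiating $\free(t)$ in $t$ and using Gaussian integration by parts on the three Gaussian families $s,s',\mbn$, together with the Nishimori identity (which applies here because the channel symmetry makes the setup Bayes-optimal), $\tfrac{d\free}{dt}$ can be rewritten as a sign-definite combination of squared cross-block overlaps $\langle x_k x_\ell\rangle$ (with $k,\ell$ in different groups). This yields an approximate sub-additivity
\begin{equation*}
K\,\varphi_K \;\leq\; K_1\,\varphi_{K_1}+K_2\,\varphi_{K_2}+o(K).
\end{equation*}

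\textbf{Conclusion and main obstacle.} The sequence $|\varphi_K|$ is uniformly bounded: the trivial bound $Z\leq 2^K$ gives $f\leq\ln 2$, while a matching lower bound follows by retaining a single term of the sum defining $Z$ and using Gaussian concentration of $\Vert\mbY\Vert^2$. Combined with the approximate sub-additivity above, Fekete's lemma (in its version tolerant to $o(K)$ corrections) delivers the existence of $\lim_{K\to\infty}\varphi_K$, and hence of $\lim_{K\to\infty} C_K$. The principal technical difficulty is the sign analysis of $\tfrac{d\free}{dt}$: the rescalings in the interpolating channel must be arranged so that after integration by parts in \emph{all} three Gaussian families and simplification via the Nishimori identity, only a non-positive quadratic form in the cross-block overlaps survives. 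Tracking the different normalisations $1/\sqrt N$ versus $1/\sqrt{N_i}$ in the $t$ and $1-t$ pieces is the delicate point; in the SK model the analogous computation produces a single squared overlap, but in CDMA the chip/user asymmetry gives rise to several overlap contributions that must be shown to combine into a sign-definite expression.
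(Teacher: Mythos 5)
Your plan follows the paper's architecture: reduce to Gaussian spreading via Theorem~\ref{thm:indspreading}, split users and chips into blocks with the same aspect ratio $\beta$, interpolate between the full system at $t=1$ and two decoupled subsystems at $t=0$, differentiate, integrate by parts, invoke Nishimori, and conclude approximate sub-additivity of $K\,\bE[f]$ and hence existence of the limit by a Fekete-type argument. But there is a real gap in the central step: you assert that after integration by parts and Nishimori the $t$-derivative of the interpolating free energy becomes a sign-definite combination of squared cross-block overlaps, and this is not what happens here. What one actually obtains is a sum of terms of the schematic form $\bE\big\langle \Vert\mcuZ_i\Vert^2\big(\tfrac{1}{N}\Vert\mbz\Vert^2 - \tfrac{1}{N_i}\Vert\mbz_i\Vert^2\big)\big\rangle - \bE\big\langle\big(\mcuZ_i^{(1)}\cdot\mcuZ_i^{(2)}\big)\big(\tfrac{1}{N}\mbz^{(1)}\cdot\mbz^{(2)} - \tfrac{1}{N_i}\mbz_i^{(1)}\cdot\mbz_i^{(2)}\big)\big\rangle$, in which the random overlaps multiply other random factors and do not organise into a square. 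This is precisely the Hopfield-like difficulty the authors acknowledge in their conclusion: for this model, unlike SK, the interpolation derivative is not manifestly sign-definite after integration by parts alone.

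The paper closes the gap by adding the perturbation $h_u(\mbx)$ of \eqref{eqn:perturb} to the interpolating Hamiltonian and invoking the resulting concentration of the magnetization (Theorem~\ref{thm:concmag}) to replace $m_1$ and the block magnetizations $m_{1i}$ by their averages $a := \bE\langle m_1\rangle_{t,u}$, $b_i := \bE\langle m_{1i}\rangle_{t,u}$. The derivative then collapses, up to $o_K(1)$, to the explicit function $\eta_{a,b_1,b_2}(t)$, whose non-positivity is established not by exhibiting a square but by a small calculus argument: using the deterministic identity $a = \tfrac{K_1}{K}b_1 + \tfrac{K_2}{K}b_2$ one checks $\eta(1)=0$ and $\eta'(t)\geq 0$, hence $\eta(t)\leq 0$ on $[0,1]$. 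A further technicality you omit is that the concentration only holds for a.e.\ $u>\epsilon$, so sub-additivity is first obtained for $\int_\epsilon^a\free_K(u)\,du$ and must be transferred back to $\free_K$ at $u=0$ via the uniform $O(\sqrt u)$ continuity bound \eqref{eqn:freecontu} before sending $\epsilon\to 0$. Without the perturbation and the magnetization concentration, your sign claim is unsupported; with them, the rest of your outline is the paper's proof.
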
 
The proof of this theorem is given in section \ref{sec:limitexistproof}
for Gaussian spreading sequences. The general case then follows because of Theorem \ref{thm:indspreading}.

\subsection{Tight upper bound on the capacity}\label{sec:upperbound}

The main result of this paper is that Tanaka's
formula (\ref{eqn:hrs}) is an upper bound to the capacity for all values of $\beta$.
\vskip 0.25cm
\begin{theorem}\label{thm:upperbound}
Consider CDMA with binary inputs and assume A for the spreading sequence. We have
\begin{align}\label{eqn:upperbound}
\lim_{K\to\infty}C_K\leq \min_{m\in [0,1]} c_{RS}(m)
\end{align}
where
$c_{RS}(m)$ is given by (\ref{eqn:hrs}). 
\end{theorem}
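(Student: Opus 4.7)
By Theorem~\ref{thm:indspreading}, it suffices to establish the bound for Gaussian spreading sequences, which provides access to Gaussian integration-by-parts identities. By the discussion following \eqref{capacity} the maximum in $C_K$ is attained at the uniform prior $p_{\mbX}=2^{-K}$, and by \eqref{eqn:capfreerel} the theorem reduces to proving
\begin{equation*}
\liminf_{K\to\infty}\bE_{\mbY,\mtS}[f(\mby,\mts)]\;\geq\;-\frac{1}{2\beta}-c_{RS}(m)\qquad\text{for every fixed }m\in[0,1],
\end{equation*}
after which \eqref{eqn:upperbound} follows by taking $\min_m$ on the right-hand side.

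The approach is a Guerra-type interpolation. Fix $m$ and set $\lambda=(\sigma^2+\beta(1-m))^{-1}$. For $t\in[0,1]$, introduce the joint observation model in which the CDMA channel is attenuated while $K$ decoupled scalar AWGN ``side observations'' are turned on:
\begin{align*}
y_i(t)&=\sqrt{t/N}\sum_{k=1}^{K}s_{ik}x_k^0+\sigma n_i,\qquad i=1,\ldots,N,\\
\tilde y_k(t)&=\sqrt{(1-t)\lambda}\,x_k^0+z_k,\qquad k=1,\ldots,K,
\end{align*}
with $n_i,z_k$ i.i.d.\ standard Gaussians independent of $\mtS$. Let $Z_t$ be the Bayes-optimal partition function with uniform prior and $\phi(t)=\frac{1}{K}\bE\ln Z_t$. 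Then $\phi(1)=\bE[f]$, while at $t=0$ the partition function factorizes into $K$ scalar AWGN problems, and a direct computation produces the $\int Dz\,\ln(2\cosh(\sqrt\lambda z+\lambda))$ term of $c_{RS}(m)$ together with explicit constants. The remaining pieces of $c_{RS}(m)$, namely $\tfrac{\lambda(1+m)}{2}-\tfrac{1}{2\beta}\ln(\lambda\sigma^2)$, must be produced by $\int_0^1 \phi'(t)\,dt$.

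The derivative $\phi'(t)$ is computed by Gaussian integration by parts in the three independent families $\mts$, $\mbn$, $\mbz$, and then simplified using the Nishimori identities, which hold because the posterior prior coincides with the true input distribution. After routine but tedious manipulation one obtains, schematically,
\begin{equation*}
\phi'(t)\;=\;A(t,m)\;+\;\frac{\beta}{2}\,\bE\!\left\langle G(t,\magn)\,(\magn-m)^2\right\rangle,
\end{equation*}
where $\magn=\frac{1}{K}\sum_k x_k^0 x_k$ is the normalized overlap of a posterior replica with the planted input, $A(t,m)$ is a deterministic function of $t$ and $m$ whose integral from $0$ to $1$ supplies exactly the missing pieces of $c_{RS}(m)$, and $G(t,\cdot)\geq 0$ is a positive kernel originating from the effective-noise variance $\sigma^2+\beta(1-t)(1-m)$ of the CDMA channel part.

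The main obstacle, and the technical core of sections~\ref{sec:upperboundproof}--\ref{sec:concmag}, is controlling the quadratic remainder. Unlike the Sherrington--Kirkpatrick interpolation, where the analogous term is manifestly a sum of squares and hence pointwise non-negative, here the kernel $G(t,\cdot)$ arises from Taylor-expanding a rational function of $\magn$ around the value $m$, so positivity is only recovered after averaging and after showing that $\magn$ self-averages. The key auxiliary result is therefore the concentration $\bE\langle(\magn-\bE\langle\magn\rangle)^2\rangle\to 0$ as $K\to\infty$. This is proved by relating the fluctuations of $\magn$ to derivatives, in an auxiliary external-field parameter $h$, of a perturbed free energy to which Theorem~\ref{thm:freeconc} applies, then averaging over $h$ in a small interval to bypass the lack of uniform control at a single $h$, and finally using continuity to return to $h=0$. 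Combined with the Nishimori identity $\bE\langle q_{12}\rangle=\bE\langle\magn\rangle$, this yields the non-negativity of the remainder in the $K\to\infty$ limit. Integrating $\phi'(t)$ over $t\in[0,1]$, taking $K\to\infty$, and then minimizing over $m\in[0,1]$ concludes the proof.
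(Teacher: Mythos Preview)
Your outline captures the correct high-level architecture (reduction to Gaussian spreading via Theorem~\ref{thm:indspreading}, an interpolation between the CDMA system and a decoupled scalar channel, Nishimori identities, and a concentration result for the overlap), but there is a genuine gap in the interpolation step.

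The linear path you propose, with CDMA strength scaled by $\sqrt{t}$ and side-channel SNR $(1-t)\lambda$, is \emph{not} the path that makes the argument close. The paper imposes the constraint \eqref{eqn:SNRrelation},
\[
\frac{B(t)}{1+\beta B(t)(1-m)}+\lambda(t)=\frac{B}{1+\beta B(1-m)},
\]
which your linear choice violates for intermediate $t$. The reason this matters is that the derivative of the free energy splits as $T_1+T_2$, where $T_1$ (from the side channel) picks up the factor $\lambda'(t)$ and hence, through \eqref{eqn:SNRrelation}, the denominator $(1+\beta(1-m)B(t))^{2}$. That specific denominator is what allows the ``remarkable algebra'' of Section~\ref{sec:endofproof}: combined with the expression \eqref{eqn:T2} for $T_2$, the integrand reorganizes exactly into $R(t)+\text{(explicit)}$ with
\[
R(t)=\frac{\beta B'(t)B(t)\bigl(\bE\langle m_1\rangle_{t,u}-m\bigr)^{2}}{2\bigl(1+\beta B(t)(1-m)\bigr)^{2}\bigl(1+\beta B(t)\bE\langle 1-m_1\rangle_{t,u}\bigr)}\geq 0.
\]
With your linear path $T_1=\tfrac{\lambda}{2}\bE\langle 1-m_1\rangle$ carries no such structure, and the corresponding remainder is not sign-definite (one can check it changes sign near $t=0$).

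You also misidentify the mechanism that produces positivity. It is not a Taylor expansion of a rational function of $m_1$ around $m$. Rather, after Gaussian integration by parts in $\mts$ and a Nishimori identity, $T_2$ reappears inside its own expression; once Corollary~\ref{cor:concmag} is used to factor $\bE\langle(\mbn\cdot\mts\mbz)(1-m_1)\rangle$ into $\bE\langle\mbn\cdot\mts\mbz\rangle\cdot\bE\langle 1-m_1\rangle$, one obtains a \emph{closed affine equation} for $T_2$, whose solution is \eqref{eqn:T2}. The square in $R(t)$ is then a square of $\bE\langle m_1\rangle-m$ (a deterministic number), not of the random $m_1-m$. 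Concentration of $m_1$ is the input that closes the equation, not a device to ``recover positivity after averaging.'' Your description of the perturbation and the a.e.-in-$u$ argument is essentially right, though the precise form $h_u(\mbx)=\sqrt{u}\sum_k h_k x_k+u\sum_k x_k^0 x_k-\sqrt{u}\sum_k|h_k|$ is chosen so that $f_{t,u}$ is convex in $u$ (Lemma~\ref{lem:ftuconvex}), which is what drives the concentration proof.
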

\vskip 0.25cm
If we combine this result with an inequality in Montanari and Tse
\cite{MoT06itw}, and exchanging
as they do the limits of $K\to +\infty$ and $sparse\to dense$, one can deduce
that the equality holds for some regime of noise
smaller than a critical value. This value corresponds to the threshold for
belief propagation decoding. Note that this equality is valid even if $\beta$ is
such that there is a phase transition (the fixed point equation
\eqref{eqn:fixedpt} has many solutions), whereas in \cite{MoT06itw} the equality holds for values of $\beta$ for which the phase transition does not occur.

Since the proof is rather complicated we find it useful to give the main ideas in an informal way.
The integral term in (\ref{eqn:hrs}) suggests that we can replace the original system with  a
simpler system where the user bits are sent through $K$ independent Gaussian
channels given by
\begin{align}
\tilde{y}_k = x_{k} + \frac{1}{\sqrt \lambda}w_k
\end{align}
where $w_k \sim \mathcal{N}(0,1)$ and 
$\lambda$ 
is an effective SNR.
Of course this argument is a bit naive because this effective system does not account for the extra terms in (\ref{eqn:hrs}), but it has the merit of identifying the correct interpolation.

We introduce an interpolating parameter $t\in [0,1]$ such that the independent Gaussian 
channels correspond to $t=0$ and the original CDMA system corresponds to $t=1$ (see Figure \ref{interpolationcomsystem})
\begin{figure}\label{interpolationcomsystem}
\begin{center}
\setlength{\unitlength}{0.42bp}
\begin{picture}(450,550)
\put(0,0){\includegraphics[scale=0.42]{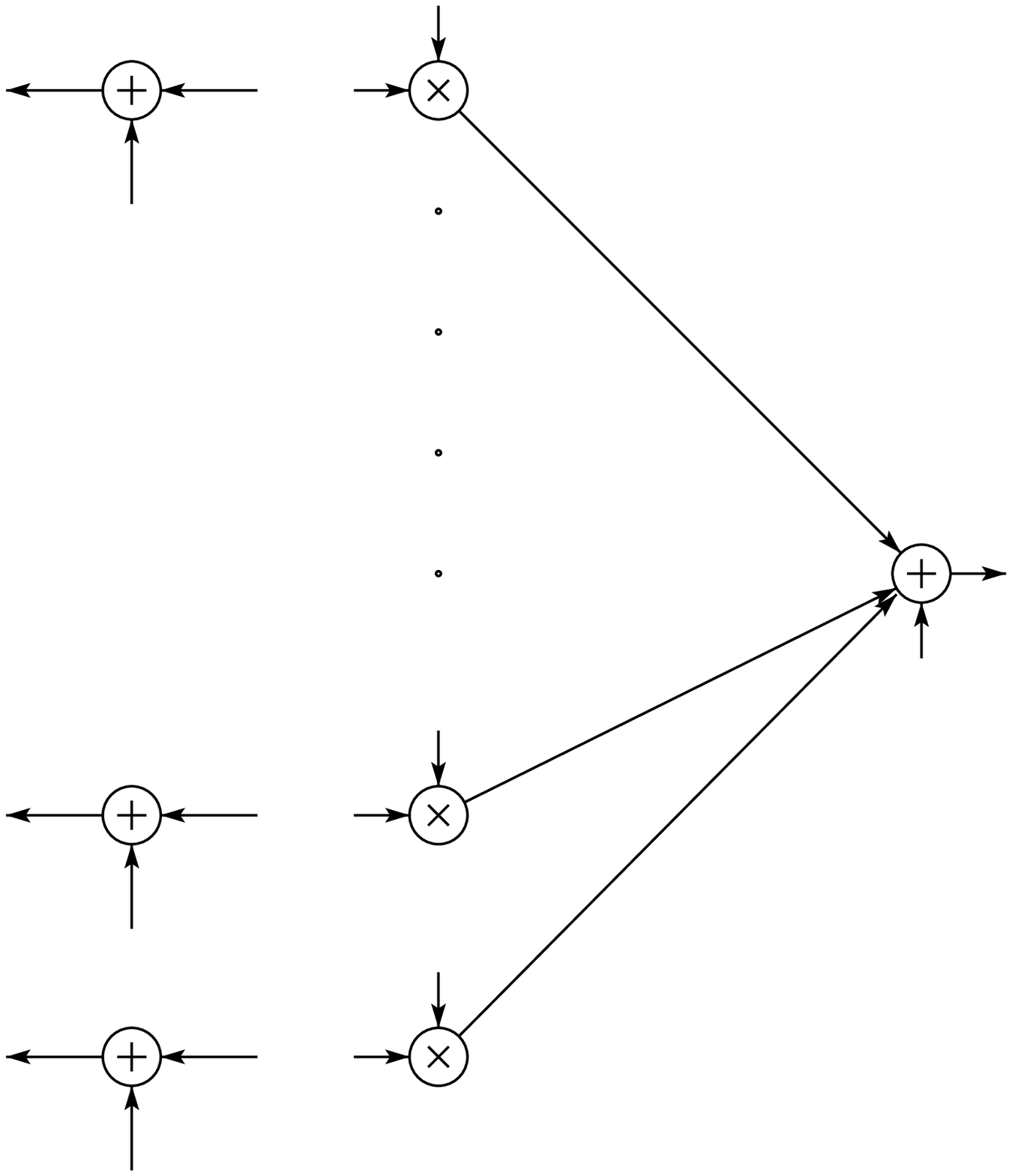}}
\put(200,555){\makebox(0,0)[t]{\small{$\underline{s}_{K}$}}}
\put(150,505){\makebox(0,0)[t]{\small{$x_{K}$}}}
\put(200,255){\makebox(0,0)[t]{\small{$\underline{s}_{2}$}}}
\put(150,205){\makebox(0,0)[t]{\small{$x_{2}$}}}
\put(200,155){\makebox(0,0)[t]{\small{$\underline{s}_{1}$}}}
\put(150,105){\makebox(0,0)[t]{\small{$x_{1}$}}}
\put(420,260){\makebox(0,0)[t]{\small{$\mathcal{N}(0,\frac{1}{B(t)})$}}}
\put(460,310){\makebox(0,0)[t]{\small{$\mby$}}}
\put(72,45){\makebox(0,0)[t]{\small{$\mathcal{N}(0,\frac{1}{\lambda(t)})$}}}
\put(72,145){\makebox(0,0)[t]{\small{$\mathcal{N}(0,\frac{1}{\lambda(t)})$}}}
\put(72,445){\makebox(0,0)[t]{\small{$ \mathcal{N}(0,\frac{1}{\lambda(t)})$}}}
\put(10,110){\makebox(0,0)[t]{\small{$\tilde y_{1}$}}}
\put(10,210){\makebox(0,0)[t]{\small{$\tilde y_{2}$}}}
\put(10,510){\makebox(0,0)[t]{\small{$\tilde y_{K}$}}}
\end{picture}
\caption{The information bits $x_k$ are transmitted through the normal
CDMA channel with variance $\frac1{B(t)}$ and through individual Gaussian channels with
noise $\frac1{\lambda(t)}$}
\end{center}
\end{figure}
It is convenient to denote the SNR of the original Gaussian channel as $B$ (that is $B=\sigma^{-2}$). Then \eqref{lambda} becomes 
$$
\lambda=\frac{B}{1+\beta B(1-m)}
$$
We introduce two interpolating SNR functions $\lambda(t)$ and $B(t)$ such that
\begin{equation}\label{eqn:lambdaBrelation}
\lambda(0)=\lambda,\,\, B(0)=0\qquad{\rm and}\qquad \lambda(1)=0,\,\, B(1)=B
\end{equation}
and 
\begin{align}\label{eqn:SNRrelation}
\frac{B(t)}{1+\beta B(t)(1-m)} + \lambda(t) = \frac{B}{1+\beta B(1-m)}
\end{align}
The meaning of \eqref{eqn:SNRrelation} is the following.
In the interpolating $t$-system the effective SNR seen by each user has an effective $t$-CDMA part and an independent channel part $\lambda(t)$ chosen such that the total SNR is fixed to the effective SNR of the CDMA system.
There is a whole class of interpolating functions satisfying the above conditions but it turns out that we do not need to specify them more precisely except for the fact that $B(t)$ is increasing, $\lambda(t)$ is decreasing and with continuous first derivatives. Subsequent calculations are independent of the particular choices of functions.

The parameter $m$ is to be considered as fixed to any arbitrary value in $[0,1]$. All the subsequent calculations are independent of its value, which is to be optimized to tighten the final bound.

We now have two sets of channel outputs $\mby$ (from the CDMA with noise variance $B(t)^{-1}$) and $\tilde\mby$ (from the independent channels with noise variance $\lambda(t)^{-1}$) and the interpolating communication system has a posterior distribution
\begin{equation}\label{eqn:postdistt}
p_t(\mbx\vert\mby, \tilde\mby, \mts)=\frac{1}{2^KZ(\mby,\tilde\mby,\mts)}
\exp\biggl(-\frac{B(t)}{2}\Vert \mby-N^{-{\frac{1}{2}}}\mts\mbx\Vert^2 - \frac{\lambda(t)}{2}\Vert \tilde\mby-\mbx\Vert^2\biggr) 
\end{equation}
Note that here we take without loss of generality $p_{\mbX}(\mbX)=\frac{1}{2^K}$.
By analyzing the mutual information $\bE_\mtS[I_t(\mbX ; \mbY,\tilde\mbY)]$ of the interpolating system we can relate 
$\bE_\mtS[I(\mbX;\mbY)]$ (the $t=1$ value) to the easily computed entropy $\bE_\mtS[I_0(\mbX ;\tilde\mbY)]$ of the independent channel limit. 
The average over $(\mbY,\tilde\mbY)$ is now performed with respect to
\begin{equation}\label{eqn:distyyt}
p_t(\mby,\tilde\mby\mid \mts)=\frac{1}{2^K}\sum_{\mbx^0}
\frac{1}{(\sqrt{2\pi B(t)^{-1}})^{N}(\sqrt{2\pi\lambda(t)^{-1}})^{K}}e^{-\frac{B(t)}{2}\Vert \mby-N^{-\frac{1}{2}}\mts\mbx^0\Vert^2-
 \frac{\lambda(t)}{2}\Vert \tilde\mby-\mbx^0\Vert^2} 
\end{equation}
These equations completely define the interpolating communication system.

In order to carry out this program successfully it turns out that we need
a concentration result on 
 empirical average of the ``magnetization'',
$$
m_1=\frac{1}{K}\sum_{k=1}^{K}x_k^0x_k
$$
which, as explained in section \ref{sec:concentration}, is closely related to the BER.
Informally speaking we need to prove that the fluctuations of 
$\bE\langle\vert m_1 - \bE\langle m_1\rangle\vert\rangle$
are small. This involves the control of two types of fluctuations,
$
\bE\langle\vert m_1 - \langle m_1\rangle\vert\rangle$ and 
$\bE\vert\langle m_1\rangle - \bE\langle m_1\rangle\vert$ (by the triangle inequality).
In some spin glass problems both type of fluctuations need not be small at the same time. Indeed it is a quite general fact that the first one is small for thermodynamic (or convexity) reasons while the smallness of the second is not assured if replica symmetry breaking occurs (see \cite{Tal03}). Here we use a crucial ingredient that is specific to the communication set up, namely the channel symmetry, which induces a gauge symmetry and prevents replica breaking. This, it turns out, allows to prove that both fluctuations are small.
The control of these fluctuations is the object of Theorem~\ref{thm:concmag} in
section~\ref{sec:conclemmas}. There are technical complications that we have to
deal with because such control of fluctuations is only possible away from phase
transitions. For this reason we have to add small appropriate perturbations to
the measure \eqref{eqn:postdistt} and give almost sure statements with respect to the strength of the perturbation. By being sufficiently careful with the order of limits the extra perturbation terms can be removed at the end of the calculations.

\section{Proof of bound on capacity: Theorem \ref{thm:upperbound}}\label{sec:upperboundproof}

\subsection{Preliminaries}
\label{sec:preliminaries}
The interpolating communication system defined by the measure \eqref{eqn:postdistt} allows us
to compare the original CDMA system with the independent channel system. The
distribution of $\mby,\tilde{\mby}$ is given by \eqref{eqn:distyyt}. This
distribution consists of a summation of $2^K$ terms, each corresponding to
different possible input sequence. Each of these terms contribute equally to the capacity (free
energy). The reader can explicitly check this by making the
change of variables $x_k\to x_k^0x_k$ and $s_{ik}\to s_{ik}x_k^0$, $w_k\to
w_kx_k^0$, $h_k\to h_kx_k^0$ which leave all standard Gaussians invariant. Hence we can assume that a
particular input sequence say $\mbx^0$  is transmitted. The distribution of the
received vectors with this assumption is 
\begin{equation}\label{eqn:distyytallone}
p_t(\mby,\tilde\mby\mid \mts)=
\frac{1}{(\sqrt{2\pi B(t)^{-1}})^{N}(\sqrt{2\pi\lambda(t)^{-1}})^{K}}e^{-\frac{B(t)}{2}\Vert
\mby-N^{-\frac{1}{2}}\mts\mbx^0\Vert^2-
 \frac{\lambda(t)}{2}\Vert \tilde\mby-\mbx^0\Vert^2}
\end{equation}
For
technical reasons that will become clear only in the next section we consider a
slightly more general interpolation system where the perturbation term 
\begin{align}\label{eqn:perturb}
h_u(\mbx)=\sqrt u \sum_{k=1}^K h_k x_k +u\sum_{k=1}^{K}x^0_kx_k -\sqrt u\sum_{k=1}^K|h_k|
\end{align}
is added in the exponent of the measure \eqref{eqn:postdistt}.
Here $h_k$ are i.i.d. $h_k \sim \mathcal{N}(0,1)$. For the moment $u\geq 0$ is arbitrary but in the sequel we will take $u\to 0$. 
This time it is convenient to perform a new change of variables 
$\mby=B(t)^{-1/2}\mbn+N^{-1/2}\mts\mbx^0$ and
$\tilde\mby=\lambda(t)^{-1/2}\mbw+\mbx^0$, where $n_i,w_i \sim \mathcal{N}(0,1)$ and we set $\langle -\rangle_{t,u}$ for the average  corresponding to the posterior measure
\begin{align}\label{eqn:perturbposterior}
p_{t,u}(\mbx\vert\mbn, \mbw,\mbh, \mts)=\frac{1}{Z_{t,u}}
\exp&\biggl(-\frac{1}{2}\Vert \mbn+N^{-{\frac{1}{2}}}B(t)^{\frac{1}{2}}\mts(\mbx^0-\mbx)\Vert^2
\\ \nonumber & 
- \frac{1}{2}\Vert \mbw+\lambda(t)^{\frac{1}{2}}(\mbx^0-\mbx)\Vert^2 + h_u(\mbx)\biggr) 
\end{align}
with the obvious normalization factor $Z_{t,u}$. We define a free energy
\begin{equation}
f_{t,u}(\mbn, \mbw,\mbh, \mts)=\frac{1}{K}\ln Z_{t,u}
\end{equation}
For $t=1$ we recover the original free energy,
$$
\bE[f(\mby,\mts)]=\frac{1}{2}+\lim_{u\to 0}\bE[f_{1,u}(\mbn,\mbw,\mbh,\mts)]
$$
while for $t=0$ the statistical sums decouple and we have the explicit result\footnote{it is also straightforward to compute the full $u$ dependence and see that it is $O(\sqrt u)$, uniformly in $K$}
\begin{align}\label{eqn:free0}
\frac{1}{2}+\lim_{u\to 0}\bE[f_{0,u}(\mbn,\mbw,\mbh,\mts)]= -\frac{1}{2\beta}- 
\lambda 
  + \int Dz \ln (2\cosh(\sqrt{\lambda} z+ \lambda)) 
\end{align}
where $\bE$ denotes the appropriate collective expectation over random objects.
In view of formula \eqref{eqn:entrofreerel} in order to obtain the average capacity it is sufficient to compute
\begin{equation}\label{limit}
\lim_{K\to+\infty}\lim_{u\to 0}\bE[f_{1,u}(\mbn, \mbw,\mbh, \mts)] +\frac{1}{2}
\end{equation}
There is no loss in generality in setting 
\begin{equation}\label{inputone}
x_k^0=1
\end{equation}
 for the input symbols.  From
now on in sections \ref{sec:upperboundproof},\ref{sec:concmag}, and
\ref{sec:limitexistproof} we stick to \eqref{inputone}.
We also use the shorthand notations
$$
z_k=x_k^0-x_k=1-x_k,\qquad f_{t,u}(\mbn, \mbw,\mbh, \mts)=f_{t,u}
$$ 
Using $\vert h_u(\mbx)\vert\leq 2\sqrt u\sum_k\vert h_k\vert  + K u$ it easily follows that ($u$ small)
\begin{align}\label{eqn:freecontu}
\vert \bE[f_{t,u}] - \bE[f_{t,0}] \vert \leq 2\sqrt u\bE[\vert h_k\vert] + u
\end{align}
therefore we can permute the two limits in \eqref{limit} and compute
$$
\lim_{u\to 0}\lim_{K\to+\infty}\bE[f_{1,u}] +\frac{1}{2}
$$
From now on we keep the limits in that order.
By the fundamental theorem of calculus,
\begin{equation}\label{fund}
\bE[f_{1,u}] = \bE[f_{0,u}] + \int_0^1dt \frac{d}{dt}\bE[f_{t,u}]
\end{equation}
Our task is now reduced to estimating
$$
\lim_{u\to 0}\lim_{K\to+\infty}\int_0^1dt \frac{d}{dt}\bE[f_{t,u}]
$$
This is done in sections \ref{sec:derivative}, \ref{sec:endofproof}. This requires a few preliminary results that are the object of sections \ref{sec:nish}, \ref{sec:conclemmas}.

\subsection{Nishimori identities}\label{sec:nish}

As already alluded to in the introduction the ``magnetization" plays an important role
 \begin{equation}
 m_1=\frac{1}{K}\sum_{k=1}^{K} x_k
 \end{equation}
A closely related quantity is the ``overlap parameter"
\begin{equation}
q_{12}=\frac{1}{K}\sum_{k=1}^K x_k^{(1)}x_k^{(2)}
\end{equation}
where $x_k^{(1)}$ and
$x_k^{(2)}$ are independent copies (``replicas") of the $x_k$. This means that
the joint distribution of $(x_k^{(1)}, x_k^{(2)})$ is the product measure 
$$
p_t(\mbx^{(1)}\vert \mbn,\mbw,\mbh,\mts)p_t(\mbx^{(2)}\vert \mbn,\mbw,\mbh, \mts)
$$
 The average
 with respect to this joint distribution is denoted (by
a slight abuse of notation) with the same bracket $\langle - \rangle_{t,u}$. The
important thing to notice is that the replicas are ``coupled" through the common
randomness $(\mbn,\mbw,\mbh,\mts)$.

\begin{lemma}\label{lem:nishimorimq}
The distributions of $m_1$ and $q_{12}$  defined as
$$
\bP_{m_1}(x) = \bE \langle \delta(x-m_1) \rangle_{t,u},\quad \bP_{q_{12}}(x) =
\bE\langle\delta(x-q_{12})\rangle_{t,u}
$$
are equal, namely 
$$
\bP_{m_1}(x) = \bP_{q_{12}}(x)
$$
\end{lemma}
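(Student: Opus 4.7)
I would prove this as the standard Nishimori identity, exploiting the Bayes-optimality of $p_{t,u}$: the planted signal $\mbx^0$ is, after marginalizing over the observations, distributed exactly like an independent sample from $p_{t,u}$, so it behaves as an extra replica. The plan is to (i) undo the gauge choice $x_k^0=1$ so that $\mbx^0$ is uniformly distributed, (ii) verify that the perturbed measure $p_{t,u}$ remains the true Bayesian posterior once $h_u(\mbx)$ is included, and (iii) apply Bayes' rule to match the joint laws of $(\mbx^0,\mbx)$ and of $(\mbx^{(1)},\mbx^{(2)})$ under $\bE\langle\cdot\rangle_{t,u}$.

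\textbf{Setup step.} First I would invoke the bijection $x_k \to x_k^0 x_k$, $s_{ik} \to s_{ik} x_k^0$, $w_k \to w_k x_k^0$, $h_k \to h_k x_k^0$ described in section \ref{sec:preliminaries}, which preserves the Gaussians and the uniform prior and therefore leaves $\bE\langle\cdot\rangle_{t,u}$ invariant. Under this undoing $m_1 = \frac{1}{K}\sum_k x_k$ becomes $\frac{1}{K}\sum_k x_k^0 x_k$ while $q_{12}$ is unchanged. Next I would verify Bayes-optimality: the CDMA and decoupled-Gaussian pieces of $p_{t,u}$ are manifestly channel likelihoods, and using $x_k^2=1$ the first two terms of $h_u$ are the log-likelihood of an auxiliary side channel $y_k^h := \sqrt{u}\,x_k^0 + h_k$ with $h_k\sim\mathcal{N}(0,1)$:
\begin{equation*}
\exp\Bigl[-\tfrac{1}{2}(y_k^h - \sqrt{u}\,x_k)^2\Bigr] = \exp\bigl[\sqrt{u}\,h_k x_k + u\,x_k^0 x_k\bigr]\cdot c_k(y_k^h),
\end{equation*}
with $c_k$ independent of $\mbx$; the residual $-\sqrt{u}\sum_k|h_k|$ is also $\mbx$-independent and absorbed into $Z_{t,u}$. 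Hence $p_{t,u}(\mbx\mid\omega) = 2^{-K} p(\omega\mid\mbx)/p(\omega)$, where $\omega$ collects all observations.

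\textbf{Application step.} With these two facts in hand, for any bounded test function $f$ I would expand in the un-gauge-fixed variables
\begin{align*}
\bE\langle f(m_1)\rangle_{t,u} &= \frac{1}{2^{2K}}\sum_{\mbx^0,\mbx}\int d\omega\,\frac{p(\omega\mid\mbx^0)\,p(\omega\mid\mbx)}{p(\omega)}\,f\Bigl(\tfrac{1}{K}\sum_k x_k^0 x_k\Bigr),\\
\bE\langle f(q_{12})\rangle_{t,u} &= \frac{1}{2^{3K}}\sum_{\mbx^0,\mbx^{(1)},\mbx^{(2)}}\int d\omega\,\frac{p(\omega\mid\mbx^0)\,p(\omega\mid\mbx^{(1)})\,p(\omega\mid\mbx^{(2)})}{p(\omega)^2}\,f\Bigl(\tfrac{1}{K}\sum_k x_k^{(1)} x_k^{(2)}\Bigr).
\end{align*}
Summing out $\mbx^0$ in the second line using $\sum_{\mbx^0} 2^{-K} p(\omega\mid\mbx^0) = p(\omega)$ cancels one factor of $p(\omega)$ and collapses it to the first line after the relabeling $(\mbx^0,\mbx)\leftrightarrow(\mbx^{(1)},\mbx^{(2)})$. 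Taking $f(\cdot) = \ind_{\{\cdot = x\}}$, legitimate since $m_1$ and $q_{12}$ take finitely many values in $[-1,1]$, delivers $\bP_{m_1}(x) = \bP_{q_{12}}(x)$.

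\textbf{Main difficulty.} I expect no genuine analytic obstacle: the argument is pure Bayes' rule on a finite discrete planted ensemble. The only point requiring care is confirming that $h_u(\mbx)$ in \eqref{eqn:perturb} is engineered precisely to preserve Bayes-optimality (which is why the somewhat peculiar-looking $-\sqrt{u}\sum_k|h_k|$ term is harmless and the $u\sum_k x_k^0 x_k$ term has the specific coefficient it does); the rest is bookkeeping.
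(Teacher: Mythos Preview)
Your proposal is correct and is essentially the same argument as the paper's: both rely on the gauge transformation $x_k\to x_k^0 x_k$, $s_{ik}\to x_k^0 s_{ik}$, $h_k\to x_k^0 h_k$ followed by a summation over the $2^K$ choices of $\mbx^0$. The paper's sketch (Appendix~\ref{apen:nishimori}) performs this transformation on both $\bP_{m_1}$ and $\bP_{q_{12}}$ separately and then matches the resulting expressions, whereas you phrase the same computation in Bayesian language --- recognizing $p_{t,u}$ as a genuine posterior so that $\mbx^0$ plays the role of an extra replica that can be integrated out of the $q_{12}$ expression --- but the underlying mechanism is identical.
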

In particular the following identity holds
\begin{equation}\label{eqn:mq}
\bE[\langle m_1\rangle_{t,u}]=\bE[\langle q_{12}\rangle_{t,u}]
\end{equation} 
Such identities are known as Nishimori identities in the statistical physics
literature and are a  consequence of a gauge symmetry satisfied by the measure
$\bE\langle-\rangle_{t,u}$. They have also been used in the context of
communications (see \cite{Mon01},\cite{Mon05}). For completeness a sketch of the proof is
given in Appendix \ref{apen:nishimori}. 

The next two identities also follow from similar considerations. 
\begin{lemma}\label{lem:nishimori2}
Let
$$
\mcZ= \mbn + \sqrt{\frac{B(t)}{N}}\mts \mbz
$$
Consider two replicas $\mcZ^{(\alpha)}$, $\alpha=1,2$ corresponding to $z_k^{(\alpha)}=1-x_k^{(\alpha)}$. We have
then
\begin{equation}\label{eqn:X11}
\frac{1}{N}\bE[\langle \Vert\mcZ\Vert^2\rangle_{t,u}] = 1
\end{equation}
and 
\begin{align}\label{eqn:X12}
\bE[\langle
(\mbn\cdot\mcZ^{(2)})(\mbz^{(1)}\cdot\mbz^{(2)})\rangle_{t,u}] 
= \sum_{k}\bE[\langle
(\mbn\cdot\mcZ)z_k\rangle_{t,u} ]
\end{align}
\end{lemma}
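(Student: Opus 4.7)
The plan is to deduce both identities from the Nishimori identity that underlies Lemma~\ref{lem:nishimorimq}. In the form I shall invoke it, for any function $\phi$ of one or several replicas and the data one has
\begin{equation*}
\bE\langle \phi(\mbx^{(1)}, \mbx^{(2)}, \ldots, \text{data}) \rangle_{t,u}
= \bE\langle \phi(\mbx^0, \mbx^{(2)}, \ldots, \text{data}) \rangle_{t,u},
\end{equation*}
i.e., any single posterior replica can be exchanged with the planted input $\mbx^0$. In the gauge $\mbx^0 = \mbone$ adopted in section~\ref{sec:preliminaries}, the exchange amounts to setting the chosen replica to $\mbone$, so that $\mbz^{(i)} = \mbone - \mbone = 0$ and $\mcZ^{(i)} = \mbn$. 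Before applying this I want to check that the perturbation $h_u$ preserves Nishimori symmetry. Its two $\mbx$-dependent terms are $\sqrt u h_k x_k + u x_k^0 x_k$, which is exactly the $\mbx$-dependent part of the log-likelihood $-\tfrac12(\tilde h_k - \sqrt u x_k)^2$ for a side channel $\tilde h_k = \sqrt u x_k^0 + h_k$; the remaining term $-\sqrt u |h_k|$ is $\mbx$-independent and only shifts the normalization. The perturbed posterior is thus itself a Bayesian posterior for an enlarged channel, so the gauge-symmetry argument of Appendix~\ref{apen:nishimori} applies verbatim.

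For \eqref{eqn:X11}, I apply the swap to $\phi(\mbx,\text{data}) = \Vert \mcZ(\mbx) \Vert^2$. Since $\mcZ(\mbone) = \mbn$, the identity collapses to $\tfrac{1}{N}\bE \Vert \mbn \Vert^2 = 1$, which is immediate because $\mbn$ is an $N$-dimensional standard Gaussian.

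For \eqref{eqn:X12}, I will show that both sides vanish separately, from which the stated equality is immediate. On the left-hand side, apply Nishimori to the first replica $\mbx^{(1)}$: this forces $\mbz^{(1)} = 0$ and hence the entire integrand is zero, so $\bE\langle (\mbn \cdot \mcZ^{(2)})(\mbz^{(1)} \cdot \mbz^{(2)}) \rangle_{t,u} = 0$. On the right-hand side, apply Nishimori to the sole replica $\mbx$: this forces $z_k = 0$ for every $k$, so each term in the sum is zero and $\sum_k \bE\langle (\mbn \cdot \mcZ) z_k \rangle_{t,u} = 0$.

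The only step requiring any thought is the justification that the Nishimori swap is licit in the presence of the perturbation $h_u$; once the side-channel reinterpretation is accepted, each identity reduces to a one-line substitution. An alternative verification would be to expand both brackets using the conditional independence of replicas given the data and then apply the single-replica Nishimori identity term by term, but the swap argument above is the cleanest route.
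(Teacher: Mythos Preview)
Your argument for \eqref{eqn:X11} is correct and is essentially the abstract form of the gauge computation the paper carries out in Appendix~\ref{apen:nishimori}: $\Vert\mcZ\Vert^2 = B(t)\Vert\mby-N^{-1/2}\mts\mbx\Vert^2$ is a function of the replica and of the true data $(\mby,\mts,\ldots)$ only, so the swap $\mbx\to\mbx^0$ is legitimate and sends $\mcZ\to\mbn$.

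Your argument for \eqref{eqn:X12} has a genuine gap: neither side is zero. In fact, Section~\ref{sec:transformingT2} rewrites $T_2$ as $-\frac{B'(t)}{2KN}\sum_k\bE\langle(\mbn\cdot\mcZ)z_k\rangle_{t,u}$ and then derives the nontrivial expression \eqref{eqn:T2} from it; were the right-hand side of \eqref{eqn:X12} zero, $T_2$ would vanish identically and the whole interpolation would collapse. The flaw is in how you invoke the swap. The identity $\bE\langle\phi(\mbx^{(1)},\ldots,\text{data})\rangle=\bE\langle\phi(\mbx^0,\ldots,\text{data})\rangle$ holds only for observables that depend on replicas and on the \emph{true} data $(\mby,\tilde\mby,\mts,\ldots)$, not on $\mbx^0$ separately. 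But $\mbn=\sqrt{B(t)}(\mby-N^{-1/2}\mts\mbx^0)$ and $z_k=x_k^0-x_k$ both carry $\mbx^0$; once you work in the gauge $\mbx^0=\mbone$ and treat $\mbn$ as ``data'', you have hard-coded $\mbone$ into the observable. When you then swap $\mbx\to\mbx^0$ (with $\mbx^0$ random, as Nishimori requires), the hard-coded $\mbone$'s stay put and you do \emph{not} get $z_k=0$. Equivalently, the correct exchangeability statement swaps $\mbx^0\leftrightarrow\mbx$ everywhere, sending $\mbn\to\mcZ$, $\mcZ\to\mbn$, $z_k\to-z_k$; this shows the full-$\mbx^0$-average of $(\mbn\cdot\mcZ)z_k$ vanishes, but that average differs from the gauged quantity because $(\mbn\cdot\mcZ)z_k$ picks up a factor $x_k^0$ under the gauge transformation and is therefore not gauge-invariant.

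The paper's route avoids this trap: it writes $\mbz^{(1)}\cdot\mbz^{(2)}=\mbone\cdot\mbz^{(2)}-\mbx^{(1)}\cdot\mbz^{(2)}$, observes that the first piece reproduces the right-hand side of \eqref{eqn:X12}, and kills the second piece by swapping $\mbx^0\leftrightarrow\mbx^{(2)}$ (under which $(\mbn\cdot\mcZ^{(2)})(\mbx^{(1)}\cdot\mbz^{(2)})$ is genuinely antisymmetric). That swap is legitimate because $(\mbn\cdot\mcZ^{(2)})(\mbx^{(1)}\cdot\mbz^{(2)})$, written in terms of $(\mbx^0,\mbx^{(1)},\mbx^{(2)},\mby,\mts)$, is gauge-invariant.
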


\subsection{Concentration of Magnetization}\label{sec:conclemmas}

A crucial feature of the calculation in the next paragraph is that $m_1$ (and $q_{12}$) concentrate, namely
\vskip 0.25cm
\begin{theorem}\label{thm:concmag}
Fix any $\epsilon>0$. For Lebesgue almost every $u>\epsilon$, 
\begin{align*}
\lim_{N\to\infty}\int_0^1 dt\bE\langle\vert m_1 - \bE\langle m_1\rangle_{t,u}\vert\rangle_{t}  = 0
\end{align*}
\end{theorem}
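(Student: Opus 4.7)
The natural approach is the triangle inequality
\begin{align*}
\bE\langle|m_1-\bE\langle m_1\rangle_{t,u}|\rangle_{t,u}\leq \bE\langle|m_1-\langle m_1\rangle_{t,u}|\rangle_{t,u} + \bE|\langle m_1\rangle_{t,u}-\bE\langle m_1\rangle_{t,u}|,
\end{align*}
and to show that each of the two summands, integrated in $t$, vanishes as $N\to\infty$ for Lebesgue-a.e. $u>\epsilon$. The key structural input is that $h_u(\mbx)$ in \eqref{eqn:perturb} is, up to an $\mbx$-independent shift, the log-likelihood of a Gaussian side channel with observation $x_k^0+u^{-1/2}h_k$, so that $u$ plays the role of a side-channel SNR. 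Letting $F_K(u):=\bE[f_{t,u}]$, Gaussian integration by parts on the $h_k$ combined with the Nishimori identity $\bE\langle m_1\rangle=\bE\langle q_{12}\rangle$ of Lemma \ref{lem:nishimorimq} produces an affine identity between $F_K'(u)$ and $\bE\langle m_1\rangle_{t,u}$; moreover, after subtracting the deterministic drift $-\sqrt{u}\,\bE|h_k|$, an I--MMSE calculation shows that $F_K$ is a uniformly bounded concave function of $u$.

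\textbf{Thermal piece.} By Cauchy--Schwarz it suffices to control $\bE\langle(m_1-\langle m_1\rangle_{t,u})^2\rangle$. Expanding the variance of $m_1=K^{-1}\sum_k x_k$ and performing a second Gaussian IBP in $h_k$ on the off-diagonal covariances $\mathrm{Cov}_{\langle\cdot\rangle}(x_k,x_\ell)$, and using $x_k^2=1$ to handle the diagonal pieces directly, yields an inequality of schematic form
\begin{align*}
\bE\langle(m_1-\langle m_1\rangle_{t,u})^2\rangle \leq \frac{C}{K}+\frac{C'}{K}\,\frac{d}{du}\bE\langle m_1\rangle_{t,u}.
\end{align*}
Since $\bE\langle m_1\rangle_{t,u}\in[0,1]$, integrating over $u\in(\epsilon,1)$ gives $\int_\epsilon^1\!du\,\bE\langle(m_1-\langle m_1\rangle_{t,u})^2\rangle \leq C''/K$ uniformly in $t$. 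Fubini and Jensen (along a subsequence if necessary) then yield $\int_0^1 dt\,\bE\langle|m_1-\langle m_1\rangle|\rangle_{t,u}\to 0$ as $K\to\infty$ for Lebesgue-a.e. $u>\epsilon$.

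\textbf{Disorder piece and conclusion.} Realization-wise, $f_{t,u}$ is the sum of a concave function of $u$ and a linear-in-$\sqrt{u}$ random drift; Theorem \ref{thm:freeconc} gives $|f_{t,u}-F_K(u)|\leq\eta_K$ with probability $1-o(1)$ for some $\eta_K$ that is a negative power of $K$, uniformly on compact $u$-intervals. Applying the convex-function comparison
\begin{align*}
|g'(u)-G'(u)|\leq \frac{2\eta}{\delta}+\bigl(G'(u-\delta)-G'(u+\delta)\bigr)\qquad\text{if }|g-G|\leq\eta\text{ on }[u-\delta,u+\delta],
\end{align*}
with $\eta=\eta_K$ and $\delta=\eta_K^{1/2}$, together with Lebesgue-a.e. continuity of the monotone $F_K'$, yields $|\partial_u f_{t,u}-F_K'(u)|\to 0$ in probability for a.e. $u$. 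Up to $O(K^{-1/2})$ LLN corrections coming from $K^{-1}\sum_k h_k\langle x_k\rangle$ and $K^{-1}\sum_k|h_k|$, the affine IBP identity from the first paragraph then converts this into $\bE|\langle m_1\rangle_{t,u}-\bE\langle m_1\rangle_{t,u}|\to 0$ for a.e. $u>\epsilon$. Combining the two pieces and invoking dominated convergence (integrands bounded by $2$) to exchange $\lim_{N\to\infty}$ with $\int_0^1 dt$ gives the theorem. The most delicate step is the disorder piece: the sub-exponential rate in Theorem \ref{thm:freeconc} only lets us convert function-concentration to derivative-concentration at values of $u$ where $F_K'$ has sufficient continuity uniformly in $K$, and this can only be guaranteed on a full-measure set -- which is also why the perturbation in \eqref{eqn:perturb} is indispensable: it turns the problem of concentration at a single parameter value into one averaged over $u$, allowing us to sidestep potential phase-transition values where magnetization concentration may genuinely fail.
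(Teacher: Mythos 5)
Your high-level scaffolding matches the paper: perturb the Hamiltonian with the Gaussian field $h_u$, use convexity of $f_{t,u}$ in $u$ to control thermal fluctuations and the free-energy concentration theorem to control disorder fluctuations, integrate in $u$ to avoid phase-transition points. However, there is a structural gap in how you pass from free-energy information to the magnetization $m_1$.

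The derivative of $f_{t,u}$ in $u$ is not $\langle m_1\rangle_{t,u}$: up to an $\mbx$-independent shift it equals $\langle L(\mbx)\rangle_{t,u}$ with
\[
L(\mbx)=\frac{1}{2K\sqrt u}\sum_k h_k x_k + m_1,
\]
and the paper's two concentration lemmas control $\bE\langle|L-\bE\langle L\rangle|\rangle_{t,u}$, not $\bE\langle|m_1-\bE\langle m_1\rangle|\rangle_{t,u}$. Your proposed ``disorder piece'' claims that the extra term $\frac{1}{2K\sqrt u}\sum_k h_k\langle x_k\rangle$ can be dismissed as an $O(K^{-1/2})$ LLN correction. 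This is false: the $\langle x_k\rangle$ depend nontrivially on the whole vector $\mbh$ (so the summands are not independent), and after IBP the expectation of this term equals $\tfrac12(1-\bE\langle q_{12}\rangle)$, which is $O(1)$, not $o(1)$. The identity you invoke---$\bE\langle L\rangle = \tfrac12(1+\bE\langle m_1\rangle)$---is only valid in expectation; it does not convert realization-wise concentration of $\partial_u f_{t,u}$ into realization-wise concentration of $\langle m_1\rangle_{t,u}$.

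Your ``thermal piece'' has a related issue. The variance $\bE\langle(m_1-\langle m_1\rangle)^2\rangle$ does not reduce to $\tfrac{C}{K}+\tfrac{C'}{K}\tfrac{d}{du}\bE\langle m_1\rangle$ after a straightforward Gaussian IBP. Computing $\tfrac{d}{du}\bE\langle m_1\rangle=K\bE[\langle m_1 L\rangle-\langle m_1\rangle\langle L\rangle]$ and then integrating $h_k$ by parts produces an extra multi-replica term $-K\bE\big[\langle m_1^{(1)}q_{12}\rangle-\langle m_1^{(3)}q_{12}\rangle\big]$ which you would need to control with additional Nishimori identities; the clean schematic inequality you wrote does not appear.

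The paper sidesteps both problems by never trying to transfer fluctuation bounds from $L$ to $m_1$ at the level of individual realizations. Instead, after establishing $\int_\epsilon^a du\,\bE\langle|L-\bE\langle L\rangle|\rangle_{t,u} = O(K^{-1/16})$, it multiplies $L$ by the \emph{overlap} $q_{12}$ (bounded by $1$), which gives
\[
\big|\bE\langle L\,q_{12}\rangle_{t,u}-\bE\langle L\rangle_{t,u}\bE\langle q_{12}\rangle_{t,u}\big|\le \bE\langle|L-\bE\langle L\rangle|\rangle_{t,u},
\]
and then applies Nishimori identities to \emph{each} side of this inequality to obtain $\tfrac12|\bE\langle m_1^2\rangle_{t,u}-(\bE\langle m_1\rangle_{t,u})^2|$ on the left. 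This produces the full variance of $m_1$ (thermal plus disorder simultaneously), and Cauchy--Schwartz concludes. This ``multiply by $q_{12}$, then gauge'' step is the essential idea that is missing from your argument; without it, the decomposition you set up cannot be closed.
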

\vskip 0.25cm
The proof of this theorem, which is the point where the careful tuning of the
perturbation is needed, has an interest of its own and is presented section
\ref{sec:concmag}. Similar statements in the spin glass literature have
been obtained by Talagrand \cite{Tal03}. The usual signature of replica symmetry breaking is the absence of concentration for the overlap parameter $q_{12}$. This theorem combined with the Nishimori identity 
``explains" why the replica symmetry is not broken.

We will also need the following corollary
\begin{corollary}\label{cor:concmag}
The following holds
\begin{align*}
\frac{1}{N^{3/2}}\bE\langle(\mbn\cdot\mts\mbz)(1-m_1)\rangle_{t,u}=
\frac{1}{N^{3/2}}\bE\langle\mbn\cdot
\mts\mbz\rangle_{t,u} (1-\bE\langle m_1\rangle_{t,u}) + o_N(1)
 \end{align*}
with $\lim_{N\to +\infty}o_N(1) =0$ for almost every $u > 0$.
\end{corollary}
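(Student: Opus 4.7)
The plan is to interpret the corollary as a decoupling identity: the correlation between the prefactor $\mbn\cdot\mts\mbz$ and the magnetization $m_1$ factorizes asymptotically because $m_1$ concentrates on its mean (Theorem \ref{thm:concmag}). Subtracting the right-hand side from the left-hand side, the claim reduces to showing
\[
\frac{1}{N^{3/2}}\bE\langle(\mbn\cdot\mts\mbz)(m_1-\bE\langle m_1\rangle_{t,u})\rangle_{t,u}=o_N(1),
\]
where the $o_N(1)$ is to be understood in the $L^1(dt)$ sense on $[0,1]$, which is exactly what is required when the corollary is invoked inside the $t$-integral \eqref{fund} in the proof of Theorem \ref{thm:upperbound}.

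The natural first step is to apply Cauchy--Schwarz in the joint probability measure $\bE\langle\cdot\rangle_{t,u}$:
\[
\bigl|\bE\langle(\mbn\cdot\mts\mbz)(m_1-\bE\langle m_1\rangle)\rangle\bigr|\leq\sqrt{\bE\langle(\mbn\cdot\mts\mbz)^2\rangle}\,\sqrt{\bE\langle(m_1-\bE\langle m_1\rangle)^2\rangle}.
\]
For the first factor I would use the pointwise inequality $(\mbn\cdot\mts\mbz)^2\leq\|\mbn\|^2\|\mts\|_{\mathrm{op}}^2\|\mbz\|^2\leq 4K\|\mbn\|^2\|\mts\|_{\mathrm{op}}^2$ (since $|z_k|\leq 2$), the independence of $\mbn$ from $\mts$, the identity $\bE[\|\mbn\|^2]=N$, and the standard operator-norm moment estimate $\bE[\|\mts\|_{\mathrm{op}}^2]=O(N)$ (valid under Assumption A, or, appealing to Theorem \ref{thm:indspreading}, it is enough to argue in the Gaussian case where it is classical). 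This gives $\bE\langle(\mbn\cdot\mts\mbz)^2\rangle\leq CN^3$, exactly the scale needed for the $N^{-3/2}$ prefactor to cancel. For the second factor, the crude bound $|m_1-\bE\langle m_1\rangle|\leq 2$ yields $\bE\langle(m_1-\bE\langle m_1\rangle)^2\rangle\leq 2\,\bE\langle|m_1-\bE\langle m_1\rangle|\rangle$.

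Combining these two estimates produces a pointwise-in-$t$ bound of the form $C\sqrt{\bE\langle|m_1-\bE\langle m_1\rangle|\rangle}$. Integrating over $t\in[0,1]$ and using Cauchy--Schwarz in $t$,
\[
\int_0^1\!dt\,\frac{1}{N^{3/2}}\bigl|\bE\langle(\mbn\cdot\mts\mbz)(m_1-\bE\langle m_1\rangle)\rangle\bigr|\leq C\left(\int_0^1\!dt\,\bE\langle|m_1-\bE\langle m_1\rangle|\rangle\right)^{1/2},
\]
and the right-hand side tends to zero as $N\to\infty$ for almost every $u>0$ by Theorem \ref{thm:concmag}. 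The main conceptual subtlety is precisely this mismatch between the pointwise-in-$t$ phrasing of the corollary and the integrated form of Theorem \ref{thm:concmag}; the resolution above (integrate-then-apply-the-theorem, passing to a subsequence if an a.e.-in-$t$ form is desired) is the key step that makes the argument go through, and explains why no stronger concentration of $m_1$ than the one already established is needed.
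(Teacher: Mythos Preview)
Your argument is correct and follows the same Cauchy--Schwarz strategy as the paper's own proof: bound the second moment of $N^{-3/2}\mbn\cdot\mts\mbz$ by an $O(1)$ constant, and control the second factor by the concentration of $m_1$ from Theorem~\ref{thm:concmag}. The only noteworthy difference is in how the second-moment bound \eqref{eqn:centrallimbound} is obtained: the paper (Appendix~\ref{apen:centrallimbound}) uses a self-contained Chernoff-plus-union-bound argument over the $3^K$ possible vectors $\mbz$, while you invoke the operator-norm estimate $\bE[\Vert\mts\Vert_{\mathrm{op}}^2]=O(N)$ together with $\|\mbz\|^2\leq 4K$. Your route is shorter but imports a random-matrix fact; the paper's route is elementary but longer. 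You are also right to flag the mismatch between the pointwise-in-$t$ phrasing of the corollary and the integrated form of Theorem~\ref{thm:concmag}; the paper resolves this (at the end of Section~\ref{sec:concmag}) by noting that one can exchange the $N\to\infty$ limit with the $t$-integral via Lebesgue's theorem, so the concentration in fact holds for a.e.\ $(t,u)$, but your integrated treatment is equally sufficient for the application in \eqref{fund}.
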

\begin{proof} 
By the Cauchy-Schwartz inequality
\begin{align*}
\frac{1}{N^{3/2}}\bE\langle(\mbn\cdot
\mts\mbz)(\bE\langle m_1 \rangle_{t,u}-m_1 )\rangle_{t,u}
\leq &
\frac{1}{N^{3/2}}
(\bE\langle(\mbn\cdot
\mts\mbz)^2\rangle_{t,u})^{1/2}\\
\nonumber & \times (\bE\langle (\bE\langle m_1 \rangle_{t,u} - m_1)^2 \rangle_{t,u})^{1/2} 
\end{align*}
Because of the concentration of the magnetization $m_1$ (theorem \ref{thm:concmag}) it suffices to prove that 
\begin{equation}\label{eqn:centrallimbound}
\bE\Big\langle \Big({N^{-\frac32}}\sum_{i,l} n_i s_{il}
z_l\Big)^2\Big\rangle_{t,u} \leq D
\end{equation}
for some constant $D$ independent of $N$.
The proof follows from the central limit theorem and is given in Appendix \ref{apen:centrallimbound}.
\end{proof}

\subsection{Computation of $\frac{d}{dt}\bE[f_{t,u}]$}
\label{sec:derivative}
We have 
\begin{equation}\label{derivativesplit}
\frac{d}{dt}\bE[f_{t,u}] = T_1 + T_2
\end{equation}
where 
\begin{align}
T_1=-\frac{\lambda^\prime(t)}{2\sqrt{\lambda(t)}K}\bE\langle
\mbw\cdot\mbz\rangle_{t,u} -  \frac{\lambda^\prime(t)}{2K}\bE\langle
\mbz\cdot\mbz\rangle_{t,u} 
\end{align}
and
\begin{align}
T_2 = -
\frac{1}{K\sqrt N}\frac{B^\prime(t)}{2\sqrt{B(t)}}
\bE\langle\mcZ\cdot
\mts\mbz\rangle_{t,u} 
\end{align}

\subsubsection{Transforming $T_1$}

Integration by parts with respect to $w_k$ leads to
\begin{align*}
&T_1  =
\frac{\lambda^\prime(t)}{2\sqrt{\lambda(t)}K}\bE\langle(\mbw
+\sqrt{\lambda(t)}\mbz)\cdot \mbz\rangle_{t,u}
\\
&-\frac{\lambda^\prime(t)}{2\sqrt{\lambda(t)}K}\bE\langle \mbz^{(1)}\cdot(\mbw+\sqrt{\lambda(t)}\mbz^{(2)})\rangle_{t,u}
- \frac{\lambda^\prime(t)}{2K}\bE\langle \mbz\cdot\mbz\rangle_{t,u} \\
&=- \frac{\lambda^\prime(t)}{2}\bE\langle 1-2m_1+q_{12}\rangle_t = - \frac{\lambda'(t)}{2}\bE\langle1-m_1\rangle_{t,u}
\end{align*}
To obtain the second equality we remark that the $\mbw$ terms cancel and for the third one follows from \eqref{eqn:mq}.
 From the relation between $\lambda(t)$ and $B(t)$ given in
equation (\ref{eqn:SNRrelation}), $T_1$ can be rewritten in the form 
\begin{align}\label{eqn:T1}
T_1 = \frac{B'(t)}{2(1+\beta (1-m)B(t))^2}\bE \langle 1- m_1\rangle_{t,u}
\end{align}

\subsubsection{Transforming $T_2$}
\label{sec:transformingT2}
The term
$T_2$ can be rewritten as
\begin{align*}
T_2&=-\frac{B^\prime(t)}{2\beta B(t)N}\bE\langle \Vert\mcZ\Vert^2\rangle_{t,u} + {\frac{B^\prime(t)}{2\beta
B(t)N}} \bE\Vert \mbn\Vert^2 \\ &+ \frac{B^\prime(t)}{2
\sqrt{B(t)}K\sqrt N}\bE\langle \mbn\cdot \mts\mbz\rangle_{t,u}
\end{align*}
Because of \eqref{eqn:X11} the first two terms cancel,
\begin{align}\label{simpleform}
T_2= \frac{B^\prime(t)}{2
\sqrt{B(t)}K\sqrt N}\bE\langle\mbn\cdot \mts\mbz \rangle_{t,u}
\end{align}
Now we use integration by parts with respect to
$s_{ik}$, 
\begin{align*}
T_2 = -\frac{B^\prime(t)}{2KN}\bE\langle
(\mbn\cdot\mcZ)(\mbz\cdot \mbz)\rangle_{t,u}+
\frac{B^\prime(t)}{2 K N}\bE\langle
(\mbn\cdot\mcZ^{(2)})(\mbz^{(1)}\cdot \mbz^{(2)})\rangle_{t,u} 
\end{align*}
and the Nishimori identity \eqref{eqn:X12}
\begin{align*}
T_2 = &-\frac{B^\prime(t)}{2KN}\sum_{k}\bE\langle
(\mbn\cdot\mcZ)z_k\rangle_{t,u}
\\&=
-\frac{ B^\prime(t)}{2}\frac{1}{NK}\sum_k\bE[\Vert\mbn\Vert^2\langle z_k\rangle_{t,u}] \\
& -
\frac{ B^\prime(t)\sqrt{B(t)}}{2K N^{3/2}}\sum_k\bE\langle
(\mbn\cdot \mts\mbz)(1-x_k)\rangle_{t,u}
\end{align*}
Since $\frac{1}{N}\Vert\mbn\Vert^2=\frac{1}{N}\sum_i n_i^2$ concentrates on $1$, we get
\begin{align*}
T_2 = & -\frac{B^\prime(t)}{2}\beta \bE\langle 1-m_1\rangle_{t,u} + o_N(1)\\&
- \frac{\beta B^\prime(t)\sqrt{B(t)}}{2K N^{1/2}}\bE\langle
(\mbn\cdot \mts\mbz)(1-m_1)\rangle_{t,u}
\end{align*}
Applying Corollary \ref{cor:concmag} to the last expression for $T_2$ together with \eqref{simpleform} we obtain a closed affine equation for the later, whose solution is
\begin{align}\label{eqn:T2}
T_2= -\frac{B^\prime(t) \bE\langle1-m_1 \rangle_t}{2(1+\beta B(t)\bE\langle
1-m_1\rangle_{t,u})} + o_N(1)
\end{align}

\subsection{End of proof}
\label{sec:endofproof}
We
add and subtract the term  
$\frac{1}{2\beta}\ln(1+\beta B(1-m))$ from \eqref{fund} and use the integral representation
\begin{align*}
\frac{1}{2\beta}\ln(1+\beta B(1-m))=\frac{1}{2\beta}\int_0^1 dt \frac{\beta B'(t)(1-m)}{1+\beta B(t) (1-m)}
\end{align*} 
to obtain
\begin{equation}\nonumber
 \bE[f_{1,u}] =\bE[f_{0,u}]-\frac{1}{2\beta}\ln(1+\beta B(1-m)) +
\int_0^1 dt\biggl(\frac{d}{dt}\bE[f_{t,u}]+\frac{ B'(t)(1-m)}{2(1+\beta B(t) (1-m))}\biggr)
\end{equation}
If one uses \eqref{derivativesplit} and expressions (\ref{eqn:T1}), (\ref{eqn:T2}) some remarkable algebra occurs in the last integral. The integrand becomes
$$
R(t)+ \frac{B^\prime(t)(1-m)}{2(1+\beta B(t)(1-m))^2}
$$
with
\begin{align*}
R(t)=\frac{\beta B'(t)B(t)(\bE\langle m_1 - m \rangle_{t,u})^2}{2 (1+\beta B(t)
(1-m))^2(1+\beta B(t) \bE\langle 1-m_1\rangle_{t,u})}
\end{align*}
So the integral has a positive contribution $\int_0^1 dt R(t)\geq 0$ plus a computable contribution equal to $\frac{B(1-m)}{2(1+\beta B(1-m))}=\frac{\lambda}{2}(1-m)$.
Finally thanks to \eqref{eqn:free0} we find
\begin{align}
\frac{1}{2}+\bE[f_{1,u}] & =\int Dz \ln (2\cosh(\sqrt{\lambda} z+ \lambda))  - \frac{1}{2\beta} - \frac{1}{2\beta}\ln (1+\beta
B(1-m)) \nonumber \\
&
  -\frac{\lambda}{2}(1+m) + \int_0^1 R(t) dt +o_N(1)+ O(\sqrt u) \label{eqn:free1}
\end{align}
where for a.e $u>\epsilon$, $\lim_{N\to \infty} o_N(1)=0$.
We take first the limit $N\to \infty$, then $u\to\epsilon$ (along some appropriate sequence) and then $\epsilon\to 0$ to obtain a formula for the free energy where the only non-explicit contribution is $\int_0^1dt R(t)$. Since this is positive for all $m$, we obtain a lower bound on the free energy which is equivalent to the announced upper bound on the capacity. 

\section{Concentration of Magnetization}\label{sec:concmag}

The goal of this section is to prove Theorem \ref{thm:concmag}. 
The proof is organized in a succession of lemmas.
By the same methods used for Theorem \ref{thm:freeconc} we can prove 
\vskip 0.25cm
\begin{lemma}\label{lem:freetuconc}
There exists a strictly positive constant $\alpha$ (which remains positive for all $t$ and $u$) such that
\begin{equation*}
\bP[\vert f_{t,u}- \bE[f_{t,u}]\vert\geq \epsilon ] =O(e^{-\alpha \epsilon^2 \sqrt K})
\end{equation*}
\end{lemma}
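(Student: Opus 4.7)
The plan is to adapt the Gaussian-concentration-plus-martingale strategy underlying Theorem \ref{thm:freeconc}, treating $f_{t,u}(\mbn,\mbw,\mbh,\mts)$ as a Lipschitz function of the Gaussian randomness (we may take the spreading to be Gaussian in view of Theorem \ref{thm:indspreading}). The only features distinguishing $f_{t,u}$ from the original free energy are the interpolation parameters $B(t),\lambda(t)$ and the perturbation $h_u(\mbx)$, so the task reduces to verifying that these modifications do not spoil the Lipschitz estimates needed for the Ledoux--Talagrand inequality.

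The first step is to differentiate $\ln Z_{t,u}$ in each Gaussian variable; standard Gibbs identities give
\begin{align*}
\partial_{n_i} f_{t,u} &= -\tfrac{1}{K}\langle \mcZ_i\rangle_{t,u},\qquad
\partial_{w_k} f_{t,u} = -\tfrac{1}{K}\langle w_k+\sqrt{\lambda(t)}\,z_k\rangle_{t,u},\\
\partial_{s_{ik}} f_{t,u} &= -\tfrac{1}{K}\sqrt{B(t)/N}\,\langle \mcZ_i z_k\rangle_{t,u},\qquad
\partial_{h_k} f_{t,u} = \tfrac{\sqrt u}{K}\bigl(\langle x_k\rangle_{t,u}-\operatorname{sgn}(h_k)\bigr).
\end{align*}
Because $B(t)\in[0,B]$ and $\lambda(t)\in[0,\lambda]$ uniformly on $t\in[0,1]$, the first three groups of derivatives are dominated by the analogous expressions that control the Lipschitz norm in Theorem \ref{thm:freeconc} (with $\sigma^{-2}$ replaced by $B$). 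The $\partial_{h_k}$ contributions add at most $4u/K$ to the squared $L^2$ norm of the gradient, a harmless perturbation for $u$ in any fixed bounded set. In particular the non-smooth term $-\sqrt u\sum_k|h_k|$ only contributes an a.e.-bounded derivative of size $\sqrt u$.

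With the Lipschitz norm controlled uniformly in $(t,u)$, I would then invoke the same hybrid of the Gaussian concentration inequality (for $\mbn,\mbw,\mbh$) and the martingale/Lipschitz argument for the spreading matrix $\mts$ used in the proof of Theorem \ref{thm:freeconc}, yielding sub-Gaussian concentration on the scale $\sqrt K$. The resulting constant depends only on $B$, $\lambda$ and the range of $u$, so that it is strictly positive uniformly for $t\in[0,1]$ and $u$ in a fixed compact set.

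The main (minor) obstacle I anticipate is precisely this uniformity bookkeeping rather than any new analytic ingredient: one must confirm that neither the $t$-dependence through $B(t),\lambda(t)$ nor the $u$-dependence through $h_u$ causes the constant to degenerate at the endpoints of the $(t,u)$ range. This is what is needed subsequently, since in the proof of Theorem \ref{thm:concmag} one integrates in $t$ over $[0,1]$ and then lets $u$ decrease along a suitable sequence.
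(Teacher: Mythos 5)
Your proposal matches the paper's intent: the paper itself gives only a one-line proof, ``by the same methods used for Theorem~\ref{thm:freeconc},'' and your sketch correctly identifies the mechanism (gradient computations for each Gaussian component, a uniform-in-$(t,u)$ Lipschitz estimate, then Talagrand's concentration inequality combined with the good-set/McShane--Whitney extension device of Appendix~\ref{apen:concproofs}). Your formulas for the partial derivatives of $f_{t,u}$ are correct, and your accounting of the extra $O(u/K)$ contribution from the $\mbh$-gradient, including the a.e.\ bounded derivative from $-\sqrt{u}\sum_k |h_k|$, is sound and exactly the kind of verification the paper tacitly relies on.

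Two small inaccuracies in the surrounding discussion are worth flagging, though neither breaks the argument. First, the proof of Theorem~\ref{thm:freeconc} does \emph{not} use a martingale argument; the martingale (Pastur--Shcherbina--Tirozzi) technique enters only in Theorem~\ref{thm:capconcbin} for general spreading. For Gaussian spreading the entire machinery is Lipschitz/Talagrand plus the good-set correction of Lemma~\ref{littlelemma}, which is where the $\sqrt{K}$ (rather than $K$) rate originates. Second, Theorem~\ref{thm:indspreading} controls only the difference of \emph{limiting capacities} across spreading distributions; it does not transfer concentration rates, so it cannot be used to reduce the lemma to the Gaussian case. Fortunately no such reduction is needed: the interpolating measure $\langle-\rangle_{t,u}$ in Sections~\ref{sec:upperboundproof}--\ref{sec:concmag} is already constructed with Gaussian spreading (after the paper has invoked Theorem~\ref{thm:indspreading} once to justify working in that setting), so Lemma~\ref{lem:freetuconc} is to be read as a statement about Gaussian spreading from the outset.
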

\vskip 0.25cm

The perturbation term (\ref{eqn:perturb}) has been chosen carefully so that the following holds,
\vskip 0.25cm
\begin{lemma}\label{lem:ftuconvex}
When considered as a function of $u$, $f_{t,u}$ is convex in $u$.
\end{lemma}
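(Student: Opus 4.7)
The plan is to compute $\partial^2_u f_{t,u}$ directly for $u>0$ and exhibit it as a sum of two manifestly nonnegative quantities. Since the sum over $\mbx\in\{\pm 1\}^K$ defining $Z_{t,u}$ is finite and $u$ enters only through the factor $e^{h_u(\mbx)}$, differentiation under the sum is trivial, and the standard log-partition identity gives
\begin{align*}
K\,\frac{\partial^2 f_{t,u}}{\partial u^2} \;=\; \bigl\langle \partial_u^2 h_u(\mbx)\bigr\rangle_{t,u} \;+\; \mathrm{Var}_{t,u}\!\bigl(\partial_u h_u(\mbx)\bigr),
\end{align*}
where the variance is with respect to the posterior measure \eqref{eqn:perturbposterior}. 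The variance term is manifestly nonnegative, so the whole claim reduces to showing that $\partial_u^2 h_u(\mbx)\geq 0$.

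Differentiating \eqref{eqn:perturb} twice (the piece $u\sum_k x_k^0 x_k$, being linear in $u$, drops out) yields
\begin{align*}
\partial_u^2 h_u(\mbx) \;=\; \frac{1}{4u^{3/2}}\sum_{k=1}^{K}\bigl(|h_k| - h_k x_k\bigr).
\end{align*}
Since $x_k\in\{\pm 1\}$, one has $h_k x_k\in\{-|h_k|,\,+|h_k|\}$, so each summand is nonnegative and $\partial_u^2 h_u(\mbx)\geq 0$ pointwise in $\mbx$ and in every realization of $\mbh$. Taking the Gibbs expectation preserves the inequality, and combining with the nonnegativity of the variance gives $\partial_u^2 f_{t,u}\geq 0$, which is the claimed convexity.

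I do not expect a real obstacle in this calculation; the only point worth flagging is that the $\mbx$-independent term $-\sqrt u\sum_k|h_k|$ in the perturbation \eqref{eqn:perturb}, although irrelevant for the posterior measure itself, plays an essential role here: its second derivative $+\tfrac{1}{4u^{3/2}}\sum_k|h_k|$ is precisely what cancels the sign-indefinite contribution $-\tfrac{1}{4u^{3/2}}\sum_k h_k\langle x_k\rangle_{t,u}$ coming from $\partial_u^2(\sqrt u\sum_k h_k x_k)$. Without this cancellation one would only obtain convexity in expectation, not for each realization. This is presumably the reason the perturbation was ``chosen carefully''.
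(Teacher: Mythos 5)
Your proof is correct and is essentially the paper's own argument: both compute $\partial_u^2 f_{t,u}$ via the log-partition identity and split it into the nonnegative Gibbs variance of $\partial_u h_u$ plus the Gibbs average of $\partial_u^2 h_u = \tfrac{1}{4u^{3/2}}\sum_k(|h_k|-h_kx_k)\geq 0$. Your added remark explaining the role of the $-\sqrt u\sum_k|h_k|$ counterterm makes explicit what the paper only signals by saying the perturbation was ``chosen carefully.''
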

\vskip 0.25cm
\begin{proof}
We simply evaluate the second derivative and show it is positive.
\begin{align*}
\frac{df_{t,u}}{du} = \langle L(\mbx)\rangle_{t,u}-\frac{1}{K 2\sqrt u }\sum_k |h_k|
\end{align*}
where we have defined
\begin{equation*}
L(\mbx) = \frac1K\frac{1}{2\sqrt u} \sum_k h_kx_k + \frac1K\sum_kx_k
\end{equation*}
Differentiating again, 
\begin{align}\label{2derivative}
\frac{d^2f_{t,u}}{du^2} &= \frac{1}{K}\Big \langle\frac{-1}{4 u^{3/2}}\sum_k h_kx_k \Big\rangle_{t,u}+\frac{1}{4u^{3/2}K} \sum_k |h_k|\nonumber\\& +K(\langle L(\mbx)^2\rangle_{t,u}-\langle L(\mbx)\rangle^2_{t,u})\geq 0
\end{align}
\end{proof}
\vskip 0.25cm
The quantity $L(\mbx)$ turns out to be very useful and satisfies two concentration properties.
\vskip 0.25cm
\begin{lemma} For any $a>\epsilon>0$ fixed,
\begin{align*}
\int_{\epsilon}^{a} du \bE\Big\langle &\Big|L(\mbx)
-\langle L(\mbx)\rangle_{t,u}\Big|\Big\rangle_{t,u}=
O\Big(\frac1{\sqrt K} \Big)
\end{align*}
\end{lemma}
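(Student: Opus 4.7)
The plan is to exploit the convexity statement of Lemma~\ref{lem:ftuconvex}, together with the explicit expression \eqref{2derivative}, to bound the thermal fluctuation of $L(\mbx)$ by a total derivative that telescopes upon integration in $u$.

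First, for any fixed realization of the randomness, a Cauchy--Schwarz bound on the Gibbs measure gives
\begin{equation*}
\langle |L - \langle L\rangle_{t,u}|\rangle_{t,u} \leq \sqrt{\langle L^2\rangle_{t,u} - \langle L\rangle_{t,u}^2}.
\end{equation*}
Applying Jensen/Cauchy--Schwarz once more under $\bE$ and then Cauchy--Schwarz on the $u$-integral,
\begin{equation*}
\int_{\epsilon}^{a} du\, \bE\langle|L - \langle L\rangle_{t,u}|\rangle_{t,u} \leq \sqrt{a-\epsilon}\,\sqrt{\int_{\epsilon}^{a}du\, \bE\bigl[\langle L^2\rangle_{t,u} - \langle L\rangle_{t,u}^2\bigr]}.
\end{equation*}
So it suffices to show that the integrand under the square root is $O(1/K)$ uniformly in $t$.

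Next I would isolate the variance of $L$ from the identity \eqref{2derivative}. Rearranging,
\begin{equation*}
\langle L^2\rangle_{t,u} - \langle L\rangle_{t,u}^2 = \frac{1}{K}\frac{d^2 f_{t,u}}{du^2} + \frac{1}{4u^{3/2}K^2}\Bigl\langle \sum_k h_k x_k\Bigr\rangle_{t,u} - \frac{1}{4u^{3/2}K^2}\sum_k|h_k|.
\end{equation*}
Taking $\bE$ and integrating in $u\in[\epsilon,a]$, the last two terms contribute $O(1/K)$ after using $\bE|h_k|=\sqrt{2/\pi}$ and $u\geq\epsilon$. For the first term, the fundamental theorem of calculus yields
\begin{equation*}
\int_{\epsilon}^{a}du\, \bE\Bigl[\frac{d^2 f_{t,u}}{du^2}\Bigr] = \bE\Bigl[\frac{df_{t,u}}{du}\Bigr]_{u=\epsilon}^{u=a}.
\end{equation*}

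The remaining step is therefore a boundedness estimate on the first derivative
$\frac{df_{t,u}}{du}=\langle L(\mbx)\rangle_{t,u} - \frac{1}{2K\sqrt u}\sum_k|h_k|$. Using $|x_k|=1$ one has the deterministic bound $|L(\mbx)|\leq 1+\frac{1}{2K\sqrt u}\sum_k|h_k|$, which combined with $\bE|h_k|=\sqrt{2/\pi}$ and $u\in\{\epsilon,a\}$ gives $\bigl|\bE\tfrac{df_{t,u}}{du}\bigr|\leq C(\epsilon)$ for an explicit constant depending only on $\epsilon$ and $a$. Consequently
\begin{equation*}
\int_{\epsilon}^{a}du\, \bE\bigl[\langle L^2\rangle_{t,u} - \langle L\rangle_{t,u}^2\bigr] = O\!\left(\frac{1}{K}\right),
\end{equation*}
and plugging back into the first display yields the stated $O(1/\sqrt K)$ bound.

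The only potentially delicate point is checking that the interchange of $\bE$ and $d/du$ is legitimate when integrating the second derivative; this is routine because the explicit form of $\frac{df_{t,u}}{du}$ shows it is integrable and dominated on $[\epsilon,a]$ by $1+\frac{1}{K\sqrt\epsilon}\sum_k|h_k|$, whose expectation is finite. Everything else reduces to elementary estimates using $|x_k|\leq 1$ and the Gaussian moments of $h_k$, so there is no real obstacle beyond a careful book-keeping of the $1/K$ and $1/\sqrt K$ factors.
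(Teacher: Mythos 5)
Your proposal is correct and follows essentially the same route as the paper: isolate the variance $\langle L^2\rangle_{t,u}-\langle L\rangle_{t,u}^2$ from the explicit formula \eqref{2derivative}, reduce to bounding $\int_\epsilon^a du\,\bE[\tfrac{d^2 f_{t,u}}{du^2}]$ by a telescoping fundamental-theorem-of-calculus argument using boundedness of $\tfrac{d}{du}\bE[f_{t,u}]$ for $u\geq\epsilon$, and then pass from an $L^2$ to an $L^1$ bound by Cauchy--Schwarz on $\int du\,\bE\langle-\rangle_{t,u}$. The one small cosmetic difference is that the paper simply drops the first two terms of \eqref{2derivative} from the inequality, which implicitly rests on the observation that $\tfrac{1}{4u^{3/2}K}\sum_k\bigl(|h_k|-h_k\langle x_k\rangle_{t,u}\bigr)\geq 0$ so that $K(\langle L^2\rangle_{t,u}-\langle L\rangle_{t,u}^2)\leq \tfrac{d^2f_{t,u}}{du^2}$; you instead keep those two terms, bound them in absolute value by $O(1/(u^{3/2}K))$ using $|x_k|\leq 1$ and $\bE|h_k|=\sqrt{2/\pi}$, and note the $u$-integral is $O(1/K)$ since $u\geq\epsilon$. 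Both lead to the same $O(1/K)$ for the variance integral and hence $O(1/\sqrt K)$ overall, so the proof stands.
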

\vskip 0.25cm
\begin{proof}
From equation (\ref{2derivative}), we have 
\begin{align*}
\int_\epsilon^a du \bE\Big\langle &\Big(L(\mbx)-\langle
L(\mbx)\rangle_{t,u}\Big)^2\Big\rangle_{t,u} \leq \int_\epsilon^a du
\frac1K\frac{d^2}{du^2}\bE[f_{t,u}] \\
&\leq \frac1K \Big(\frac{d}{du}\bE[f_{t,a}]  - \frac{d}{du}\bE[f_{t,\epsilon}]\vert\Big)
= O\Big(\frac1K \Big)
\end{align*}
In the very last equality we use that the first derivative of $\bE[f_{t,u}]$ is bounded for $u\geq \epsilon$.
Using Cauchy-Schwartz inequality for $\int \bE\langle- \rangle_{t,u}$ we obtain the lemma.
\end{proof}
\vskip 0.25cm
\begin{lemma} For any $a > \epsilon >0$ fixed,
\begin{align*}
\int_{\epsilon}^a du \bE\Big|\langle L(\mbx)\rangle_{t,u} - \bE\langle
L(\mbx)\rangle_{t,u}\Big| = O\Big(\frac1{K^{\frac1{16}}}\Big)
\end{align*}
\end{lemma}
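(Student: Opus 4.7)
The plan is to exploit the convexity of $f_{t,u}$ in $u$ (Lemma~\ref{lem:ftuconvex}) in order to transfer the concentration estimate for $f_{t,u}$ (Lemma~\ref{lem:freetuconc}) into a concentration estimate for its derivative $\frac{d}{du}f_{t,u}$, which up to an additive ``pure noise'' term is exactly $\langle L(\mbx)\rangle_{t,u}$. Indeed, the proof of Lemma~\ref{lem:ftuconvex} gives
\begin{equation*}
\frac{d}{du}f_{t,u} \;=\; \langle L(\mbx)\rangle_{t,u} \;-\; \frac{1}{2K\sqrt u}\sum_{k=1}^K |h_k|,
\end{equation*}
so writing $\phi_N(u)=f_{t,u}$ and $\phi(u)=\bE[f_{t,u}]$, the quantity of interest equals $\bE|\phi_N'(u)-\phi'(u)|$ plus the fluctuation of $\frac{1}{K}\sum_k|h_k|$ around $\bE|h_1|$ scaled by $\frac{1}{2\sqrt u}$. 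This last fluctuation has $L^1$-norm $O(K^{-1/2})$ by the classical variance bound for sums of i.i.d.\ variables with finite second moment, and integrated over $u\in[\epsilon,a]$ contributes only $O(K^{-1/2})$.

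The key tool is the standard ``convex sandwich''. For any $\delta>0$, monotonicity of the slopes of $\phi_N$ yields
\begin{equation*}
\phi_N'(u) \;\leq\; \frac{\phi_N(u+\delta)-\phi_N(u)}{\delta} \;\leq\; \phi'(u+\delta) + \frac{\eta(u)+\eta(u+\delta)}{\delta},
\end{equation*}
with $\eta(u)=|f_{t,u}-\bE f_{t,u}|$, and symmetrically $\phi_N'(u)\geq \phi'(u-\delta)-(\eta(u)+\eta(u-\delta))/\delta$. Combined with the monotonicity of $\phi'$, this gives
\begin{equation*}
|\phi_N'(u)-\phi'(u)| \;\leq\; \bigl[\phi'(u+\delta)-\phi'(u-\delta)\bigr] \;+\; \frac{2}{\delta}\bigl(\eta(u-\delta)+\eta(u)+\eta(u+\delta)\bigr).
\end{equation*}
Integrating over $[\epsilon,a]$, the first term collapses to $[\phi(a+\delta)-\phi(a-\delta)]-[\phi(\epsilon+\delta)-\phi(\epsilon-\delta)]=O(\delta)$, using that $\phi'$ stays bounded on a neighborhood of $[\epsilon,a]$ (which follows from the explicit formula for $\frac{d}{du}\bE f_{t,u}$ together with $u\geq\epsilon>0$, which tames the $1/\sqrt u$ factor). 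For the second term, Lemma~\ref{lem:freetuconc} gives $\bE\eta(u)\leq CK^{-1/4}$ uniformly in $u$, by integrating the subgaussian tail $3e^{-\alpha\cdot\sqrt K}$.

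Putting the estimates together,
\begin{equation*}
\int_\epsilon^a du\,\bE\bigl|\langle L(\mbx)\rangle_{t,u}-\bE\langle L(\mbx)\rangle_{t,u}\bigr| \;\leq\; O(\delta) \;+\; O\!\Big(\frac{K^{-1/4}}{\delta}\Big) \;+\; O(K^{-1/2}),
\end{equation*}
and choosing $\delta$ to be a small negative power of $K$ (optimizing gives $\delta\asymp K^{-1/8}$) yields an estimate of order at worst $K^{-1/16}$, which is the rate we need to carry forward into Theorem~\ref{thm:concmag}. The main obstacle is not the inequalities themselves but the verification that $\phi'(u)=\frac{d}{du}\bE[f_{t,u}]$ really does extend with a bounded derivative to a neighborhood of $[\epsilon,a]$, so that the telescoping step legitimately produces an $O(\delta)$ contribution; this is the place where keeping $u$ bounded away from $0$ is essential, and it is precisely why the statement is restricted to $u\in[\epsilon,a]$ with $\epsilon>0$.
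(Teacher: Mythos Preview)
Your proposal is correct and follows essentially the same approach as the paper: convexity of $f_{t,u}$ to sandwich its $u$-derivative between finite differences, the free-energy concentration Lemma~\ref{lem:freetuconc} to control the difference quotients, the $O(K^{-1/2})$ fluctuation of $\frac{1}{K}\sum_k|h_k|$ to pass from $\frac{d}{du}f_{t,u}$ to $\langle L(\mbx)\rangle_{t,u}$, and integration in $u$ to convert the residual $\phi'(u+\delta)-\phi'(u-\delta)$ into a controllable free-energy increment. The only (minor) difference is that the paper bounds the integrated increment $[\phi(a+\delta)-\phi(a)]-[\phi(\epsilon+\delta)-\phi(\epsilon)]$ via the H\"older-type estimate \eqref{eqn:freecontu}, yielding $O(\sqrt\delta)$ and hence the stated $O(K^{-1/16})$, whereas your use of the boundedness of $\phi'$ on a neighborhood of $[\epsilon,a]$ gives $O(\delta)$ and actually produces the sharper rate $O(K^{-1/8})$; your conservative ``at worst $K^{-1/16}$'' is therefore safely within what your argument delivers.
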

\vskip 0.25cm
\begin{proof}
From convexity of $f_{t,u}$ with respect to $u$
(lemma \ref{lem:ftuconvex}) we have for any $\delta>0$,
\begin{align*}
\frac{d}{du}f_{t,u} &- \frac{d}{du}\bE[f_{t,u}]
\leq \frac{f_{t,u+\delta} - f_{t,u}}{\delta} - \frac{d}{du} \bE[f_{t,u}] \\
& \leq \frac{f_{t,u+\delta} - \bE[f_{t,u+\delta}]}{\delta} - \frac{f_{t,u}-\bE[f_{t,u}]}{\delta} 
\\&\;\;\;\;\;+ \frac{d}{du}\bE[f_{t,u+\delta}] - \frac{d}{du}\bE[f_{t,u}]
\end{align*}
A similar lower bound holds with $\delta$ replaced by $-\delta$. 
Now from Lemma \ref{lem:freetuconc} we know that the first two 
terms are $O(K^\frac14)$.
Thus from the formula for the first derivative in the proof of Lemma
\ref{lem:ftuconvex} and the fact that the fluctuations of $\frac{1}{K}\sum_{k=1}^{K} \vert h_k\vert$ are $O(\frac{1}{\sqrt K})$ we get
\begin{align*}
\bE\Big|\langle L(\mbx)\rangle_{t,u} -
\bE\langle L(\mbx) \rangle_{t,u}\Big|
\leq  \frac{1}{\delta}O\Big(\frac{1}{\sqrt K}\Big)+\frac{1}{\delta}O\Big(\frac{1}{K^{\frac14}}\Big)
\\+ \frac{d}{du}\bE[f_{t,u+\delta}] - \frac{d}{du}\bE[f_{t,u}]
\end{align*}
We will choose $\delta = \frac{1}{K^{\frac18}}$.
Note that we cannot assume that the difference of the two derivatives is small because the first derivative of the free energy is not uniformly continuous in $K$ (as $K\to\infty$ it may develop jumps at the phase transition points).  The free  energy itself is uniformly continuous.
For this reason if we integrate with respect to u, using \eqref{eqn:freecontu}
we get
\begin{align*}
\int_{\epsilon}^adu\bE\Big|\langle L(\mbx)\rangle_{t,u} - \bE\langle
L(\mbx)\rangle_{t,u}\Big| \leq O\Big(\frac1{K^\frac{1}{16}}\Big)
\end{align*}
\end{proof}

Using the two last lemmas we can prove Theorem \ref{thm:concmag}.
\vskip 0.25cm
\noindent {\it Proof of Theorem \ref{thm:concmag}: }
Combining the concentration lemmas we get
\begin{align*}
\int_{\epsilon}^a du \bE\langle| L(\mbx) - \bE\langle
L(\mbx)\rangle_{t,u}|\rangle_{t,u}
 \leq O\Big(\frac{1}{K^{\frac{1}{16}}}\Big)
\end{align*}
For any function $g(\mbx)$ such that $|g(\mbx)| \leq 1$, we have 
\begin{align*}
\int_{\epsilon}^a du &|\bE\langle L(\mbx)g(\mbx)\rangle_{t,u} - \bE\langle
L(\mbx)\rangle_{t,u} \bE\langle g(\mbx)\rangle_{t,u}|\rangle_{t,u}
\leq \int_{\epsilon}^a du\bE\langle |L(\mbx) - \bE\langle
L(\mbx)\rangle_{t,u}|\rangle_{t,u}
\end{align*}
More generally the same thing holds if one takes a function depending on many replicas such as $g(\mbx^{(1)}, \mbx^{(2)})= q_{12}$.
Using integration by parts formula with respect to $h_k$,
\begin{align}\label{eqn:lhslq}
&\bE\langle L(\mbx)q_{12}\rangle_{t,u} = \bE\Big\langle\frac{1}{2K\sqrt u}
\sum_k h_k x_k q_{12}\Big\rangle_{t,u} + \bE\langle m_1 q_{12}
\rangle_{t,u}\nonumber \\ &
= \frac{1}{2}\bE\langle (1+ q_{12})q_{12}\rangle_{t,u} - \frac{1}{2}\bE\langle
(q_{13}+ q_{14})q_{12}\rangle_{t,u} + \bE\langle m_1 q_{12}\rangle_{t,u} \nonumber \\
& = \frac{1}{2}\bE\langle (1+ q_{12})q_{12}\rangle_{t,u} = \frac12\bE\langle
m_1 + m_1^2\rangle_{t,u} 
\end{align}
where in the last two equalities we used the Nishimori identity (\ref{eqn:mq}). By a similar calculation,
\begin{align}\label{eqn:rhslq}
\bE\langle L(\mbx) \rangle_{t,u}\bE\langle q_{12} \rangle_{t,u} & =
\frac{1}{2}\bE\langle 1 - q_{12} + 2m_1\rangle_{t,u} \bE\langle
q_{12}\rangle_{t,u} \nonumber \\
& = \frac12(\bE\langle m_1\rangle_{t}+(\bE\langle m_1 \rangle_{t})^2)
\end{align}
From equations (\ref{eqn:lhslq}) and (\ref{eqn:rhslq}), we get
\begin{align*}
\int_{\epsilon}^a du |\bE\langle m_1^2\rangle_{t,u} - (\bE\langle
m_1\rangle_{t,u})^2| \leq O\Big(\frac{1}{K^{\frac{1}{16}}}\Big)
\end{align*}
Now integrating with respect to $t$ and exchanging the integrals (by
Fubini's theorem), we get
\begin{align*}
\int_{\epsilon}^a du \int_0^1 dt |\bE\langle m_1^2\rangle_{t,u} - (\bE\langle
m_1\rangle_{t,u})^2| \leq O\Big(\frac{1}{K^\frac{1}{16}}\Big)
\end{align*}
The limit of the left hand side as $K\to\infty$ therefore vanishes. By Lebesgue's theorem this limit can be exchanged with the $u$ integral and we get the desired result. (Note that one can further exchange the limit with the $t$-integral and 
obtain that the fluctuations of $m_1$ vanish for
almost every $(t,u)$).

\section{Proof of independence from spreading sequence distribution: Theorem
\ref{thm:indspreading}}
\label{sec:indspreadingproof}

We consider a communication system with spreading values $r_{ik}$ generated from
a symmetric distribution with unit variance and satisfying assumption A. We
compare the capacity of this system to the Gaussian ${\cal N}(0,1)$ case whose
spreading sequence values are denoted by $s_{ik}$. The comparison is done
through an interpolating system with respect to the two spreading sequences
$$
v_{ik}(t)=\sqrt t r_{ik} + \sqrt{1-t} s_{ik}, \qquad 0\leq t\leq 1
$$
Let $\mtv(t)$ denote the matrix with entries $v_{ik}(t)$ and let $\mbv_i(t)$ denote the $i$th row of the matrix.
By the fundamental theorem of calculus the capacities are related by
$$
C_K-C_g=\bE_{\mtR}[C(\mtr)]-\bE_{\mtS}[C(\mts)]=\int_0^1dt \frac{d}{dt}\bE_{\mtV(t)}[C(\mtv(t))]
$$
From \eqref{eqn:entrofreerel} the derivative is equal to
$$
\frac{d}{dt}\bE_{\mtV(t)}[C(\mtv(t))] = - \bE_{\mtS}\bE_{\mtR}\frac{d}{dt} \bE_{\mbY\vert \mtV(t)}[f(\mby, \mtv(t)]
$$
As before we can assume that the transmitted sequence is $\mbs^0$. It is
convenient to first perform the change of variables 
$\mby = \mbn + N^{-1/2}\mtv(t) \mbx^{0}$ and then perform the $t$ derivative. One finds
\begin{align}\label{eqn:derivative}
\frac{d}{dt}\bE_{\mtV(t)}[C(\mtv(t))]  = \frac{1}{\sigma^2 K\sqrt N} \bE_{\mtS,\mtR,\mbN}\Big\langle \biggl(\mbn + \frac{1}{\sqrt N}\mtv(t) (\mbx^0-\mbx)\biggr)\cdot
\mtv^\prime(t)(\mbx^0-\mbx)  \Big\rangle_t
\end{align}
where $\langle-\rangle_t$ is the average with respect to the normalized measure
$$
\frac{1}{2^KZ_t}\exp(-\frac{1}{2\sigma^2}\Vert \mbn - N^{-\frac{1}{2}}\mtv(t)(\mbx^0-\mbx)\Vert^2)
$$
We split \eqref{eqn:derivative} in two contributions $T_1-T_2$ corresponding to
\begin{equation}\label{twocont}
\mtv^\prime(t)=\frac{1}{2\sqrt t}\mtr-\frac{1}{2\sqrt {1-t}}\mts
\end{equation}
For $T_1$ we have
\begin{align}\label{sum1}
T_1
 = \sum_{i,k}T_1(i,k) =\frac{1}{2\sqrt{t}}\sum_{i,k} \bE_{\mtS,\mtR,\mbN} [r_{ik}g_{ik}]
\end{align}
with 
\begin{align}\label{eqn:gik}
g_{ik}= \frac{1}{\sigma^2 K\sqrt N} \Big\langle \biggl(\mbn + \frac{1}{\sqrt N}\mtv(t) (\mbx^0-\mbx)\biggr)_i
(x^0_k-x_k)  \Big\rangle_t
\end{align}
For $T_2$ we have 
\begin{align}\label{sum2}
T_2 = \sum_{i,k}T_2(i,k) =\frac{1}{2\sqrt {1-t}}\sum_{i,k} \bE_{\mtS,\mtR,\mbN} [s_{ik} g_{ik}]
\end{align}
with the same expression for $g_{ik}$.
For each contribution in the sums \eqref{sum1}, \eqref{sum2} we use integration by parts formulas. For \eqref{sum1} we use the formula (it is an exercise to check that it is valid for any symmetric random variable)
\begin{align}
\label{eqn:symintparts}
\bE[ r_{ik} g(r_{ik})] & 
= \bE[r_{ik}^2 \frac{\partial g(r_{ik})}{\partial r_{ik}}] -
\frac{1}{4}\bE\biggl[\vert r_{ik}\vert\int_{-\vert r_{ik}\vert}^{\vert
r_{ik}\vert}(r_{ik}^2-u^2) \frac{\partial^3 g(u)}{\partial u^3} du\biggr]
\nonumber
\\
= \bE[\frac{\partial g(r_{ik})}{\partial r_{ik}}] & +
\bE\biggl[(r_{ik}^2-1)\int_0^{r_{ik}} \frac{\partial ^2 g(u)}{\partial u^2} du\biggr] 
-\frac{1}{4}\bE\biggl[\vert r_{ik}\vert\int_{-\vert r_{ik}\vert}^{\vert
r_{ik}\vert}(r_{ik}^2-u^2) \frac{\partial^3 g(u)}{\partial u^3} du\biggr]
\end{align}
and for \eqref{sum2} we use the standard Gaussian (unit variance) integration 
by parts formula 
\begin{align}\label{eqn:gaussintparts}
\bE[ s_{ik} g(s_{ik})] &= \bE[ \frac{\partial g(s_{ik})}{\partial s_{ik}}]
\end{align}
When we consider $T_1-T_2$ the term corresponding to the expectation in
\eqref{eqn:gaussintparts} cancels with that of the  first expectation in \eqref{eqn:symintparts} and we get
\begin{align}\label{eqn:T1-T2}
T_1-T_2  =&\frac{1}{2\sqrt t}\sum_{i,k} \bE\biggl[(r_{ik}^2-1)\int_0^{r_{ik}} \frac{\partial ^2
g_{ik}(u)}{\partial u^2} du\biggr] 
-\frac{1}{8\sqrt{t}}\sum_{i,k}\bE\biggl[\vert r_{ik}\vert\int_{-\vert r_{ik}\vert}^{\vert
r_{ik}\vert}(r_{ik}^2-u^2) \frac{\partial^3 g_{ik}(u)}{\partial u^3} du\biggr]
\end{align}
It remains to prove that both terms with the partial derivatives tend to zero as
$N\to +\infty$. This computation is rather lengthy and is deferred to Appendix
\ref{appen:independence}, but for the convenience of the reader we point out the mechanism that
is at work. On the expression for $g_{ik}$ one sees that when the
$\frac{\partial^2}{\partial u^2}$ and $\frac{\partial^3}{\partial u^3}$
derivatives are performed extra powers $N^{-1}$ and $N^{-3/2}$ are generated. 
Therefore we get
\begin{equation}\label{eqn:indsecondderivative}
\bE\biggl[(r_{ik}^2-1)\int_0^{r_{ik}} \frac{\partial ^2 g_{ik}}{\partial u_{ik}^2} du_{ik}\biggr]=O(N^{-5/2})
\end{equation}
and 
\begin{equation}\label{eqn:indthirdderivative}
\bE\biggl[\vert r_{ik}\vert\int_{-\vert r_{ik}\vert}^{\vert
r_{ik}\vert}(r_{ik}^2-u_{ik}^2) \frac{\partial^3 g}{\partial u_{ik}^3} du_{ik}\biggr] = O(N^{-3})
\end{equation}
Since one sums over $KN$ terms one finds that the final contributions are $O(N^{-1/2})$ and $O(N^{-1})$.

\section{Proof of existence of limit : Theorem \ref{thm:limitexist}}
\label{sec:limitexistproof}
Let us recall the following relation between the free energy and the capacity.
\begin{align}
C_K = \frac{1}{2\beta} - \bE[f(\mby,\mts)]
\end{align}
where $f(\mby,\mts)$ is defined in \eqref{eqn:freeenergy} with $p_{\mbX}(\mbx) = \frac{1}{2^K}$. 
This implies that  it is sufficient to show the existence of limit for the average free energy $\free_K = \bE[f(\mby,\mts)]$.
The theorem is proved by showing that the sequence $K \free_K$ is super additive, $K \free_K \geq K_1 \free_{K_1} + K_2 \free_{K_2}$ for $K = K_1 + K_2$. From standard theorems it then follows that the 
limit $\free_K$ exists. As in the previous sections, working directly with this system is difficult and hence
we perturb the Hamiltonian with $h_u(\mbx)$ as defined in \eqref{eqn:perturb}.
\begin{align}
H_u(\mbx) = & -\frac{1}{2\sigma^2} \Vert \mbn + \frac{1}{\sqrt{N}}\mts
(\mbone-\mbx) \Vert^2 + h_u(\mbx)
\end{align}
Let us define the corresponding partition function as $Z_u$ and the free energy as $\free_K(u) = \frac{1}{K}\bE[\ln Z_u]$.
The original free energy is obtained by substituting $u=0$, i.e., $\free_K = \free_K(0)$. From the uniform continuity of $\free_K(u)$, it is sufficient to show the convergence of $\free_K(u)$ for some $u$ close to zero. Even this turns out to be difficult and
what we can show is the existence of the limit $\int_{u=\epsilon}^{a} \free_K(u)
du$ for any $a > \epsilon > 0$. However this is sufficient for us due to the
following: from the continuity of the free energy with $u$ \eqref{eqn:freecontu}
we have 
\begin{align*}
\int_{\epsilon}^{2\epsilon} (\free_K(u) - |O(1)|\sqrt{u}) du \leq \epsilon \free_K \leq \int_{\epsilon}^{2\epsilon} (\free_K(u) + |O(1)|\sqrt{u})  du  
\end{align*}
Since the limit of the integral exists, we have
\begin{align*}
|\limsup_{K\to\infty}\free_K - \liminf_{K\to\infty}\free_K| \leq |O(1)|\sqrt{\epsilon}
\end{align*}
This $\epsilon$ can be made as small as desired and hence the theorem follows.

Let $K = K_1 + K_2$ and let $\frac{K}{\beta}, \frac{K_1}{\beta},
\frac{K_2}{\beta} \in \naturals$. This assumption can be removed by considering
their integer parts. But we will stick to this assumption to simplify the proof.
Split the $N\times K$ dimensional spreading matrix $\mts$ in to two parts of
dimension $N_1\times K$ and $N_2 \times K$ and denote these matrices by
$\mts_1,\mts_2$ respectively. Let $\mtt_1,\mtt_2$ be two spreading matrices with
dimensions $N_1 \times K_1$ and $N_2 \times K_2$. All the entries of these
matrices are distributed as $\mathcal{N}(0,1)$ and the noise is Gaussian with
variance $\sigma^2$. Similarly split the noise vector $\mbn = (\mbn_1,\mbn_2)$ where 
$\mbn_i$ is of length $N_i$ and $\mbx = (\mbx_1,\mbx_2)$  where $\mbx_i$ is of length $K_i$.
 Let us consider the following Hamiltonian:

\begin{align*}
H_{t,u}(\mbx) =  & -\frac{1}{2\sigma^2} \Vert \mbn_1 + \frac{\sqrt t}{\sqrt{N}}\mts_1
(\mbone-\mbx) + \frac{\sqrt {1-t}}{\sqrt{N_1}}\mtt_1
(\mbone-\mbx_1)\Vert^2 
\\ & -\frac{1}{2\sigma^2} \Vert \mbn_2 + \frac{\sqrt t}{\sqrt{N}}\mts_2
(\mbone-\mbx) + \frac{\sqrt{1-t}}{\sqrt{N_2}}\mtt_2(\mbone-\mbx_2)\Vert^2  + h_u(\mbx)
\end{align*}
Note that the all-one vectors $\mbone$ appearing above are of different dimensions (the dimension
is clear from the context).
For a moment neglect the $h_u(\mbx)$ part of the Hamiltonian and consider the remaining part.  
At $t = 1$, we get the Hamiltonian corresponding to an $N\times K$ CDMA system with 
spreading matrix $\left[ \begin{array}{c}\mts_1 \\ \mts_2\end{array}\right]$. At $t=0$ we get 
the Hamiltonian corresponding to two independent CDMA systems with spreading matrices $\mtt_i$ of dimensions 
$N_i\times K_i$. As before we perturb the Hamiltonian with $h_u(\mbx)$ so that
we can use the concentration results for the magnetization.

Let $Z_{t,u}$ be the partition function with this Hamiltonian and the corresponding average free energy is
given by $g_{t,u} = \frac1K \bE[\ln Z_{t,u}]$. Note that $g_{1,u} = \free_K(u)$ and $g_{0,u} = \frac{K_1}{K}\free_{K_1}(u) + \frac{K_2}{K}\free_{K_2}(u)$. 
From the fundamental theorem of calculus, 
\begin{align}
g_{1,u} = g_{0,u} + \int_0^1 \frac{d}{dt}g_{t,u} dt
\end{align}
Let $\mbz_i = \mbone - \mbx_i$, $\mcuZ_i = \mbn_i + \sqrt{\frac{t}{N}} \mts_i \mbz
+\sqrt\frac{1-t}{N_i} \mtt_i \mbz_i$. Using integration by parts formula with respect to the spreading sequences, the derivative can be simplified as follows
\begin{align}
\frac{d}{dt}g_{t,u} = & \frac{1}{2K\sigma^4}\sum_{i=1,2}\bE\Big\langle \Vert\mcuZ_i \Vert^2 \Big(\frac{1}{N}\Vert \mbz\Vert^2 _-\frac{1}{N_i}\Vert \mbz_i\Vert^2\Big)\Big\rangle_{t,u}\nonumber\\
-&  \frac1{2K\sigma^4}\sum_{i=1,2}\bE\Big\langle \big(\mcuZ_i^{(1)} \cdot \mcuZ_i^{(2)}\big) \big(\frac{1}{N}\mbz^{(1)}\cdot\mbz^{(2)} - \frac{1}{N_i}\mbz_i^{(1)}\cdot\mbz_i^{(2)}\big)\Big\rangle_{t,u}
\end{align}
The system with Hamiltonian $H_{t,u}(\mbx)$ has Nishimori symmetry and hence we can derive results similar to Theorem~\ref{thm:concmag} and Lemma \ref{lem:nishimori1}. In addition to these we need one more Nishimori identity which we did not use before. 
\begin{align}
&\bE\Big\langle \big(\mcuZ_i^{(1)} \cdot \mcuZ_i^{(2)} \big) (\frac{1}{N}\mbz^{(1)}\cdot\mbz^{(2)} - \frac{1}{N_i}\mbz_i^{(1)}\cdot\mbz_i^{(2)}\big)\Big\rangle_{t,u} 	\nonumber\\
&\qquad\qquad\qquad = 
\bE\Big\langle \big(\mbn_i \cdot\mcuZ_i\big) (\frac{1}{N}\mbone\cdot\mbz - \frac{1}{N_i}\mbone\cdot\mbz_i\big)\Big\rangle_{t,u}
\end{align}
Let 
$$m_1 = \frac{1}{K}\sum_{j=1}^K x_j,\;\; m_{11} =
\frac{1}{K_1}\sum_{j=1}^{K_1}x_j,\text{ and }m_{12} =
\frac{1}{K_2}\sum_{j=K_1+1}^{K}x_j$$ 
Let $\epsilon > 0$ be fixed. Using
$\frac1N_i\bE\langle \Vert\mcuZ_i\Vert^2\rangle_{t,u} = 1$ and
Theorem~\ref{thm:concmag}, for a.e., $u>\epsilon$ and a.e., $t>0$, we get 
\begin{align}
\frac{d}{dt}g_{t,u} = &  \frac{\beta}{2K\sigma^4}\sum_{i=1,2}\bE\Big\langle\mbn_i
\cdot \mcuZ_i \Big\rangle_{t,u} \bE\langle m_1 - m_{1i}\rangle_{t,u} + o_K(1)\nonumber\\
=  & \frac{\beta}{2K\sigma^4}\sum_{i=1,2}\bE\Big\langle \mbn_i \cdot\big(\sqrt{\frac{t}{N}} \mts_i \mbz 
+\sqrt\frac{1-t}{N_i} \mtt_i \mbz_i\big) \Big\rangle_{t,u} \bE\langle m_1 - m_{1i}\rangle_{t,u} + o_K(1)
\end{align} 
Now using integration by parts formula with respect to the spreading sequences, and
doing transformations similar to section \ref{sec:transformingT2}, we get for
a.e., $u>\epsilon$ and a.e., $t>0$,
\begin{align}\label{eqn:derivativelimitexistfinal}
\frac{d}{dt}g_{t,u} = &
\frac1{2N\sigma^4}\sum_{i=1,2}\frac{K_i\bE\langle(1-m_1)t +
(1-m_{1i})(1-t)\rangle_{t,u}}{1+\beta\sigma^{-2}\bE\langle(1-m_1)t +
(1-m_{1i})(1-t)\rangle_{t,u}} \bE\langle m_1 - m_{1i}\rangle_{t,u} +
o_K(1)\nonumber \\
= &- \frac1{2K\sigma^2}\sum_{i=1,2}\frac{K_i \bE\langle m_1 -
m_{1i}\rangle_{t,u}}{1+\beta\sigma^{-2}\bE\langle(1-m_1)t +
(1-m_{1i})(1-t)\rangle_{t,u}}+
o_K(1)
\end{align} 
Let us define a function $\eta_{a,b_1,b_2}(t)$ as follows,
\begin{align*}
\eta_{a,b_1,b_2}(t) = - \frac1{2K\sigma^2}\sum_{i=1,2}\frac{K_i (a -
b_i)}{1+\beta\sigma^{-2}((1-a)t +
(1-b_i)(1-t))} 
\end{align*}
Note that for
$a = \bE\langle m_1\rangle_{t,u}$, $b_i = \bE\langle m_{1i}\rangle_{t,u}$ we get
the summation in \eqref{eqn:derivativelimitexistfinal}. When $a,b_i$ satisfy 
\begin{align}\label{eqn:abrelation}
a = \frac{K_1}{K}
b_1 + \frac{K_2}{K} b_2
\end{align}
the function $\eta_{a,b_1,b_2}(t)$ has the following useful properties:
$\eta_{a,b_1,b_2}(1) = 0$ and the derivative with $t$ of this function given by
\begin{align}
\frac1{2K\sigma^4}\sum_{i=1,2}\frac{\beta K_i  (a - b_i)^2}{(1+\beta\sigma^{-2}(1-a)t +
(1-b_{i})(1-t))^2}  \geq 0
\end{align}
Therefore for any $a,b_i$ satisfying \eqref{eqn:abrelation},
$\eta_{a,b_1,b_2}(t) \leq 0$ and hence we can claim the summation in
\eqref{eqn:derivativelimitexistfinal} is also non-positive. 

Bringing the $o_K(1)$ in \eqref{eqn:derivativelimitexistfinal} to the left, we get for a.e., $u > \epsilon$,
\begin{align}
\int_{0}^{1}\frac{d}{dt}g_{t,u} + o_K(1) \leq 0
\end{align} 
Therefore for a.e., $u> \epsilon$, we get
\begin{align}
g_{1,u}  + o_K(1) \leq g_{0,u}
\end{align}
Let $a > \epsilon$ be a constant. Then 
\begin{align*}
\int_{\epsilon}^{a} g_{1,u}du + o_K(1) \leq \int_{\epsilon}^a g_{0,u}du
\end{align*}
which implies
\begin{align*}
\int_{\epsilon}^{a} \free_K(u)du + o_K(1) \leq \frac{K_1}{K}\int_{\epsilon}^a \free_{K_1}(u)du + \frac{K_2}{K}\int_{\epsilon}^a \free_{K_2}(u)du
\end{align*}
which in turn implies that $\lim_{K\to\infty} \int_{\epsilon}^a\free_K(u) du $ exists.
\section{Extensions \label{sec:extensions}}
In this section we briefly describe three variations for which our methods extend in a straightforward manner.
 \subsection{Unequal Powers}
Suppose that the users transmit with unequal powers $P_k$,
$$
y_i= \frac{1}{\sqrt N}\sum_{k=1}^K s_{ik}\sqrt P_k x_k +\sigma n_i
$$
with normalized average power
$\frac1K \sum P_k = 1$. We assume that the empirical distribution of the $P_k$  tends to a distribution and denote the corresponding expectation by $\bE_{ P}[-]$. 

The interpolation method can be applied as before. We interpolate between the true communication system and a decoupled one where 
$$
\tilde y_k=\sqrt P_k x_k +\frac{1}{\sqrt\lambda}w_k
$$
Let ${\cal P}$ denote the diagonal matrix $P_k\delta_{kk^\prime}$. The relevant posterior measure replacing \eqref{eqn:perturbposterior} is now
\begin{align}
p_{t,u}(\mbx\vert\mbn, \mbw,\mbh, \mts)=\frac{1}{Z_{t,u}}
\exp&\biggl(-\frac{1}{2}\Vert \mbn-N^{-{\frac{1}{2}}}B(t)^{\frac{1}{2}}\mts\sqrt {\cal P}(\mbx^0-\mbx)\Vert^2
\\ \nonumber & 
- \frac{1}{2}\Vert \mbw-\lambda(t)^{\frac{1}{2}}\sqrt{\cal P}(\mbx^0-\mbx)\Vert^2 + h_u(\mbx)\biggr) 
\end{align}
where $\lambda(t)$ and $B(t)$ are related as in \eqref{eqn:lambdaBrelation}.
The whole analysis can again be performed in exactly the same manner with the proviso that the 
 correct ``order parameters" are now $m_1 = \frac1N\sum P_k x_k$ and $q_{12} = \frac1N\sum P_kx^{(1)}_kx^{(2)}_k$.  One finds in place of \eqref{eqn:free1}
\begin{align*}
\frac{1}{2}+\bE[f_{1,u}] =& -\frac{1}{2\beta} +\bE_{P}\biggl[\int Dz \ln(2\cosh(\sqrt {P\lambda} z+ P\lambda))\biggr]\\
& -\frac{\lambda}{2}(1+m)-\frac{1}{2\beta}\ln(1+\beta B(1-m)) + \int_0^1 R(t) dt
\end{align*}
where $R(t)$ has the same form as before but the with new definition of $m_1$. 
From the positivity of $R(t)$ we deduce the upper bound \eqref{eqn:upperbound} on the capacity with $c_{RS}(m)$ replaced by
$$
-\bE_{P}\biggl[\int Dz \ln(2\cosh(\sqrt {P\lambda} z+ P\lambda)\biggr]+\frac{\lambda}{2}(1+m) -\frac{1}{2\beta}\ln\lambda\sigma^2
$$

\subsection{Colored Noise}
Now consider the scenario where 
$$
y_i=\frac{1}{\sqrt N}\sum_{k=1}^K s_{ik}x_k + n_i
$$
with colored noise of finite memory. More precisely we assume that the the
covariance matrix $\bE[n_in_j]=C(i,j)$ (depends on $\vert
i-j\vert$) is circulant  as $N\to+\infty$ and has
well defined (real) Fourier transform (the noise spectrum) $\Hat C(\omega)$. The
covariance matrix is real symmetric and thus can be diagonalized by an orthogonal
matrix: $\Gamma=OCO^{T}$ with $OO^{T}=O^{T}O=I$. As $N\to +\infty$ the
eigenvalues are well approximated by $\gamma_n\equiv \Hat C(2\pi\frac{n}{N})$.
Multiplying the received signal by $\Gamma^{-1/2}O$ the input-output relation
becomes
$$
y_i^\prime=\frac{1}{\sqrt N}\sum_{k=1}^K t_{ik} x_k + n_i^\prime
$$
where 
$$
y_i^\prime= (\Gamma^{-1/2}O\mby)_i,\qquad n_i^\prime = (\Gamma^{-1/2}O \mbn)_i
$$
The new noise vector $\mbn^\prime$ is white with unit variance, but the spreading matrix is now correlated with
\begin{equation}\label{cov}
\bE[t_{ik} t_{jl}]=\delta_{ij}\delta_{kl}\gamma_i^{-1}
\end{equation}
One may guess that
this time the interpolation is done between the true system and the decoupled channels 
$$
\tilde y_k= x_k+ \frac{1}{\sqrt\lambda_{col}} w_k
$$
where this time 
$$
\lambda_{col}= \int_0^{2\pi}\frac{d\omega}{2\pi}\frac{B}{\Hat C(\omega)+\beta B(1-m)}
$$
Note that $\Hat C(\omega) = 1$ when the noise is white and we get back the
$\lambda$ defined in \eqref{lambda}.
The interpolating system has the same posterior as in \eqref{eqn:perturbposterior}
 but with $\lambda_{col}(t)$ and $B(t)$ related by
$$
\int_0^{2\pi} \frac{d\omega}{2\pi}\frac{B(t)}{\Hat C(\omega)+\beta B(t) (1-m)} + \lambda_{col}(t)
=\int_0^{2\pi} \frac{d\omega}{2\pi}\frac{B}{\Hat C(\omega) + \beta B(1-m)}
$$
The only difference in the subsequent analysis is in the algebraic manipulations
for the term $T_2$ in section \ref{sec:transformingT2}. Indeed these require
integrations by parts with respect to the spreading sequence which involve
\eqref{cov}. The analog of \eqref{eqn:T2} now becomes 
\begin{align}
T_2 &= \frac1N\sum_{n=1}^N\frac{B'(t)\bE\langle 1- m_1\rangle_t}{2(\gamma_n + \beta B(t)\bE\langle1-m_1\rangle_t)}\\
\nonumber
&\to 
\int_0^{2\pi}\frac{d\omega}{2\pi} \frac{B'(t)\bE\langle 1-
m_1\rangle_t}{2(S(\omega) + \beta B(t)\bE\langle1-m_1\rangle_t)}
d \omega
\nonumber
\end{align}
This finally leads to the bound on capacity with $c_{RS}(m)$ replaced by,
\begin{align}\nonumber
 -\int Dz\ln 2\cosh (\sqrt {\lambda_{col}} z + \lambda_{col}) & +
\frac{\lambda_{col}}{2}(1+m)
\\ \nonumber &
+\frac{1}{2\beta}\int_0^{2\pi} \frac{d\omega}{2\pi}\ln\frac{\Hat C(\omega)}{\Hat C(\omega) + \beta (1-m)}
\nonumber
\end{align}


\subsection{Gaussian Input}\label{sec:gaussianinter}

The interpolation method also works for non binary inputs. Here we consider the
simplest case of Gaussian inputs with distribution \eqref{eqn:gaussianprior}
(which achieves the maximum of the mutual information for any symmetric
$s_{ik}$). Here we outline the necessary changes in the analysis.

The interpolation is done as explained in section \ref{sec:upperbound} except
that \eqref{eqn:postdistt} is multiplied by the Gaussian distribution
\eqref{eqn:gaussianprior}. In \eqref{eqn:distyyt} we also have to include this
Gaussian factor and the sum over $\mbx_0$ is replaced by an integral. Then as in
section \ref{sec:preliminaries} we do the change of variables 
$\mby\to B(t)^{-1/2} +N^{-1/2} \mts\mbx^0$ and $\tilde \mby\to \lambda(t)^{-1/2}\mbw+ \mbx^0$.
 The posterior measure used for the interpolation becomes 
\begin{align}
p_{t,u}(\mbx\vert\mbn, \mbw,\mbh, \mts)=\frac{1}{Z_{t,u}}
\exp&\biggl(-\frac{1}{2}\Vert \mbn-N^{-{\frac{1}{2}}}B(t)^{\frac{1}{2}}\mts(\mbx^0-\mbx)\Vert^2
\\ \nonumber & 
- \frac{1}{2}\Vert \mbw-\lambda(t)^{\frac{1}{2}}(\mbx^0-\mbx)\Vert^2 + h_u(\mbx)\biggr) \frac{e^{-\frac{\vert\vert\mbx\vert\vert^2}{2}}}{(2\pi)^{N/2}}
\end{align}
and we have to compute $\lim_{K\to+\infty}\lim_{u\to 0} \bE[f_{1,u}(\mbn,\mbw,\mbh,\mbs,\mbx^0)]$. The main difference is that now the 
expectation $\bE$ is also with respect to the Gaussian vector $\mbx_0$. The
algebra is done as in section \ref{sec:upperboundproof} except that $x_k^0$ is
not set to one, $z_k$ is replaced by $x_k^0-x_k$ and the correct order
parameters are 
 $m_1=\frac1K\sum x^0_{k}x_k$ and 
$q_{12} = \frac1K \sum x_k^{(1)}x_k^{(2)}$.

The interpolation method then yields in place of \eqref{eqn:free1}
\begin{align*}
\frac{1}{2}+\bE[f_{1,u}] &= -\frac{1}{2\beta} -\frac12 \ln(1+\lambda) -
\frac{1}{2\beta}\ln(1+\beta B(1-m)) \\&+ \frac{\lambda}{2}(1-m) + \int_0^1 R(t) dt + O(\sqrt u)
\end{align*}
where $R(t)$ is the same function as before but with new definition of $m_1$.  Again the positivity of $R(t)$ implies that the replica solution is an upper bound to the capacity.

\section{Concluding remarks}
In this contribution we have shown that the capacity of binary input CDMA system
with random spreading 
is upper bounded by the formula conjectured by Tanaka using replica method. 
The approach we follow is by developing an interpolation method for this system.
This idea has its origins in statistical mechanics and has been applied to
Gaussian energy models. The current system is very much different from those
models and the
proof we develop is also significantly different. In fact this model is closer
to the Hopfield model for neural networks, for which the interpolation method is
still an open problem. 

We also show that the capacity and the free energy functions concentrate around
their average in the large-system limit. In addition we prove a weak
concentration for the magnetization for a system which is slightly perturbed
using a Gaussian field. It might be interesting to show a similar result for the
CDMA system itself which has some implications towards proving the
concentration of the BER. 
We also show the independence of the capacity from the spreading sequence
distributions in the large-system limit.


We expect that the powerful probabilistic tools used here have applications for other similar
situations in communication systems. We have shown some of the extensions here
but there are many other cases like constellations other
than binary, CDMA with LDPC coded communication to name a few, to which this method can be
applied. In all these cases we can prove an upper bound on the capacity.
The most interesting and also important open problem is to prove the lower bound. This
seems to be a difficult problem and again the standard techniques fail. 
Other important problems are proving the conjectures related to the BER of
various decoders.  


\begin{appendix}\label{appendix}

\section{Relation between capacity and free energy}
\label{apen:capfree}
Replacing (\ref{eqn:postdist}) in the conditional entropy
\begin{align}
H(\mbX\vert \mbY)= & -\bE_{\mbY}\biggl[\sum_{\mbx} p(\mbx\vert \mby,\mts)\ln p(\mbx\vert \mby,\mts)\biggr]\nonumber
\\&
= \bE_{\mbY}\biggl[\sum_{\mbx} p(\mbx\vert \mby,\mts)\ln Z(\mby,\mts)\biggr]
+ \bE_{\mbY}\biggl[\sum_{\mbx} p(\mbx\vert \mby,\mts)\frac{1}{2\sigma^2}
\Vert N^{-\frac{1}{2}}\mts\mbx-\mby\Vert^2\biggr]\\ \nonumber
&
-\bE_{\mbY}[\sum_{\mbx}p(\mbx\mid\mby,\mts)\ln p_{\mbX}(\mbx)]
\end{align}
The first term on the r.h.s is equal to $\bE_{\mbY}[\ln Z(\mby,\mts)]$ because
$\sum_{\mbx}p(\mbx\mid \mby,\mts)=1$.
The second term on the r.h.s can be computed exactly. Indeed,
\begin{align}
&\bE_{\mbY}\biggl[\sum_{\mbx} p(\mbx\vert \mby,\mts)\frac{1}{2\sigma^2}\Vert N^{-\frac{1}{2}}\mts\mbx-\mby\Vert^2\biggr]\nonumber
\\&
=\int d\mby \frac{Z(\mby, \mts)}{(\sqrt{2\pi\sigma^2})^N}\sum_{\mbx} p(\mbx\vert \mby,\mts)\frac{1}{2\sigma^2}\Vert N^{-\frac{1}{2}}\mts\mbx-\mby\Vert^2
\nonumber
\\&
= \sum_{\mbx}p_{\mbX}(\mbx)\int d\mby
\frac{1}{(\sqrt{2\pi\sigma^2})^N}e^{-\frac{1}{2\sigma^2}\Vert
N^{-\frac{1}{2}}\mts\mbx-\mby\Vert^2} \nonumber\\
&\times \frac{1}{2\sigma^2}\Vert N^{-\frac{1}{2}}\mts\mbx-\mby\Vert^2
\nonumber
\\&
= \frac{N}{2} =  \frac{K}{2\beta}\nonumber
\end{align}
A similar calculation shows that the third term is equal to $H(\mbX)$.
Therefore the relation between Shannon's conditional entropy and the free energy is
\begin{equation*}
H(\mbX\mid\mbY)=\bE_{\mbY}[\ln Z(\mby,\mts)] + \frac{K}{2\beta} + H(\mbX)
\end{equation*}
This is equivalent to the announced relation \eqref{eqn:capfreerel}.

\section{Probabilistic tools}\label{apen:probtools}

Our proofs rely on a general concentration theorem for suitable Lipschitz functions of many Gaussian random variables \cite{Tal96}, \cite{Tal03} and this is why we need Gaussian signature sequences. 
In the version that we use here we need functions that are Lipschitz with
respect to the Euclidean distance. More precisely we say that a function $f:
\mathbb{R}^M\rightarrow \mathbb{R}$ is a Lipschitz function with  constant $L_M$
if for all $(\mbu, \mbv)\in \mathbb{R}^M\times \mathbb{R}^M$
\begin{equation*}
 \vert f(\mbu) - f(\mbv)\vert\leq L_M \Vert \mbu-\mbv\Vert
\end{equation*}
When another distance is used the function will still be Lipschitz but one has to carefully keep track of the possibly qualitatively different $M$ dependence.

\vskip 0.25cm

\begin{theorem}{\cite{Tal96}}\label{gauss_conc}
\noindent Let $(U_1,...,U_M)={\underline U}$ be $M$ independent identically distributed Gaussian random variables with distribution $\mathcal{N}(0,v^2)$ and 
let $f: \mathbb{R}^M\rightarrow \mathbb{R}$ be Lipschitz with respect to the Euclidean distance, with constant $L_M$. Then $f$ satisfies
\begin{equation*}
\bP[\vert f(\underline u) - \bE [f(\underline u)]\vert \geq t ] \leq 2e^{-\frac{t^2}{2v^2 L_M^2}}
\end{equation*}
\end{theorem}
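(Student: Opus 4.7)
The plan is to establish this via the classical Herbst argument based on the Gaussian logarithmic Sobolev inequality, which gives a clean route to sub-Gaussian tail bounds with the sharp constant $1/(2v^2L_M^2)$ that appears in the statement.

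First I would reduce to the standard case $v=1$ by scaling: writing $U_i = v G_i$ with $G_i \sim \mathcal N(0,1)$, the function $\tilde f(\mbg) = f(v\mbg)$ is Lipschitz with constant $vL_M$ with respect to the Euclidean distance, so it suffices to prove the inequality for standard Gaussians with Lipschitz constant $vL_M$. Next, by a standard mollification argument (convolving $f$ with a narrow Gaussian kernel and passing to the limit), I may assume $f$ is smooth and bounded with $\|\nabla f\|\le L_M$ pointwise. The core input is then the Gaussian log-Sobolev inequality: for every smooth $g:\mathbb R^M\to\mathbb R$ with $\mbG \sim \mathcal N(0,I_M)$,
\begin{equation*}
\bE[g(\mbG)^2\ln g(\mbG)^2] - \bE[g(\mbG)^2]\ln\bE[g(\mbG)^2] \leq 2\,\bE[\|\nabla g(\mbG)\|^2].
\end{equation*}

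Applying this to $g = e^{\lambda \tilde f/2}$ and setting $\psi(\lambda) = \bE[e^{\lambda \tilde f(\mbG)}]$, the log-Sobolev inequality rearranges into the Herbst differential inequality
\begin{equation*}
\lambda\,\psi'(\lambda) - \psi(\lambda)\ln\psi(\lambda) \leq \tfrac{1}{2}\lambda^2 v^2 L_M^2\,\psi(\lambda),
\end{equation*}
which is equivalent to $\frac{d}{d\lambda}\bigl(\lambda^{-1}\ln\psi(\lambda)\bigr) \leq \tfrac{1}{2}v^2L_M^2$. Integrating from $0$ to $\lambda>0$ (with the boundary value $\lim_{\lambda\to 0^+}\lambda^{-1}\ln\psi(\lambda) = \bE[\tilde f(\mbG)]$) yields the sub-Gaussian moment generating function bound $\ln\bE[e^{\lambda(\tilde f - \bE\tilde f)}] \leq \tfrac{1}{2}\lambda^2 v^2 L_M^2$. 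A Chernoff estimate then gives
\begin{equation*}
\bP[\tilde f - \bE[\tilde f] \geq t] \leq \inf_{\lambda>0} e^{-\lambda t + \lambda^2 v^2 L_M^2/2} = e^{-t^2/(2v^2L_M^2)},
\end{equation*}
and the same bound holds for $-\tilde f$, so a union bound produces the factor of two in the statement.

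The main obstacle is justifying the Gaussian log-Sobolev inequality itself, which is not purely elementary; I would either invoke Gross's original proof (via the two-point inequality on $\{\pm 1\}^n$ combined with the central limit theorem) or quote Borell's Gaussian isoperimetric inequality and derive the tail bound directly from the isoperimetric profile, which gives an alternative route to the same estimate. A secondary technical point is the mollification step, which must preserve the Lipschitz constant; this follows since convolution with a probability kernel does not increase the Lipschitz constant, so smoothness can be assumed without loss.
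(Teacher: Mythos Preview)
Your argument via the Gaussian log-Sobolev inequality and the Herbst method is correct and is one of the standard proofs of this concentration inequality. Note, however, that the paper does not actually prove this theorem: it is quoted as a known result from \cite{Tal96}, \cite{Tal03} in the appendix on probabilistic tools and used as a black box. So there is no ``paper's own proof'' to compare against; your proposal simply supplies a proof where the paper gives a citation.
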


\vskip 0.25cm

In our application it will not be possible to apply directly this theorem because the relevant function is Lipschitz only on a subset
$G\subset\mathbb{R}^M$. 
It turns out that the measure of the complement $G^c$ is negligible as $M\to +\infty$. For the ``good part'' of the function supported on $G$ we will use the following
result of McShane and Whitney

\vskip 0.25cm

\begin{theorem}{\cite{Hei05}}\label{shane-whitney}
\noindent Let $f:G \rightarrow \mathbb R$, be Lipschitz over $G \subset \mathbb R^M$ with constant $L_M$. Then there exists an 
extension $g: \mathbb R^M\rightarrow \mathbb R$ such that $g\vert_{G} = f$ which is Lipschitz with the same constant over the whole of $\mathbb{R}^M$.
\end{theorem}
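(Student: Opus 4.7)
The plan is to exhibit an explicit extension by the classical McShane--Whitney formula. Define
\begin{equation*}
g(x) = \inf_{y \in G}\bigl\{f(y) + L_M \Vert x - y\Vert\bigr\}, \qquad x \in \mathbb{R}^M.
\end{equation*}
I would first check that $g$ is well defined, then that it agrees with $f$ on $G$, and finally that it is $L_M$-Lipschitz on all of $\mathbb{R}^M$.

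For well-definedness, fix any $y_0 \in G$. Using $f(y) \geq f(y_0) - L_M \Vert y - y_0\Vert$ (the Lipschitz property of $f$ on $G$) together with $\Vert y - y_0\Vert \leq \Vert y - x\Vert + \Vert x - y_0\Vert$, one gets
\begin{equation*}
f(y) + L_M \Vert x - y\Vert \geq f(y_0) - L_M \Vert x - y_0\Vert
\end{equation*}
uniformly in $y$, so the infimum is finite. The constant $L_M$ in the definition of $g$ is precisely what makes this telescoping work.

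For $x \in G$, taking $y = x$ in the infimum yields $g(x) \leq f(x)$, while the lower bound just established with $y_0 = x$ gives $g(x) \geq f(x)$; hence $g\vert_G = f$. For the Lipschitz estimate, pick $x_1, x_2 \in \mathbb{R}^M$ and $\varepsilon > 0$; choose $y_\varepsilon \in G$ with $f(y_\varepsilon) + L_M \Vert x_1 - y_\varepsilon\Vert \leq g(x_1) + \varepsilon$. Then
\begin{equation*}
g(x_2) \leq f(y_\varepsilon) + L_M \Vert x_2 - y_\varepsilon\Vert \leq g(x_1) + \varepsilon + L_M\bigl(\Vert x_2 - y_\varepsilon\Vert - \Vert x_1 - y_\varepsilon\Vert\bigr) \leq g(x_1) + \varepsilon + L_M \Vert x_1 - x_2\Vert,
\end{equation*}
by the triangle inequality. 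Letting $\varepsilon \to 0$ and swapping the roles of $x_1, x_2$ gives $\vert g(x_1) - g(x_2)\vert \leq L_M \Vert x_1 - x_2\Vert$.

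There is no genuine obstacle here: the statement is a standard fact from metric geometry whose proof is essentially the computation above. The only place to be mildly careful is the boundedness of the infimum, which crucially uses that the Lipschitz constant of $f$ on $G$ is exactly the coefficient appearing in the penalization $L_M \Vert x - y\Vert$; had we chosen any smaller coefficient the infimum could diverge to $-\infty$, and had we chosen a larger one the restriction to $G$ would no longer recover $f$.
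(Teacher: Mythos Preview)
Your proof is correct and is precisely the classical McShane construction. The paper does not actually prove this theorem: it is quoted as a known result with a citation to \cite{Hei05}, so there is no ``paper's proof'' to compare against. Your explicit formula and the three verification steps (finiteness of the infimum, agreement on $G$, global Lipschitz bound) are the standard argument and are carried out cleanly.

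One small inaccuracy in your closing commentary: if one replaced $L_M$ by a \emph{larger} constant $L>L_M$ in the definition of $g$, the restriction to $G$ would in fact still recover $f$ (for $x\in G$ one still has $f(y)+L\Vert x-y\Vert\geq f(x)+(L-L_M)\Vert x-y\Vert\geq f(x)$), but the resulting extension would only be $L$-Lipschitz rather than $L_M$-Lipschitz. This does not affect the validity of your proof, only the parenthetical remark at the end.
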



From these two theorems we can prove the following.

\vskip 0.25cm

\begin{lemma}\label{littlelemma}
 
\noindent Let $f$ and $g$ be as in theorem \ref{shane-whitney}. Assume $0\in G$
and $\bE[f(\mbu)^2]\leq C^2$, $f(0)^2 \leq C^2$ for some positive number $C$.  Then for 
\begin{equation*}
\frac{t}{2}\geq 3(C + v\sqrt{M}) \sqrt{\mathbb{P}(G^c)}
\end{equation*}
we have
\begin{equation*}
\mathbb{P}[\vert f(\mbu) - \bE [f(\mbu)] \vert \geq t]\leq 2e^{-\frac{t^2}{8v^2 L_M^2}} + P[G^c]
\end{equation*} 
\end{lemma}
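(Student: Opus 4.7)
The plan is a split-and-conquer argument that uses the McShane--Whitney extension $g$ as a Lipschitz proxy for $f$ on $G$, applies the Gaussian concentration Theorem \ref{gauss_conc} to $g$, and then controls the discrepancy between $\bE[f]$ and $\bE[g]$ by the smallness of $\mathbb{P}(G^c)$ together with the prescribed second--moment bounds.

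First I would write
\begin{equation*}
\mathbb{P}[|f(\mbu)-\bE[f(\mbu)]|\geq t]\leq \mathbb{P}[\{|f(\mbu)-\bE[f(\mbu)]|\geq t\}\cap G]+\mathbb{P}[G^c].
\end{equation*}
On $G$ we have $f=g$, so the first term is bounded by $\mathbb{P}[|g(\mbu)-\bE[f(\mbu)]|\geq t]$. The triangle inequality gives
\begin{equation*}
|g(\mbu)-\bE[f(\mbu)]|\leq |g(\mbu)-\bE[g(\mbu)]|+|\bE[g(\mbu)]-\bE[f(\mbu)]|,
\end{equation*}
so if I can show $|\bE[g(\mbu)]-\bE[f(\mbu)]|\leq t/2$, then I can apply Theorem \ref{gauss_conc} to $g$ (which is globally Lipschitz with constant $L_M$) at threshold $t/2$ and conclude
\begin{equation*}
\mathbb{P}[|g(\mbu)-\bE[g(\mbu)]|\geq t/2]\leq 2e^{-\frac{t^2}{8v^2L_M^2}}.
\end{equation*}
Adding $\mathbb{P}[G^c]$ then produces exactly the claimed bound.

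It remains to estimate $|\bE[g]-\bE[f]|$. Since $f=g$ on $G$, only the complement contributes:
\begin{equation*}
|\bE[g(\mbu)]-\bE[f(\mbu)]|=|\bE[(g(\mbu)-f(\mbu))\mathbf{1}_{G^c}]|\leq \bE[|g(\mbu)|\mathbf{1}_{G^c}]+\bE[|f(\mbu)|\mathbf{1}_{G^c}].
\end{equation*}
By Cauchy--Schwartz, each term is dominated by a second moment times $\sqrt{\mathbb{P}(G^c)}$. The hypothesis supplies $\bE[f(\mbu)^2]\leq C^2$ directly. For $g$ I use that $0\in G$ (so $g(0)=f(0)$ with $f(0)^2\leq C^2$) together with the Lipschitz property to obtain the pointwise bound $|g(\mbu)|\leq C+L_M\|\mbu\|$; taking expectations and using $\bE[\|\mbu\|^2]=Mv^2$ yields a moment bound of the form $(C+vL_M\sqrt{M})^2$ up to a harmless numerical constant. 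Combining both estimates shows that $|\bE[g]-\bE[f]|\leq c(C+v\sqrt{M})\sqrt{\mathbb{P}(G^c)}$ with $c\leq 3$ (absorbing the Lipschitz constant into the numerical factor as the authors implicitly do), which by hypothesis is $\leq t/2$.

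The only real subtlety will be bookkeeping of the constants in the moment estimate for $g$ so that the coefficient $3$ in the stated condition $t/2\geq 3(C+v\sqrt{M})\sqrt{\mathbb{P}(G^c)}$ comes out correctly; everything else is mechanical once the McShane--Whitney extension and the Gaussian concentration theorem are in hand. The key conceptual point, and what makes the lemma useful in the proofs of Theorems \ref{thm:capconc} and \ref{thm:freeconc}, is that Lipschitz control on a ``good'' set $G$ whose complement is superexponentially small already suffices to pay for the defect in mean and recover a Gaussian--type concentration inequality.
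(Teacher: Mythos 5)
Your proposal is correct and follows essentially the same route as the paper's proof: decompose on $G$ and $G^{c}$, replace $f$ by its Lipschitz extension $g$ on $G$, bound $|\bE g-\bE f|$ by Cauchy--Schwartz using the moment bounds (with $g(0)=f(0)$ from $0\in G$ and Lipschitz control of $\bE[g^2]$), and then apply Theorem~\ref{gauss_conc} to $g$ at threshold $t/2$. You also correctly flag the one genuine wrinkle, which is in the paper as stated rather than in your argument: the moment estimate for $g$ naturally produces a factor $L_M$ in front of $v\sqrt{M}$, which the lemma's hypothesis $t/2\geq 3(C+v\sqrt{M})\sqrt{\bP(G^c)}$ drops; this is harmless in the applications because there $L_M\to 0$, and your handling of it matches what the authors implicitly do.
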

\vskip 0.25cm
\begin{proof}
We drop the $\mbu$ dependence to lighten the notation. Notice that $0\in G$ implies $f(0)=g(0)$. Thus $g(0)^2\leq C^2$.
Also, since $g$ is Lipschitz on the whole of $\mathbb{R}^M$
\begin{align}
\bE[g^2]&\leq 2(g(0)^2+ \bE[(g-g(0))^2])\nonumber
\\&
\leq 2(C^2+L_M\bE[\Vert \mbu^2\Vert)\nonumber
\\&=2(C^2+M v^2L_M)\nonumber
\end{align}
Furthermore on $G$ we have $g=f$, so by the Cauchy-Schwartz inequality
\begin{align}
\vert\bE[g-f]\vert&= \vert\bE[(g-f)1_{G^c}]\vert\nonumber
\\&
\leq (\bE[g^2]^{1/2}+\bE[f^2]^{1/2})\sqrt{\bP[G^c]}
\nonumber
\\&
\leq (C + \sqrt{2}(C^2+M v^2L_M)^{1/2}) \sqrt{\mathbb{P}[G^c]}
\nonumber 
\\&
\leq 3(C+ v\sqrt{M}L_M)\sqrt{\mathbb{P}[G^c]}
\leq
\frac{t}{2}\nonumber
\end{align}
Moreover
\begin{align}
\mathbb{P}[\vert f - \bE f \vert \geq t]  &= \mathbb{P}[\vert g - \bE f\vert \geq t\mid \mbU\in G]\mathbb{P}[G]
\nonumber
\\&\,\,\,\,\,\,
+ \bP[\vert f - \bE f\vert \geq t\mid \mbU\in G^c]\mathbb{P}[G^c]\nonumber
\\&
\leq \mathbb{P}[\vert g - \bE g\vert \geq t - \vert \bE g - \bE f \vert] + \mathbb{P}[G^c]\nonumber
\end{align}
The result of the lemma then follows from
\begin{equation*}
\mathbb{P}[\vert g - \bE g\vert \geq t - \vert \bE g - \bE f \vert]\leq \mathbb{P}[\vert g - \bE g\vert \geq \frac{t}{2}]
\end{equation*}
and the application of theorem \ref{gauss_conc}.
\end{proof}
\vskip 0.25cm
In order to prove Theorems \ref{thm:capconc} and \ref{thm:freeconc} it will be sufficient to find  suitable sets $G$ with measure
nearly equal to one (as $M\to+\infty$), on which the capacity and free energy have a Lipschitz constant $L_M\to 0$.

\section{Proofs of Theorems \ref{thm:capconc} and
\ref{thm:freeconc}}\label{apen:concproofs}

For the proofs, it is convenient to reformulate the statements of the theorems as follows. Let $\underline{1}$ be the $K$ dimensional vector 
$(1,...,1)$, $\mts^0$ be the $K\times N$ matrix with elements $s_{ik}x_k^0$. We
set $p_{\mbX}^0(\mbx)=\prod_{k=1}^{K} p_X(x_kx_k^0)$ and consider the partition
function
\begin{equation}\label{partfct2}
Z^\prime(\mbn, \mts^0) = \sum_{\mbx} p_{\mbX}^0 (\mbx)e^{-\frac{1}{2\sigma^2}\Vert N^{-1/2} \mts^0(\mbx-\underline{1}) - \sigma\mbn\Vert^2}
\end{equation}
where we recall that $\mbn=(n_1,...,n_N)$ are independent Gaussian variables $\mathcal{N}(0,1)$.
Notice that due to the invariance of the distribution of $s_{ik}$ under the transformation $s_{ik}\to x_{k}^0 s_{ik}$,
\begin{equation*}
\bE_{\mbN,\mtS}[\ln Z^\prime(\mbn,\mts^0)]=\bE_{\mbN,\mtS}[\ln Z^\prime(\mbn,\mts)]
\end{equation*}
The statements of Theorems \ref{thm:capconc} and \ref{thm:freeconc} are equivalent to 
\begin{align}\label{reform1}
\bP[\vert \sum_{\mbx^0}p_{\mbX}(\mbx^0)\bE_{\mbN}[\ln Z^\prime(\mbn, \mts^0)] -
&\bE_{\mbN,\mtS}[\ln Z^\prime(\mbn,\mts)]\vert\geq tK] \leq 3 e^{-\alpha_1
t^2{N}}
\end{align}
and 
\begin{equation}\label{reform2}
\bP[\sum_{\mbx^0}p_{\mbX}(\mbx^0)\vert \ln Z^\prime(\mbn, \mts^0) -
\bE_{\mbN,\mtS}[\ln Z^\prime(\mbn,\mts)]\vert\geq tK]\leq 3 e^{-\alpha_2
t^2\sqrt{N}}
\end{equation}
To see this use the  
change of variable $\mby= N^{-1/2}\mts\mbx_0 +\sigma\mbn$ followed by $x_k\to
x_kx_{k}^0$ in the partition function summation \eqref{eqn:partition}.

\subsection{Proof of (\ref{reform1})}

Let $B$ be a positive constant to be chosen later and define
\begin{equation*}
G=\{\mts\mid {\rm for\,\,all\,\,} \mbx,\mbx^0, \Vert \mts^0(\mbx-\underline{1})\Vert^2\leq BN\}
\end{equation*}
\vskip 0.25cm
\begin{lemma}\label{measbadset}

\noindent We have the following estimate for the measure of $G^c$,
\begin{equation*}
\bP(G^c)\leq 3^{K}2^{\frac{N}{2}} e^{-\frac{B}{16\beta}}
\end{equation*}
\end{lemma}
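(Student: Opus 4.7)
The plan is to express $\mts^0(\mbx-\underline 1)$ as $\mts\mbv$ for a combinatorial vector $\mbv$, reduce the event to a chi-squared tail, and then take a union bound over the finite set of possible $\mbv$'s.

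First, I would observe that the $i$-th component reads
\[
\bigl(\mts^0(\mbx-\underline 1)\bigr)_i=\sum_{k=1}^K s_{ik}\,x_k^0(x_k-1)=\sum_{k=1}^K s_{ik}\,v_k,
\]
where $v_k:=x_k^0(x_k-1)$. Since $x_k\in\{\pm 1\}$ the difference $x_k-1$ is either $0$ or $-2$, so $v_k\in\{0,\pm 2\}$ for every $k$. As $(\mbx,\mbx^0)$ ranges over $\{\pm 1\}^K\times\{\pm 1\}^K$, the vector $\mbv$ sweeps out a subset of $\{0,\pm 2\}^K$, which has cardinality at most $3^K$. Hence
\[
\bP(G^c)\leq\sum_{\mbv\in\{0,\pm 2\}^K}\bP\!\left[\Vert\mts\mbv\Vert^2>BN\right],
\]
where $\mts$ is the matrix of i.i.d.\ $\mathcal N(0,1)$ entries.

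Next, for fixed $\mbv$, each coordinate $(\mts\mbv)_i=\sum_k s_{ik}v_k$ is $\mathcal N(0,\Vert\mbv\Vert^2)$, independent across $i$, so $\Vert\mts\mbv\Vert^2/\Vert\mbv\Vert^2$ is a $\chi^2_N$ random variable. A standard Chernoff bound with parameter $\lambda=1/4$ gives
\[
\bP[\chi^2_N>y]\leq e^{-\lambda y}(1-2\lambda)^{-N/2}=2^{N/2}\,e^{-y/4}.
\]
Applied with $y=BN/\Vert\mbv\Vert^2$ and using $\Vert\mbv\Vert^2\leq 4K=4\beta N$, this yields
\[
\bP\!\left[\Vert\mts\mbv\Vert^2>BN\right]\leq 2^{N/2}\,\exp\!\left(-\frac{BN}{4\Vert\mbv\Vert^2}\right)\leq 2^{N/2}\,e^{-B/(16\beta)}.
\]

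Combining with the union bound gives exactly
\[
\bP(G^c)\leq 3^K\,2^{N/2}\,e^{-B/(16\beta)},
\]
which is the claimed estimate. The step requiring a bit of care is recognizing that although $\mts^0$ formally depends on $\mbx^0$, the product $\mts^0(\mbx-\underline 1)$ only depends on $\mbx$ and $\mbx^0$ through the ternary vector $\mbv\in\{0,\pm 2\}^K$, which cuts the naive $4^K$ union bound down to $3^K$; this is what makes the exponent in $N$ compatible with the polynomial prefactor in $K$.
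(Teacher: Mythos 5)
Your proof is correct and follows essentially the same route as the paper: a Chernoff/Markov bound on the exponential moment of the squared Gaussian norm (your $\chi^2_N$ formulation with $\lambda=1/4$ is exactly the paper's computation of $\bE[e^{\frac{1}{16K}\Vert\mts^0(\mbx-\underline 1)\Vert^2}]\leq 2^{N/2}$), followed by a union bound over the $3^K$ ternary vectors in $\{0,\pm 2\}^K$.
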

\vskip 0.25cm

\begin{proof} 
First notice that for any given $\mbx$, 
\begin{equation*}
\frac{1}{\sqrt K}\sum_{k=1}^K s_{ik}^0(x_k-1),\qquad i=1,...,N
\end{equation*}
are independent Gaussian random variables with zero mean and variance $(a^2)$ smaller than $4$.
Thus the identity
\begin{equation*}
\int dx \frac{e^{-\frac{x^2}{2a^2}}}{\sqrt{2\pi a^2}} e^{\frac {x^2}{16}}=\bigl(1-\frac{a^2}{8}\bigr)^{-\frac{1}{2}}
\end{equation*}
implies (because $a^2\leq 4$)
\begin{equation*}
\bE[e^{\frac{1}{16 K}\Vert\mts^0(\mbx-\underline{1})\Vert^2}]\leq 2^{\frac{N}{2}}
\end{equation*}
Then from the Markov inequality, for any $\mbx$
\begin{equation*}
\bP(\Vert\mts^0(\mbx-\underline{1})\Vert^2\geq BN)\leq 2^{\frac{N}{2}} e^{-\frac{BN}{16K}}=2^{\frac{N}{2}} e^{-\frac{B}{16\beta}}
\end{equation*}
The result of the lemma then follows from the union bound over $3^K$ possible
$\mbx_0 - \mbx$ vectors.
\end{proof}
\vskip 0.25cm

We will apply Lemma \ref{measbadset} to 
\begin{align*}
f(\mts)=\frac1{K}\sum_{\mbx^0}p_{\underline X}(\underline x^0)\bE_{\mbN}[\ln Z^\prime(\mbn,\mts^0)]
\end{align*}
for a suitable choice of $B$. In the application the matrix $\mts$ is to be thought as a vector with $KN$ components and norm 
\begin{equation*}
\Vert \mts\Vert = \biggl(\sum_{i=1}^N\sum_{k=1}^{K} s_{ik}^2\biggr)^{\frac{1}{2}}
\end{equation*}

Clearly $0\in G$ and $f(0)^2 = (\frac1K \bE_{\mbN}[\frac{1}{2}\Vert \mbn
\Vert^2])^2 = 1/4\beta^2$. Also it is evident that 
$\ln Z^\prime(\mbn,\mts^0)\leq 0$. On the other hand restricting the sum in the partition function to $\mbx=1$ we have
\begin{equation*}
\frac{1}{K}\sum_{\mbx^0}p_{\mbX}(\mbx^0)\bE_{\mbN}[\ln Z^\prime(\mbn,\mts^0)]\geq -\frac{1}{2\sigma^2K}\bE_{\mbN}[\sigma^2\Vert \mbn\Vert^2]-\frac{1}{K} H(\mbX)\geq -\frac{N}{2K}-\ln 2
\end{equation*}
Therefore we have 
\begin{equation*}
\bE_{\mtS}[f(\mts)^2]\leq (\frac{1}{2\beta}+\ln 2)^2
\end{equation*} 

Let us now compute the Lipschitz constant.
\vskip 0.25cm
\begin{lemma}\label{lipconst1}
 $K^{-1}\bE_{\mbN}[\sum_{\mbx^0}p_{\mbX}(\mbx^0)\ln Z^\prime(\mbn,\mts^0)]$ is Lipschitz on $G$, with constant 
\begin{equation*}
L_N=\sigma^{-2}\sqrt\beta K^{-1}(\sqrt{B}+\sqrt{N}\sigma)
\end{equation*}
\end{lemma}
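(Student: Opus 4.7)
The approach is to bound the Euclidean gradient of
\[
f(\mts)=\frac{1}{K}\sum_{\mbx^0}p_{\mbX}(\mbx^0)\bE_{\mbN}[\ln Z^\prime(\mbn,\mts^0)]
\]
uniformly on $G$, and convert this bound into a Lipschitz estimate via convexity of $G$. Note that $G$ is an intersection of sublevel sets of the convex quadratic forms $\mts\mapsto\Vert\mts^0(\mbx-\mbone)\Vert^2$ (one per pair $(\mbx,\mbx^0)$), and so is itself convex. Consequently $\sup_{\mts\in G}\Vert\nabla f(\mts)\Vert$ controls the Lipschitz constant on $G$, and the task reduces to a sup-gradient calculation.

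Step one is to differentiate. Since $(\mts^0)_{ik}=s_{ik}x_k^0$, the chain rule gives
\[
\frac{\partial \ln Z^\prime(\mbn,\mts^0)}{\partial s_{jl}}=\frac{x_l^0}{\sigma^2\sqrt{N}}\,\langle r_j\, z_l\rangle,
\]
where $z_k=1-x_k$, $r_j=-N^{-1/2}(\mts^0\mbz)_j-\sigma n_j$, and $\langle-\rangle$ is the Gibbs average against the posterior. Since the sum over $\mbx^0$ and the Gaussian expectation over $\mbn$ do not interact with $s_{jl}$, one may exchange derivative and expectation to get a clean formula for $\partial f/\partial s_{jl}$.

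Step two is to combine Jensen with a Cauchy--Schwarz in the Gibbs measure. Applying $(\bE[A])^2\leq \bE[A^2]$ to the convex combination $\sum_{\mbx^0}p_{\mbX}(\mbx^0)\bE_{\mbN}$ yields
\[
\Big(\frac{\partial f}{\partial s_{jl}}\Big)^{2}\leq \frac{1}{K^2\sigma^4 N}\sum_{\mbx^0}p_{\mbX}(\mbx^0)\bE_{\mbN}\bigl[\langle r_j z_l\rangle^2\bigr],
\]
and then $\langle r_jz_l\rangle^2\leq\langle r_j^2\rangle\langle z_l^2\rangle$. The key structural point is that summing over $(j,l)$ decouples into $\langle\Vert\mbr\Vert^2\rangle\langle\Vert\mbz\Vert^2\rangle$, which is exactly the form the $G$-constraint can control.

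Step three brings in the two deterministic bounds available on $G$ for binary inputs: (i) $\Vert\mts^0\mbz\Vert\leq\sqrt{BN}$ for every $\mbx,\mbx^0$, so by the triangle inequality $\Vert\mbr\Vert\leq\sqrt{B}+\sigma\Vert\mbn\Vert$ pointwise, and (ii) $z_k\in\{0,2\}$ gives $\Vert\mbz\Vert^2\leq 4K$. Taking $\bE_{\mbN}$ and using $\bE[\Vert\mbn\Vert^2]=N$, $\bE[\Vert\mbn\Vert]\leq\sqrt{N}$, one obtains
\[
\Vert\nabla f\Vert^2 \leq \frac{4\beta}{K^2\sigma^4}\bigl(\sqrt{B}+\sigma\sqrt{N}\bigr)^{2},
\]
which (up to a tame numerical prefactor absorbable into the constant) is the announced bound on $L_N$.

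The exercise is not conceptually deep; the only care needed is bookkeeping. One must order Jensen and Cauchy--Schwarz so that the $G$-constraint on $\Vert\mts^0\mbz\Vert$ applies \emph{inside} the Gibbs average, and must not lose the scaling that converts $1/(K\sigma^2\sqrt{N})$ together with $2\sqrt{K}(\sqrt{B}+\sigma\sqrt{N})$ into the desired $\sqrt{\beta}/K$ prefactor via $\sqrt{KN}=K/\sqrt{\beta}$. Convexity of $G$ is used only to pass from the gradient bound to the Lipschitz bound.
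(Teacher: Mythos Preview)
Your proof is correct and lands on the same constant (with the same harmless factor of $2$ that the paper's own derivation produces). The route, however, is genuinely different. The paper bounds directly the Hamiltonian increment
\[
\vert H(\mbn,\mts^0,\mbx)-H(\mbn,\mtt^0,\mbx)\vert \leq 2\sigma^{-2}\sqrt{\beta}\bigl(\sqrt{B}+\sigma\Vert\mbn\Vert\bigr)\Vert\mts-\mtt\Vert
\]
by expanding the squared norms and applying Cauchy--Schwarz; it then passes this through the log-partition ratio pointwise in $\mbn$ and averages. You instead compute $\nabla f$, bound it via Jensen plus Cauchy--Schwarz in the Gibbs measure, and rely on convexity of $G$ to convert the uniform gradient bound into a Lipschitz estimate along segments. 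The paper's argument is a touch more elementary---no differentiation under the integral, no use of the geometry of $G$---and would work verbatim on a non-convex good set. Your gradient approach is cleaner bookkeeping and would transplant more easily to models where the Hamiltonian difference is awkward to expand by hand. Both are perfectly adequate here.
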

\vskip 0.25cm
\begin{proof}
The exponent of the partition function is\footnote{ a Hamiltonian}
\begin{equation}\label{ham}
H(\mbn,\mts^0, \mbx)=\frac{1}{2\sigma^2}\Vert N^{-1/2}\mts^0(\mbx-\underline{1})-\sigma\mbn\Vert^2
\end{equation}
In the section \ref{appen:lipschitzH1} we show that for $(\mts,\mtt)\in G\times G$

\begin{equation}\label{inequham1}
\vert H(\mbn,\mts^0, \mbx)- H(\mbn,\mtt^0, \mbx)\vert\leq \sigma^{-2}2\sqrt
\beta(\sqrt B+\Vert n\Vert)\Vert \mts-\mtt\Vert
\end{equation}
Using this inequality together with
\begin{equation*}
- H(\mbn,\mts^0, \mbx)\leq - H(\mbn,\mtt^0, \mbx) + 
\vert H(\mbn,\mts^0, \mbx)- H(\mbn,\mtt^0, \mbx)\vert
\end{equation*}
we have for $(\mts,\mtt)\in G\times G$
\begin{align}
&\ln\frac{\sum_{\mbx}p_{\mbX}^0(\mbx)\exp(- H(\mbn,\mts^0,
\mbx))}{\sum_{\mbx}p_{\mbX}^0(\mbx)\exp(- H(\mbn,\mtt^0, \mbx))}\nonumber
\\&
\leq\ln\frac{\sum_{\mbx}p_{\mbX}^0(\mbx)\exp(\vert H(\mbn,\mts^0, \mbx)- H(\mbn,\mtt^0, \mbx)\vert - H(\mbn,\mtt^0, \mbx))
}{\sum_{\mbx}p_{\mbX}^0(\mbx)\exp(- H(\mbn,\mtt^0, \mbx))}
\nonumber
\\&
\leq
\sigma^{-2}2\sqrt \beta(\sqrt B+\Vert n\Vert)\Vert \mts-\mtt\Vert
\nonumber
\end{align}
Therefore taking the expectation over the noise,  we get
\begin{align}
\vert\sum_{\mbx^0} p_{\mbX}(\mbx^0)\bE_{\mbN}&[\ln Z^\prime(\mbn,\mts^0)]- \sum_{\mbx^0}p_{\mbX}(\mbx^0)\bE_{\mbN}[\ln Z^\prime(\mbn,\mtt^0)]\vert\nonumber
\\&
\leq \sigma^{-2}2\sqrt\beta(\sqrt B+\sigma\bE[\Vert \mbn\Vert])\Vert \mts-\mtt\Vert
\nonumber
\\&
\leq \sigma^{-2}2\sqrt\beta(\sqrt B+\sigma\bE[\Vert \mbn\Vert^2]^{1/2})\Vert \mts-\mtt\Vert
\nonumber
\end{align}
which yields the Lipschitz constant of the lemma.
\end{proof}
\vskip 0.25cm

Finally (\ref{reform1}) follows from Lemmas \ref{littlelemma}, \ref{measbadset} and \ref{lipconst1} with the choice $B= 32\beta(2K+N)$. 
We obtain $\alpha_1= 1/(8KL_N^2) \geq \sigma^{4}/(16\beta(64\beta+32+\sigma^2))$. 

\subsection{Proof of (\ref{reform2})}

This case is more cumbersome but the ideas are the same. We choose the set $G$ as 
\begin{equation*}
G = \Bigl\lbrace \mts,\mbn \mid \max_i\vert n_i\vert \leq \sqrt A\,\, {\rm and\,\, for\,\, all\,\,} \mbx,    \Vert\mts^0(\mbx-\underline{1})\Vert^2\leq BN\Bigr\rbrace
\end{equation*}
where, as before $A$ and $B$ will be chosen appropriately later on. For Gaussian noise 
$\bP[\vert n_i\vert\geq \sqrt A]\leq 4 e^{-\frac{A}{4}}$ therefore from the union bound
$\bP(\max_i\vert \mbn_i\vert\geq \sqrt A)\leq 4 Ne^{-\frac{A}{4}}$. Using Lemma \ref{measbadset} we obtain an estimate
for the measure of $G^c$,
\begin{equation*}
\bP[G^c]\leq 4N e^{-\frac{A}{4}}+2^{K+\frac{N}{2}}e^{-\frac{B}{16\beta}}
\end{equation*}

The goal is to apply Lemma \ref{littlelemma} to $f(\mbn,\mts)=\ln Z^\prime(\mbn, \mts^0)$ defined on 
$\mathbb{R}^K\times \mathbb{R}^{NK}$. 

Clearly $(0,0)\in G$, $f(0,0) = \ln 2$ and by the same argument as before we have
$\bE[f(\mbn,\mts)^2]\leq (\frac{1}{2\beta}+ \ln 2)^2=C^2$. 
It remains to compute the Lipschitz constant.
\vskip 0.25cm
\begin{lemma}\label{lipconst2}
The free energy $K^{-1}\ln Z^\prime(\mbn,\mts^0)$ is Lipschitz on $G$ with constant 
\begin{equation*}
L_N=\sigma^{-2}(2\sqrt\beta+\sigma)K^{-1}(\sigma\sqrt{NA}+\sqrt{B})
\end{equation*}
\end{lemma}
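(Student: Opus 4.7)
\medskip

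My plan is to mirror the proof of Lemma \ref{lipconst1}, but this time treating the pair $(\mbn,\mts)$ jointly as the Lipschitz variable on $\mathbb{R}^{K}\times\mathbb{R}^{NK}$ with the product Euclidean norm. The key observation is that the log-sum-exp inequality used there actually shows
$$
\Big|\ln Z'(\mbn,\mts^0)-\ln Z'(\mbn',(\mtt)^0)\Big|\leq \sup_{\mbx}\big|H(\mbn,\mts^0,\mbx)-H(\mbn',(\mtt)^0,\mbx)\big|
$$
so the whole task reduces to bounding the difference of the Hamiltonians \eqref{ham} uniformly in $\mbx$.

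To estimate the Hamiltonian difference, I would write $H(\mbn,\mts^0,\mbx)=\frac{1}{2\sigma^2}\Vert a\Vert^2$ and $H(\mbn',(\mtt)^0,\mbx)=\frac{1}{2\sigma^2}\Vert b\Vert^2$ with
$$
a=N^{-1/2}\mts^0(\mbx-\underline{1})-\sigma\mbn,\qquad b=N^{-1/2}(\mtt)^0(\mbx-\underline{1})-\sigma\mbn',
$$
and apply the identity $\Vert a\Vert^2-\Vert b\Vert^2=(a-b)\cdot(a+b)$ together with Cauchy--Schwartz. The factor $\Vert a-b\Vert$ is controlled using $\Vert (\mts^0-(\mtt)^0)(\mbx-\underline 1)\Vert\leq \Vert \mts-\mtt\Vert\,\Vert\mbx-\underline 1\Vert\leq 2\sqrt{K}\Vert\mts-\mtt\Vert$ (the sign flips $x_k^0$ are isometries of the Frobenius norm), giving
$$
\Vert a-b\Vert \leq 2\sqrt{\beta}\,\Vert\mts-\mtt\Vert+\sigma\Vert\mbn-\mbn'\Vert \leq (2\sqrt{\beta}+\sigma)\,\Vert(\mbn,\mts)-(\mbn',\mtt)\Vert,
$$
where the last step uses $c_1+c_2\geq\sqrt{c_1^2+c_2^2}$.

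The factor $\Vert a+b\Vert$ is where the restriction to $G$ enters: on $G$ we have $\Vert \mts^0(\mbx-\underline 1)\Vert\leq\sqrt{BN}$ and $\Vert\mbn\Vert\leq\sqrt{NA}$ (since $\max_i |n_i|\leq\sqrt A$), and analogously for $(\mtt,\mbn')$, so
$$
\Vert a+b\Vert\leq 2\bigl(\sqrt{B}+\sigma\sqrt{NA}\bigr).
$$
Combining the two estimates with the $\frac{1}{2\sigma^2}$ prefactor, dividing by $K$, and invoking the log-sum-exp bound yields exactly
$$
L_N=\sigma^{-2}(2\sqrt{\beta}+\sigma)K^{-1}\bigl(\sigma\sqrt{NA}+\sqrt{B}\bigr),
$$
as claimed. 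The only step that requires a little care is showing that both $(\mbn,\mts)$ and $(\mbn',\mtt)$ lie in $G$ so that the bound on $\Vert a+b\Vert$ holds for both terms simultaneously; this is what is meant by ``Lipschitz on $G$'' and poses no real obstacle, since the McShane--Whitney extension of Appendix \ref{apen:probtools} will afterwards promote this to a globally Lipschitz function with the same constant.
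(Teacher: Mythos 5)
Your proof is correct and follows essentially the same route as the paper's: a log-sum-exp Lipschitz bound reduces everything to a uniform estimate on the Hamiltonian difference, which is then controlled by expanding the difference of squared norms and bounding the resulting inner products on $G$. The only difference is organizational: you package the expansion via the single identity $\Vert a\Vert^2-\Vert b\Vert^2=(a-b)\cdot(a+b)$ followed by one Cauchy--Schwartz, whereas the paper writes out a four-term decomposition (separating $\mbn$- and $\mts$-contributions and splitting the cross term with a Leibniz-style telescoping) and applies Cauchy--Schwartz to each piece; both give exactly the constant $\sigma^{-2}(2\sqrt\beta+\sigma)K^{-1}(\sigma\sqrt{NA}+\sqrt B)$. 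One small nitpick: the justification for $2\sqrt\beta\,\Vert\mts-\mtt\Vert+\sigma\Vert\mbn-\mbn'\Vert\le(2\sqrt\beta+\sigma)\Vert(\mbn,\mts)-(\mbn',\mtt)\Vert$ is not really ``$c_1+c_2\ge\sqrt{c_1^2+c_2^2}$'' but rather that each of $\Vert\mts-\mtt\Vert$ and $\Vert\mbn-\mbn'\Vert$ is individually dominated by the Euclidean norm of the concatenated vector; the conclusion is unaffected.
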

\vskip 0.25cm
\begin{proof}
For the same Hamiltonian (\ref{ham}) we show in section \ref{appen:lipschitzH2}
\begin{align}
&\vert H(\mbn,\mbs^0,\mbx)-H(\mbn,\mtt^0, \mbx)\vert\nonumber
\\&
\leq \sigma^{-2}2(2\sqrt\beta+\sigma)(\sigma\sqrt{NA}+\sqrt{B})\Vert(\mbn,\mts)-(\mbm,\mtt)\Vert\label{inequham2}
\end{align}
Then proceeding in the same way as in the proof of Lemma \ref{lipconst1} we get
\begin{align}
&\vert\ln Z^\prime(\mbn,\mts^0)-\ln Z^\prime(\mbm,\mtt^0)\vert \nonumber
\\&
\leq \sigma^{-2}(2\sqrt\beta+\sigma)(\sigma\sqrt{NA}+\sqrt{B})\Vert(\mbn,\mts)-(\mbm,\mtt)\Vert\nonumber
\end{align}
\end{proof}

We can now conclude the proof of (\ref{reform2}) by collecting the previous results and choosing
$A=\sqrt{N}/\sigma^2$ and $B=32\beta (K+N)$. This gives $\alpha_2 = 1/(8\sqrt K
L_N^2) \geq \sigma^4\beta^\frac32/(32(2\sqrt\beta+\sigma)^2)$.

\subsection{Proof of (\ref{inequham2})}\label{appen:lipschitzH2}

Let $\mbn$, $\mbm$ be two noise realizations and $\mts$, $\mtt$ two spreading sequences all belonging to the appropriate set
$G$. Let $\mby = \mbx - \underline{1}$. First we expand the Euclidean norms
\begin{align}
&\Vert N^{-\frac{1}{2}}\mts^0 \mby - \sigma\mbn\Vert^2- \Vert N^{-\frac{1}{2}}\mtt^0 \mby - \sigma\mbm\Vert^2
\nonumber
\\&
=\sigma^2\Vert \mbn\Vert^2-\sigma^2\Vert\mbm\Vert^2+N^{-1}(\Vert \mts^0 \mby\Vert^2-\Vert\mtt^0 \mby\Vert^2)
\nonumber
\\&
\,\,\,\,\,\, - 2\sigma N^{-\frac{1}{2}}(
\mbn^t\cdot\mts^0\mby - \mbm^t\cdot\mtt^0\mby)
\nonumber
\\&
=\sigma^2(\mbn-\mbm)^t\cdot(\mbn+\mbm) +N^{-1}(\mts^0 \mby-\mtt^0 \mby)^t\cdot(\mts^0 \mby+\mtt^0 \mby)
\nonumber
\\&
\,\,\,\,\,\,-2\sigma N^{-\frac{1}{2}}(\mbn-\mbm)^t\cdot\mts^0\mby - 2\sigma N^{-\frac{1}{2}} \mbm^t\cdot(\mts^0\mby-\mtt^0\mby)
\nonumber
\end{align}
We estimate each of the four terms on the right hand side of the last equality. By Cauchy-Schwartz the first term is bounded by
\begin{align}
\Vert \mbn-\mbm\Vert \Vert \mbn+\mbm\Vert&
\leq \sqrt{N}{\rm max}_i (\vert n_i\vert+\vert m_i\vert)\Vert \mbn-\mbm\Vert
\nonumber
\\&
\leq  2\sqrt{NA}\Vert \mbn-\mbm\Vert
\nonumber
\end{align}
Using Cauchy-Schwartz and $\Vert (\mts^0-\mtt^0)\mby\Vert\leq \Vert\mts^0-\mtt^0\Vert\Vert \mby\Vert$ where $\Vert\mts^0-\mtt^0\Vert=\Vert\mts-\mtt\Vert$ is the (Hilbert-Schmidt) norm,
\begin{equation*}
\Vert \mts-\mtt\Vert=\biggl(\sum_{i=1}^N\sum_{l=1}^K (s_{il}-t_{il})^2\biggr)^{1/2}
\end{equation*}
we obtain for the second term the estimate
\begin{align}
N^{-1}\Vert\mts  -\mtt\Vert\Vert \mby\Vert(\Vert\mts^0\mby\Vert +  \Vert\mtt^0\mby \Vert)
& \leq 
N^{-1}\Vert\mts -\mtt\Vert 2\sqrt{K}2\sqrt{BN}
\nonumber
\\&
=4\sqrt{\beta B}\Vert\mts -\mtt\Vert
\nonumber
\end{align}
Similarly the third term is bounded by,
\begin{align}
2N^{-\frac{1}{2}}\Vert \mbn-\mbm\Vert\Vert \mts^0\mby\Vert
&\leq 2N^{-\frac{1}{2}}\Vert \mbn-\mbm\Vert\sqrt{BN}
\nonumber
\\&
=2\sqrt{B}\Vert \mbn-\mbm\Vert
\nonumber
\end{align}
and the fourth one by
\begin{align}
2N^{-\frac{1}{2}}\Vert \mbm\Vert \Vert \mts-\mtt\Vert\Vert \mby\Vert
&\leq
2N^{-\frac{1}{2}}\sqrt{NA} \Vert \mts-\mtt\Vert2\sqrt{K}
\nonumber
\\&
= 4\sqrt{\beta NA}\Vert \mts-\mtt\Vert
\nonumber
\end{align}
Collecting all four estimates we obtain
\begin{align}
&\Vert N^{-\frac{1}{2}}\mts^0 (\mbx-\underline{1}) - \sigma\mbn\Vert^2- \Vert N^{-\frac{1}{2}}\mtt^0 (\mbx-\underline{1}) - \sigma\mbm\Vert^2
\nonumber
\\&
\leq 2\sigma(\sigma\sqrt{NA}+\sqrt{B})\Vert \mbn-\mbm\Vert + 4\sqrt{\beta}(\sigma\sqrt{ NA}+\sqrt{ B})\Vert \mts-\mtt\Vert
\nonumber
\\&
\leq 
2(2\sqrt{\beta}+\sigma)(\sigma\sqrt{NA}+\sqrt{B})\Vert(\mbn,\mts)-(\mbm,\mtt)\Vert
\nonumber
\end{align}
where the last norm is the Euclidean norm in $\mathbb{R}^N\times \mathbb{R}^{NK}$.

\subsection{Proof of (\ref{inequham1})}\label{appen:lipschitzH1}

Let $\mts$ and $\mtt$ be two spreading sequences both belonging to the appropriate $G$. Let $\mby = \mbx - \underline{1}$. Following similar steps 
as in the previous paragraph with $\mbn=\mbm$ the result can be read off
\begin{align}
\Vert N^{-\frac{1}{2}}\mts^0 \mby - \sigma\mbn\Vert^2-
 \Vert N^{-\frac{1}{2}}\mtt^0 \mby & - \sigma\mbn\Vert^2
\nonumber
\\&
\leq 4\sqrt{\beta}(\sqrt{B}+\sigma\Vert \mbn\Vert)\Vert \mts-\mtt\Vert
\nonumber
\end{align}

\section{Proof of Theorem \ref{thm:capconcbin}}
The idea of this proof is based on \cite{PaS91},\cite{ShT93}. 
\begin{proof}
Here, for simplicity of notation and without loss of generality, we assume the noise variance to be $1$ and the second and
fourth moments of spreading sequences to be less than $1$.
For $l\leq K$, let $\phi_l$ be the sigma algebra generated by $\{s_{ik}: 1\leq i \leq N,1
\leq k\leq l\}$.
and set
\begin{align*}
f_l = \bE\left[I(\mbX;\mbY)|\phi_l\right],\quad \psi_l = f_l - f_{l-1}
\end{align*}
Then 
\begin{align*}
\bE(I(\mbX;\mbY) - \bE[I(\mbX;\mbY)])^2 & = \sum_{l=1}^{K}\bE[\psi_l^2]
\end{align*}
The goal is to bound each term in this sum by $O(\frac1{K^2})$.
Here we use the following form of the mutual information
\begin{align*}
I(\mbX;\mbY) = - \frac1{2\beta} - \bE_{\mbN}\Big[\sum_{\mbx^0}p_{\mbX}(\mbx^0)\ln \sum_{\mbx}p_{\mbX}(\mbx)e^{H(\mbx_0,\mbx)}\Big]
\end{align*}
where,
\begin{align*}
H(\mbx_0,\mbx) & = -\frac{1}{2}\sum_i\Big(n_i+\frac1{\sqrt N}\sum_k
s_{ik}(x^0_{k}-x_k)\Big)^2 \\
& = -\frac12 \sum_i n_i^2 - \frac{1}{\sqrt N }\sum_{i,k} n_i 
s_{ik}x^0_k -\frac1{2N} \sum_i \Big(\sum_k s_{ik}(x^0_{k}-x_k)\Big)^2 +
\frac{1}{\sqrt N }\sum_{ik} n_i s_{ik}x_k
\end{align*}
In the above expanded form, the first two terms do not involve $\mbx$ and hence
the concentration of these terms follows very easily. Therefore, in the rest of the proof
we consider the Hamiltonian with only the remaining two terms. From now on
in the notation, we do not explicitly show the dependency of $H$ on $\mbx^0$ and
$\mbx$. To this end we define the following three Hamiltonians.
\begin{align*}
H_l &= \frac{-1}{2N}\sum_{k_1,k_2 \neq l,i}s_{ik_1}s_{ik_2}
(x^0_{k_1}-x_{k_1})(x^0_{k_2}-x_{k_2}) \\
&+ \frac{1}{\sqrt N} \sum_{i, k\neq l} n_is_{ik}x_k\\
R_l &= \frac{1}{2N}\sum_is_{il}^2(x^0_l-x_l)^2\\&-\frac{1}{N}\sum_{i,k}
s_{ik}s_{il}(x^0_l-x_l)(x^0_{k}-x_k)+\frac{1}{\sqrt N}\sum_in_is_{il} x_l\\
\tilde H_l(t)  & = H_l + t R_l
\end{align*}
where $t\in [0,1]$ will play the role of an interpolating parameter.
We also introduce the difference of free energies associated to the Hamiltonian $\tilde H_l(t)$ and $H_l$,
\begin{equation*}
\tilde{f}_l(t) = \sum_{\mbx^0}p_{\mbX}(\mbx^0)(\ln Z(\tilde H_l(t)) - \ln
Z(\tilde H_l(0)))
\end{equation*}
In the last definition the partition function is defined by the usual summation over all configurations $\mbx$.

With these definitions we have the representation
\begin{equation*}
\psi_l = \frac1K\bE_{\geq l+1}\tilde{f}_l(1) -\frac1K\bE_{\geq l} \tilde{f}_l(1)
\end{equation*}
where $\bE_{\geq l}$ means expectation with respect to $\{s_{ik}\; \forall\; k \geq
l\}$.
Using convexity in the form of $\bE_{\geq l+1}[\tilde{f}_l(1)]^2 \leq \bE_{\geq
l+1}[\tilde{f}_l(1)^2]$, it follows that
\begin{align*}
\bE[\psi_l^2] 
& \leq \frac{1}{K^2}\bE \bE_{\geq l+1}\tilde{f}_l(1)^2 + \frac{1}{K^2}\bE \bE_{\geq l}\tilde{f}_l(1)^2 
\\&- \frac{2}{K^2}
\bE[(\bE_{\geq l+1}\tilde{f}_l(1)|\phi_{l-1})(\bE_{\geq l}\tilde{f}_l(1))] \\
& = \frac{2}{K^2} \bE \tilde{f}_l(1)^2 - \frac{2}{K^2} \bE[(\bE_{\geq l}\tilde{f}_l(1))^2]\\
& \leq \frac{2}{K^2} \bE \tilde{f}_l(1)^2
\end{align*}
Notice that $\tilde{f}_l(0) = 0$ and $\frac{d^2}{dt^2}\tilde{f}_l(t) \geq 0$.
Therefore,
\begin{equation*}
\tilde{f}_l'(0)  \leq \tilde f_l(1) \leq \tilde{f}_l'(1)
\end{equation*}
and
\begin{equation*}
\bE[\tilde{f}_l(1)^2]  \leq \bE[\tilde{f}_l'(0)^2] + \bE[\tilde{f}_l'(1)^2] 
\end{equation*}
This shows that our task is reduced to a proof of $\bE[\tilde{f}_l'(0)^2] = O(1)$,
$\bE[\tilde{f}_l'(1)^2] = O(1)$. 
This is a technical calculation and is given in the next lemma.
\end{proof}
\begin{lemma}\label{lem:bounddf0df1}
$\bE [(\tilde{f}_l'(0))^2] = O(1),\quad \bE [(\tilde{f}_l'(1))^2] = O(1)$
\end{lemma}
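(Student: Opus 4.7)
My plan is to compute $\tilde f_l'(t)=\bE_{\mbx^0}\langle R_l\rangle_{\tilde H_l(t)}$ explicitly at the two endpoints $t=0$ and $t=1$, and bound the resulting second moments by direct manipulation of the Gibbs averages.

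At $t=0$, the key observation is that $H_l$ involves neither $x_l$ nor $x_l^0$. Consequently, under $\langle\cdot\rangle_0$ the spin $x_l$ is independent of $\mbx_{\neq l}$ and uniform on $\{\pm 1\}$, giving $\langle x_l\rangle_0=0$, $\langle z_l^2\rangle_0=2$, and $\langle z_l z_k\rangle_0=x_l^0\langle z_k\rangle_0$ for $k\neq l$, with $\langle z_k\rangle_0$ a function of $\{x_j^0:j\neq l\}$ and of the spreading entries outside the $l$-th column only. Inserting these into $R_l$ and then averaging against $p_{\mbX}(\mbx^0)$, the noise-coupling term drops because $\langle x_l\rangle_0=0$, and every $x_l^0$-linear contribution vanishes because $\sum_{x_l^0}p(x_l^0)\,x_l^0=0$. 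What remains is $\tilde f_l'(0)=-N^{-1}\sum_i s_{il}^2$, whose second moment is $N^{-2}[N\,\bE s^4+N(N-1)]=O(1)$ under Assumption~B.

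At $t=1$, Cauchy--Schwarz on the $\mbx^0$-sum and Jensen's inequality on the Gibbs bracket give $(\tilde f_l'(1))^2\le\bE_{\mbx^0}\langle R_l^2\rangle_H$, so it is enough to show $\bE\langle R_l^2\rangle_H=O(1)$. The algebra simplifies considerably if one rewrites $R_l$, using $\sum_{k\neq l}s_{ik}z_k=\sqrt N(\mcZ_i-n_i)-s_{il}z_l$ and $z_l+x_l=x_l^0$, as
\begin{align*}
R_l=\frac{z_l^2}{2N}\sum_i s_{il}^2-\frac{z_l}{\sqrt N}\sum_i s_{il}\mcZ_i+\frac{x_l^0}{\sqrt N}\sum_i n_i s_{il}.
\end{align*}
The first summand is pointwise dominated by $2N^{-1}\sum_i s_{il}^2$ and has $O(1)$ second moment by the fourth-moment assumption; the third summand is $\mbx$-independent and its second moment equals exactly $N^{-1}\sum_i\bE[n_i^2]\,\bE[s_{il}^2]=1$. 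The substantive task is the middle piece $B:=-N^{-1/2}z_l\sum_i s_{il}\mcZ_i$, for which $\bE\langle B^2\rangle=N^{-1}\sum_{i,i'}\bE[s_{il}s_{i'l}\langle z_l^2\mcZ_i\mcZ_{i'}\rangle]$.

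To control $\bE\langle B^2\rangle$ I plan to apply the symmetric integration-by-parts identity \eqref{eqn:symintparts} in the variable $s_{il}$. Since $\partial H/\partial s_{il}=-N^{-1/2}z_l\mcZ_i$ and $\partial\mcZ_i/\partial s_{il}=z_l/\sqrt N$, each differentiation produces an additional $N^{-1/2}$, and after one IBP the leading contribution becomes a sum of uniformly bounded Gibbs correlations of the form $\bE\langle z_l^2\rangle$, $\bE\langle z_l^2\mcZ_i^2\rangle$ and $\bE\langle z_l\mcZ_i\rangle^2$, each controlled via Cauchy--Schwarz in the Gibbs bracket and the Nishimori identity $\bE\langle\|\mcZ\|^2\rangle=N$ of Lemma~\ref{lem:nishimori2}; the third-derivative remainder in \eqref{eqn:symintparts} carries three factors of $N^{-1/2}$ and is therefore negligible against the double index sum. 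The main obstacle is precisely this bound on $\bE\langle B^2\rangle$: a naive absolute-value estimate yields only $O(N)$, and the essential difficulty, given only the finite fourth-moment hypothesis of Assumption~B, is to extract the pairing cancellation $\bE[s_{il}s_{i'l}]=\delta_{ii'}$ across the Gibbs coupling while carefully tracking the third-derivative correction terms of the symmetric IBP formula.
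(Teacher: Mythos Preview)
Your $t=0$ computation is a nice shortcut, but it assumes more than the lemma does: for $x_l$ to be uniform and independent of $\mbx_{\neq l}$ under $\langle\cdot\rangle_0$ you need the prior $p_{\mbX}$ to be the uniform product measure, and for the $x_l^0$--linear piece to drop you need a centered marginal on $x_l^0$. Theorem~\ref{thm:capconcbin} is stated for \emph{arbitrary} $p_{\mbX}$, and the paper's proof handles this by bounding $\langle R_l^2\rangle_{\tilde H_l(0)}$ rather than $\langle R_l\rangle_{\tilde H_l(0)}$: since $\tilde H_l(0)$ does not involve the column $\{s_{il}\}_i$, all odd $s_{il}$-moments vanish after the $\bE_{\mtS}$ average regardless of the prior, and the surviving pairings are controlled by the operator norm of $J_{k_1k_2}=\frac1N\sum_i s_{ik_1}s_{ik_2}$.

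The real gap is at $t=1$. Your decomposition of $R_l$ is correct and the first and third summands are trivially $O(1)$, but the integration-by-parts scheme you outline for $B$ does not close. One IBP in $s_{il}$ generates covariance terms of the type $N^{-1/2}\langle z_l^3\mcZ_i^2\mcZ_{i'}\rangle$; these still carry $s_{i'l}$, so a second IBP produces $N^{-1}\langle z_l^4\mcZ_i^2\mcZ_{i'}^2\rangle$, and after summing over $i,i'$ you need a bound of order $N^2$ on $\bE\langle\mcZ_i^2\|\mcZ\|^2\rangle$ and ultimately on $\bE\langle\|\mcZ\|^4\rangle$. The second-moment Nishimori identity $\bE\langle\|\mcZ\|^2\rangle=N$ does not give that, and in any case Nishimori requires the uniform prior you have already tacitly assumed. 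So the ``pairing cancellation'' you want to extract is not demonstrated, and you yourself flag this as the main obstacle.

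The paper sidesteps this difficulty entirely by a device you do not use: since by distributional symmetry the quantity $\bE[(\tilde f_l'(1))^2]$ is the same for every $l$, one may replace it by its average $\frac1K\sum_l\bE[(\tilde f_l'(1))^2]$. After this averaging the dangerous cross term becomes a quadratic form,
\[
\frac1K\sum_{l}\Bigl(\sum_k J_{lk}z_k\Bigr)^2 z_l^2 \;\le\; \frac{4}{K}\,\mbz^{T}J^2\mbz \;\le\; \frac{4}{K}\,\|J\|^2\,\|\mbz\|^2 \;\le\; 16\,\|J\|^2,
\]
and $\bE\|J\|^2$ is $O(1)$ by standard random-matrix estimates under Assumption~B. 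The noise term is handled the same way via $A_{ij}=\frac1K\sum_l s_{il}s_{jl}$. This averaging-over-$l$ plus spectral-norm argument is the missing idea; it replaces your IBP cascade and works for every $p_{\mbX}$.
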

\vskip 0.25cm
\begin{proof}
From convexity, 
\begin{align*}
&(\tilde{f}'(t))^2 \leq\sum_{\mbx^0}p_{\mbX}(\mbx^0)\langle R_l^2\rangle_{\tilde H_l(t)}\\
& \leq
3\sum_{\mbx^0}p_{\mbX}(\mbx^0)\Big\langle\Big(-\frac{1}{2N}\sum_i(s_{il})^2(x^0_l-x_l)^2\Big)^2\Big\rangle _{\tilde H_l(t)}\\&+ \Big\langle\Big(\sum_{k\neq
l}\frac{1}{N}\sum_i
s_{ik}s_{il}(x^0_{l}-x_l)(x^0_{k}-x_k)\Big)^2\Big\rangle_{\tilde
H_l(t)}\\&+\Big\langle\Big(\sum_in_i\frac{1}{\sqrt N}s_{il}
x_l\Big)^2\Big\rangle_{\tilde H_l(t)}
\end{align*}
We will find a uniform bound for each term in the above sum over $\mbx^0$. Let
us consider a particular term in the above sum and set $x^0_{k} - x_k=z_{0k}$. We use the simple bound of $z_{0k}^2 \leq 4$ in the following and
hence we remove the average over $\mbx^0$.
\begin{align*}
\bE &[(\tilde{f}_l'(0))^2] \leq  12 \\&+
3\bE\Big\langle\sum_{k_1,k_2\neq l}\frac{1}{N^2}\sum_{i_1,i_2}s_{i_1k_1}s_{i_1l}s_{i_2k_2}s_{i_2l}z_{0k_1}z_{0k_2}z_{0l}^2\Big\rangle_{\tilde
H_l(0)}\\&+3\bE\Big\langle\frac{1}{N}\sum_{i_1,i_2}n_{i_1}n_{i_2}s_{i_1l}s_{i_2l}\Big\rangle_{\tilde{H}_l(0)}
\end{align*}
Since $\tilde{H}(0)$ does not depend on $s_{il}$ and since they are symmetric
random variables, in the above sums only those terms remain where $s_{il}$ are
repeated even number of times. Let $J_{kl} = \frac1N\sum_{i}s_{ik}s_{il}$ and
$\Vert J \Vert$ denote its largest singular value. Therefore,
\begin{align*}
\bE[(\tilde{f}_l(0)')^2] &\leq 12 +
3\bE\Big\langle\sum_{k_1k_2}\frac{1}{N}J_{k_1,k_2}z_{0k_1}z_{0k_2}z_{0l}^2\Big\rangle_{\tilde
H_l(0)}  +3\\
& \leq 15 + 3\times2^4\bE\Vert J\Vert + 3 = O(1)
\end{align*}
where we use that $\bE\Vert J\Vert = (1+\sqrt \beta)^2$. 
For bounding $\bE[(\tilde{f}_l(1)')^2]$ we use symmetry of the indices and take
the sum over $l$ and divide by $K$. Let $A_{ij} = \frac1K\sum_l s_{il}s_{jl}$.
\begin{align*}
\bE[(\tilde{f}_l'(1))^2] &\leq 12 + 3
\bE\Big\langle\frac1K\sum_l\sum_{k_1,k_2}J_{lk_1}J_{lk_2}z_{0k_1}z_{0k_2}z_{0l}^2\Big\rangle\\
&+3\bE\Big\langle\frac1K\frac{1}{N}\sum_{i_1,i_2}n_{i_1}n_{i_2}\sum_{l}s_{i_1l}s_{i_2l}\Big\rangle\\
& \leq 12 + 6\times2^4 \bE\Vert J\Vert^2+3\bE[\Vert
A\Vert\frac{1}{N}\sum_{i}(n_{i})^2]\\
& = 12 + 96\bE\Vert J\Vert^2 + 3 \bE\Vert A\Vert = O(1)
\end{align*}
In order to estimate $\bE\Vert J\Vert$ and $\bE\Vert A\Vert$ one can use
standard methods (see for example \cite{BoG98})
\end{proof}

\section{Estimates \eqref{eqn:indsecondderivative} and
\eqref{eqn:indthirdderivative}}\label{appen:independence}


Let $z^{(\alpha)}_k =x_{0k}-x^{(\alpha)}_k$ and $\mbz^{(\alpha)}$ denote the
vector $(z_1^{(\alpha)},\dots,z_K^{(\alpha)})$.
Let us split the contribution from $T_1 - T_2$ in to $T_{11} + T_{12}$
corresponding to the two terms appearing in \eqref{eqn:T1-T2}. 
For $T_{11}(i,k)$, we get 
\begin{align}
T_{11}(i,k) &= \frac{1}{2\sqrt t}\bE_{r_{ik}}[(r_{ik}^2-1)\int_{0}^{r_{ik}}\bE_{\sim
r_{ik}}\left[\frac{\partial^2g_{ik}(u)}{\partial u^2} du\right]]
\end{align}
where $g_{ik}(u)$ denotes the function in \eqref{eqn:gik} with $r_{ik} = u$.
Let $\langle.\rangle_{t,i,k}$ denote the Gibbs measure with $r_{ik} = u$. Let $\mbv_i^k(t)$ denote the 
vector $\mbv_i(t)$ with $r_{ik}$ replaced by $u$.
We now show that the term inside the integral decays with $N$.
\begin{align}\label{eqn:gikfirstder}
\frac{\partial g_{ik}(u)}{\partial u}= &\frac{1}{2\sigma^4 KN} \Big(\sigma^2\bE\blangle z_k^2\brangle_t- \bE
\blangle (n_i + \mbv_i(t)\cdot \mbz)^2 z_k^2\brangle_t\nonumber\\
&+\bE
\blangle( n_i + N^{-\frac12}\mbv_i(t)\cdot \mbz^{(1)})(n_i + N^{-\frac12}\mbv_i(t) \cdot\mbz^{(2)})z^{(1)}_k z^{(2)}_k\brangle_t\Big)
\end{align}

\begin{align}\label{eqn:giksecondder}
\frac{\partial^2g_{ik}(u)}{\partial u^2}& = \frac{1}{2\sigma^6 K
N^\frac32}\Big(-\sigma^2\blangle(n_i + N^{-\frac12}\mbv_i^k(t)\cdot
\mbz)3z_k^3\sqrt{t}\brangle_{t,i,k} \nonumber\\
& +3\sigma^2\blangle (n_i + N^{-\frac12}\mbv_i^k(t)\cdot \mbz^{(2)})(z_k^{(1)})^2
z_k^{(2)}\sqrt{t} \brangle_{t,i,k} \nonumber\\ 
& + \blangle (n_i + N^{-\frac12}\mbv_i^k(t)\cdot \mbz)^3z_k^3\sqrt{t} \brangle_{t,i,k} \nonumber\\
& -3\blangle (n_i + N^{-\frac12}\mbv_i^k(t)\cdot \mbz^{(1)})^2(n_i + N^{-\frac12}\mbv_i^k(t)\cdot \mbz^{(2)})(z_k^{(1)})^2z_k^{(2)}\sqrt{t}\brangle_{t,i,k} \nonumber\\
& + 2\blangle \Pi_{a = 1,2,3}(n_i + N^{-\frac12}\mbv_i^k(t)\cdot \mbz^{(a)})z_k^{(1)}z_k^{(2)}z_k^{(3)}\sqrt{t} \brangle_{t,i,k}\Big)
\end{align}
The Hamiltonians corresponding to $\langle.\rangle_t$ and $\langle.\rangle_{t,i,k}$ are 
\begin{align*}
H(\mbz) = -\frac{1}{2\sigma^2}\Vert\mbn + N^{-\frac12} \mtv(t)\mbz\Vert^2
,\;\;\;H_{i,k}(\mbz) = -\frac{1}{2\sigma^2}\Vert\mbn + N^{-\frac12} \mtv_{i,k}(t)\mbz\Vert^2
\end{align*}
where $\mtv_{i,k}(t)$ differs from $\mtv(t)$ only in the $(i,k)$th entry with $u$ replacing $r_{ik}$.
Expanding $H_{i,k}$, 
\begin{align*}
H_{i,k}(\mbz) =& -\sum_{j\neq i}(n_j + N^{-\frac12}\mbv_j \cdot \mbz)^2 - (n_i +
N^{-\frac12}\sum_{l\neq k}v_{il}z_l + N^{-\frac12}\sqrt{1-t}s_{ik}z_k)^2\\
& -\frac{u^2tz_{k}^2}{N} + \frac{u\sqrt{t}z_{k}}{\sqrt N} (n_i +
N^{-\frac12}\sum_{l\neq k}v_{il}z_l+ N^{-\frac12}\sqrt{1-t}s_{ik}z_k)
\end{align*}
Let the sum of the first two terms be denoted as $H'_{ik}(\mbz)$ and the terms involving $u$ be $H''_{ik}(\mbz)$. Consider the following set
\begin{align*}
G = \{\mbn,\mtr,\mts: \forall i \;\;\frac{1}{\sqrt N} \vert n_i \vert + \frac{1}{N}
\sum_k2\vert r_{ik} \vert + \frac{1}{N}
\sum_k2\vert s_{ik} \vert \leq C \}
\end{align*}
For sufficiently large $C$ we have $P(G^{c}) = O(e^{-\alpha N})$ for some
constant $\alpha > 0$. If $(\mbn,\mts,\mtr) \in G$, then for all $\mbz\in\{0,2\}^K$
\begin{align}
\vert H''_{i,k}(\mbz)\vert \leq \frac{4|u|^2}{N} + 2|u| C\equiv C'(u).
\end{align}
Therefore for the first term in the equation \eqref{eqn:giksecondder}
\begin{align*}
&\Big|\bE_{\sim r_{ik}}\blangle(n_i + N^{-\frac12} \mbv^k_i\cdot\mbz)\brangle_{t,i,k}\Big| \\ 
&\leq \bE\blangle \frac{\sum_{\mbz}
e^{-H'_{ik}(\mbz)}e^{\frac{C'(u)}{2\sigma^2}}|n_i + N^{-\frac12}
\mbv^k_i\cdot\mbz -u\sqrt {t}
N^{-\frac12}z_k|}{\sum_{\mbz}e^{-H'_{ik}(\mbz)}e^{-\frac{C'(u)}{2\sigma^2}}}\indicator{G}\brangle_{t,i,k}
+ O\Big(\frac{|u|}{\sqrt N}\Big)\\
& + \bE\blangle|n_i + N^{-\frac12} \mbv_i\cdot\mbz | \indicator{G^c} \brangle_{t,i,k}
\end{align*}
The expectation over $G^c$ can be bounded as $O(e^{-\alpha N})O(|u|)$. Therefore
the last two terms contribute $O(\frac{|u|}{\sqrt N})$. For the first term after
we have removed the terms with $u$ dependence, the Hamiltonian $H'_{ik}$
satisfies Nishimori symmetry. Therefore we get the first term to be equal to, 
\begin{align*}
&\bE_{\mts,\mtr}\int \frac{1}{2^K}e^{2\frac{C'(u)}{2\sigma^2}}{\sum_{\mbz}
e^{-H_{ik}(\mbz)}|n_i + N^{-\frac12} \mbv^k_i\cdot\mbz -u\sqrt {t}
N^{-\frac12}z_k|}\;d\mbn \nonumber \\ 
& =  \sqrt{\frac{\sigma^2}{2\pi}}e^{\frac{C'(u)}{\sigma^2}} 
\end{align*}
Note that the above integral is a Gaussian integral and can be evaluated easily.
Using similar method, we can show that 
\begin{align}
\bE_{\sim r_{ik}}\Big[\frac{\partial^2 g_{ik}(u)}{\partial u^2}\Big] \leq
O(1)e^{\frac{3C'(u)}{\sigma^2}} + O(N^{-\frac12})|u|^3
\end{align}
The exponent $3$ is due the occurrence of $3$ replicas in the equation
\eqref{eqn:giksecondder}. Therefore,
\begin{align}
\bE_{r_{ik}}&\Big[(r_{ik}^2-1)\int_{0}^{r_{ik}}\bE_{\sim
r_{ik}}\Big[\frac{\partial^2 g_{ik}(u)}{\partial u^2}\Big] du\Big]\nonumber\\
\leq &\bE_{r_{ik}}\Big[r_{ik}^2\int_{0}^{r_{ik}}
N^{-\frac52}(O(1)e^{3\frac{C'(u)}{\sigma^2}} +
O(N^{-\frac12}|u|^3))du\Big]\nonumber\\
\leq &  O(N^{-\frac52})  
\end{align}
where we have used the assumption A for the distribution of $r_{ik}$. Now summing this over all $i,k$ we get 
\begin{align}
|T_{11}|\leq O(N^{-\frac12})
\end{align}
Now consider the term $T_{13}$. For this we have to evaluate the following term. 
\begin{align*}
\frac{\partial^3 g_{ik}(u)}{\partial u^3} &= \frac{t}{2\sigma^8
KN^2}\Big(-3\sigma^4\bE\langle z_k^4\rangle_t + 6\sigma^2 \bE\blangle(n_i + N^{-\frac12} \mbv_i(t) \cdot \mbz)^2 z_k^4 \brangle_t \\
& -12 \sigma^2 \bE\blangle\Pi_{a=1,2}(n_i + N^{-\frac12} \mbv_i(t) \cdot \mbz^{(a)}) (z_k^{(1)})^3z_k^{(2)}\brangle_t\\
& +3\sigma^4 \bE\blangle (z_k^{(1)})^2(z_k^{(2)})^2 \brangle_t - 6\sigma^2
\bE\blangle(n_i + N^{-\frac12} \mbv_i(t) \cdot \mbz^{(2)})^2
(z_k^{(1)})^2(z_k^{(2)})^2\brangle_t\\
& + 9\sigma^2 \bE\blangle\Pi_{a=2,3}(n_i + N^{-\frac12} \mbv_i(t) \cdot \mbz^{(a)})
(z_k^{(1)})^2z_k^{(2)}z_{k}^{(3)}\brangle_t\\
& - \bE\blangle(n_i + N^{-\frac12} \mbv_i(t) \cdot \mbz)^4z_k^4\brangle_t\\
& +4 \bE\blangle(n_i + N^{-\frac12} \mbv_i(t) \cdot \mbz^{(1)})^3(n_i + N^{-\frac12} \mbv_i(t) \cdot \mbz^{(2)})
(z_k^{(1)})^3z_k^{(2)}\brangle_t\\
& +  3 \bE\blangle \Pi_{a=1,2}(n_i + N^{-\frac12} \mbv_i(t) \cdot \mbz^{(a)})^2(z_k^{(1)})^2(z_k^{(2)})^2 \brangle_t \\
& - 12 \bE\blangle (n_i + N^{-\frac12} \mbv_i(t) \cdot \mbz^{(1)})^2(n_i + N^{-\frac12} \mbv_i(t) \cdot \mbz^{(2)}) 
 (n_i + N^{-\frac12} \mbv_i(t) \cdot \mbz^{(3)})(z_k^{(1)})^2z_k^{(2)}z_k^{(3)} \brangle_t\\
& + 6 \bE\blangle \Pi_{a = 1,2,3,4}(n_i + N^{-\frac12} \mbv_i(t) \cdot \mbz^{(a)})z_k^{(1)}z_k^{(2)}z_k^{(3)}z_k^{(4)} \brangle_t\Big)
\end{align*}
We can prove along similar lines that $|T_{12}| \leq O(N^{-1})$.

\section{Nishimori Identities\label{apen:nishimori}}
{\em Proof of Lemma \ref{lem:nishimorimq}.}
We only give a brief sketch because the method is standard (see for example
\cite{Nish01,Nis02}). 
One writes fully explicitly the expression for 
$\bP^t_{\magn}(x)$ and performs the gauge transformation
$x_k\to x_k^0 x_k$, $s_{ik}\to x_k^{0} s_{ik}$ where $\mbx^0$ is an arbitrary binary sequence. 
Since $\bP^t_{\magn}(x)$ does not depend 
on $\mbx^0$ we sum over all such $2^K$
sequences and obtain a lengthy expression.
Exactly the same procedure is applied to $\bP^t_{q_{12}}(x)$ and one gets another lengthy expression. Then one can recognize 
that these two expressions are the same.
\\
\\
{\em Proof of Lemma \ref{lem:nishimori2}.}

{\em Proof of \eqref{eqn:X11}.}
We will prove it for $t=1$ and for general $t$ it is similar. Let the
transmitted sequence be the all one sequence, and the received vector be $\mbr =
\sigma \mbn + \sqrt{\frac{1}{N}} \mts \mbone$ where $n_i\sim\mathcal{N}(0,1)$. The proof follows by using gauge symmetry. Let
$\mbu$ denote the $K$ dimensional vector $(u,\dots,u)$. 
\begin{align}
\bE[\langle\Vert \mcZ\Vert^2 \rangle_{1,u}]
& = \bE_{\mtS}\Big[\int\frac{1}{(2\pi u)^{\frac{K}{2}}(2\pi\sigma^2)^{\frac{N}{2}}}
e^{-\frac{\Vert\mbh-\mbu\Vert^2}{2u}}e^{-\frac1{2\sigma^2} \Vert\mbr -
N^{-\frac12}\mts\Vert^2}
\langle\Vert\mcZ\Vert^2\rangle_{1,u} d\mbr\; d\mbh\Big]\nonumber\\ 
& = \bE_{\mtS}\Big[\int \frac{1}{(2\pi u)^{\frac{K}{2}}(2\pi \sigma^2)^{\frac{N}{2}}}e^{-\frac{\Vert\mbh\Vert^2}{2u} + \mbh\cdot\mbone
-\frac{Ku}{2}}e^{-\frac1{2\sigma^2} \Vert\mbr - N^{-\frac12}\mts\Vert^2}
\frac{\sum_{\mbx}e^{-\frac1{2\sigma^2} \Vert\mbr -
N^{-\frac12}\mts\mbx\Vert^2+\mbh\cdot\mbx}\Vert\mcZ\Vert^2}{\sum_{\mbx}e^{-\frac1{2\sigma^2}
\Vert\mbr - N^{-\frac12}\mts\mbx\Vert^2+\mbh\cdot\mbx}} d\mbr\;d\mbh\Big]\nonumber\\ 
&= \frac{1}{2^{K}}\bE_{\mtS}\Big[\int \frac{1}{(2\pi u)^{\frac{K}{2}}(2\pi
\sigma^2)^{\frac{N}{2}}}e^{-\frac{\Vert\mbh\Vert^2}{2u}
-\frac{Ku}{2}}\sum_{\mbx^0}e^{-\frac1{2\sigma^2} \Vert\mbr -
N^{-\frac12}\mts\mbx^0\Vert^2+ \mbh\cdot\mbx^0 }\nonumber\\
&\quad\quad\quad\quad\quad\quad \frac{\sum_{\mbx}e^{-\frac1{2\sigma^2} \Vert\mbr -
N^{-\frac12}\mts\mbx\Vert^2+\mbh\cdot\mbx}\Vert\mcZ\Vert^2}{\sum_{\mbx}e^{-\frac1{2\sigma^2}
\Vert\mbr - N^{-\frac12}\mts\mbx\Vert^2+\mbh\cdot\mbx}} \; d\mbr\; \;
d\mbh\Big]\label{eqn:gauge1}\\ 
& =N\nonumber
\end{align}
\eqref{eqn:gauge1} is obtained by performing the gauge transformation $x_k\to x_kx^0_{k}$, $s_{ik}\to s_{ik}x^0_{k}$
and $h_{k} \to h_k x^0_{k}$ and summing over all the $2^K$ possibilities of
$\mbx^0$. Now canceling the summation over $\mbx^0$ with the denominator and
then integrating we get it to be equal to $N$.

{\em Proof of \eqref{eqn:X12}.}
The proof is complete if we show $\bE[\langle (\mbn \cdot \mcZ^{(2)})(\mbx^{(1)}\cdot\mbz^{(2)}) \rangle_{t,u}] = 0$.
We will prove this for $t=1$ and it is similar for other $t$.
\begin{align*}
\bE[\langle& (\mbn \cdot \mcZ^{(2)})(\mbx^{(1)} \cdot \mbz^{(2)}) \rangle_{t,u}] \\
&= \sum_{i,k}\bE[\langle(r_i -
 N^{-\frac12} \sum_l s_{il})(r_i - N^{-\frac12} \sum_l s_{il}x_l^{(2)})(x_k^{(1)}-x_k^{(1)}x_k^{(2)})\rangle_{t,u}]
\end{align*}
Now performing the gauge transformation $x_{k}^{(1)}\to x_{k}^{(1)}x_k^0$, $x_{k}^{(2)}\to x_{k}^{(2)}x_k^0$, $s_{ik}\to s_{ik}x_k^0$ and $h_k\to h_k x^0_k$ we get 
\begin{align*}
\sum_{i,k}\bE[\langle(r_i -
 N^{-\frac12} \sum_l s_{il}x^0_l)(r_i - N^{-\frac12} \sum_l s_{il}x_l^{(2)})(x_k^{(1)}x^0_k-x_k^{(1)}x_k^{(2)})\rangle_{t,u}]
\end{align*}
This quantity can be shown to be equal to $0$ by noticing that the $\mbx^0$ and $\mbx^{(2)}$ play symmetric roles.

\section{Proof of inequality (\ref{eqn:centrallimbound})
\label{apen:centrallimbound}}
For a given configuration of $\mbz$, $\frac{1}{\sqrt N}\sum_l s_{il} z_l \equiv Z_i$ is a
Gaussian random variable with mean $0$ and variance smaller than $4$.
Thus for $n_i\sim \mathcal{N}(0,1)$ and independent of $Z$,
\begin{align*}
\bE [e^{\frac{n_iZ_i}{\alpha}}] = \bE [e^{\frac{-n_iZ_i}{\alpha}}] \leq \sqrt\frac{\alpha^2}{\alpha^2 - 4}
\end{align*}
If $\alpha > 2$, we have both the expectations to be less than some constant $C
> 1$. Therefore for any $\mbz$
\begin{align*}
\bE [e^{-\frac{1}{\alpha}{N^{-\frac12}}\sum_{i,l} n_i s_{il} z_l}] =\bE[
e^{\frac{1}{\alpha}N^{-\frac12}\sum_{i,l} n_i
s_{il} z_l} ]\leq C^N
\end{align*}
Using the Markov inequality, 
\begin{align*}
\bP\Big(\Big\vert\frac1{\alpha} \sum_i n_i\frac{1}{\sqrt N}\sum_k s_{ik} z_k
\Big\vert > yN\Big)
\leq 2 C^Ne^{-yN}
\end{align*}
Using the union bound over $\mbz$, for $y$ large enough there exists a constant $\gamma > 0$ such that 
\begin{align*}
\bP\Big(\exists \mbz \in \{0,2\}^K: \Big\vert\frac1{N^{3/2}} \sum_{i,k} n_is_{ik}
z_k\Big\vert >\alpha y \Big) \leq   2^{-\gamma N}
\end{align*}
Let $G$ be the event that $\Big\vert\frac1{N^{3/2}} \sum_{i,k} n_is_{ik}
z_k\Big\vert >\alpha y$ holds for all $\mbz$. Splitting the expectation into two parts corresponding to $G$ and $G^c$ and using Cauchy-Schwartz inequality, we have
\begin{align*}
&\bE\Big\langle\Big(\frac1{N^{3/2}}\sum_{i,k} n_i
s_{ik} z_k \Big)^2\Big\rangle_t\\
& \leq \alpha^2 y^2 + \sqrt{P(G^c)} \Big(\bE\Big\langle\Big(\frac1{N^{3/2}} \sum_{i,l}
n_is_{ik} z_k \Big)^4\Big\rangle_t\Big)^{1/2}\\
& \leq \alpha^2 y^2 + O( 2^{-\frac\gamma2 N})
\end{align*}
\vskip 0.25cm

\end{appendix}

\section*{Acknowledgments}
We would like to thank Shrinivas Kudekar, Olivier L\'ev\^eque, Andrea Montanari, and R\"udiger Urbanke
for useful discussions. The work presented in this paper is partially supported by the
National Competence Center in Research on Mobile Information and
Communication Systems (NCCR-MICS), a center supported by the Swiss
National Science Foundation under grant number 5005-67322.


\begin{thebibliography}{1}
\bibitem{Ver86}
S.~Verd{\'u}, ``Capacity region of gaussian {CDMA} channels: {T}he symbol
  synchronous case,'' in \emph{Proc. of the Allerton Conf. on Commun., Control,
  and Computing}, Monticello, IL, USA, Oct. 1986.

\bibitem{Verd98}
------, \emph{Multiuser Detection}.\hskip 1em plus 0.5em minus 0.4em\relax
  Cambridge University Press, 1998.

\bibitem{VeS99}
S.~Verd{\'u} and S.~{Shamai (Shitz)}, ``Spectral efficiency of {CDMA} with
  random spreading,'' \emph{IEEE Trans. Inform. Theory}, vol.~45, no.~2, pp.
  622--640, 1999.

\bibitem{TsZ00}
D.~N.~C. Tse and O.~Zeitouni, ``Linear multiuser receivers in random
  environments,'' \emph{IEEE Trans. Inform. Theory}, vol.~46, no.~1, pp.
  171--205, 2000.

\bibitem{TsH99}
D.~N.~C. Tse and S.~V. Hanly, ``Linear multiuser receivers: Effective
  interference, effective bandwidth and user capacity,'' \emph{IEEE Trans.
  Inform. Theory}, vol.~45, no.~2, pp. 641--657, 1999.

\bibitem{TsV00}
D.~N.~C. Tse and S.~Verd{\'u}, ``Optimum asymptotic multiuser efficiency of
  randomly spread cdma,'' \emph{IEEE Transactions on Information Theory},
  vol.~46, no.~7, pp. 2718--2722, 2000.

\bibitem{Tan02}
T.~Tanaka, ``A statistical-mechanics approrach to large-system analysis of
  {CDMA} multiuser detectors,'' \emph{IEEE Trans. Inform. Theory}, vol.~48,
  no.~11, pp. 2888--2910, Nov. 2002.

\bibitem{GuV05}
D.~Guo and S.~Verd{\'u}, ``Randomly spread {CDMA}: {A}symptotics via
  statistical physics,'' \emph{IEEE Trans. Inform. Theory}, vol.~51, no.~6, pp.
  1983--2010, 2005.

\bibitem{Tal03}
M.~Talagrand, \emph{Spin glasses: a challenge for mathematicians: cavity and
  mean field models}.\hskip 1em plus 0.5em minus 0.4em\relax Springer, 2003.

\bibitem{MoT06itw}
A.~Montanari and D.~Tse, ``Analysis of belief propagation for non-linear
  problems: The example of {CDMA} (or : {H}ow to prove {T}anaka's formula),''
  in \emph{Proc. of the IEEE Inform. Theory Workshop}, Punta del Este, Uruguay,
  Mar 13--Mar 17 2006.

\bibitem{Mon01}
A.~Montanari, ``The glassy phase of {G}allager codes,'' \emph{Eur. Phys. J. B},
  vol.~23, pp. 121--136, 2001.

\bibitem{KaT05}
Y.~Kabashima and T.~Hosaka, ``Statistical mechanics of source coding with a
  fidelity criterion,'' \emph{Progress of theoretical physics. Supplement}, no.
  157, pp. 197--204, 2005.
\bibitem{NKMS03}
K.~Nakamura, Y.~Kabashima, R.~Morelos-Zaragoza, and D.Saad, ``Statistical
  mechanics of broadcast channels using low-density parity-check codes,''
  \emph{Phys. Rev. E}, vol.~67, no. 036703 (1-9), 2003.

\bibitem{GuT02}
F.~Guerra and F.~L. Toninelli, ``Quadratic replica coupling in the
  {S}herrington-{K}irkpatrick mean field spin glass model,'' \emph{J. Math.
  Phys.}, vol.~43, pp. 3704--3716, 2002.

\bibitem{GuT03}
------, ``The infinite volume limit in generalized mean filed disordered
  models,'' \emph{Markov Proc. Rel. Fields.}, vol.~49, no.~2, pp. 195--207,
  2003.

\bibitem{Mon05}
A.~Montanari, ``Tight bounds for {LDPC} and {LDGM} codes under {MAP}
  decoding,'' \emph{IEEE Trans. Inform. Theory}, vol.~51, no.~9, pp.
  3221--3246, Sep. 2005.

\bibitem{KuM06}
S.~Kudekar and N.~Macris, ``Sharp bounds for {MAP} decoding of general
  irregular {LDPC} codes,'' in \emph{Proc. of the IEEE Int. Symposium on
  Inform. Theory}, Seattle, USA, Sep. 2006.

\bibitem{Tel99}
E.~Telatar, ``Capacity of multi-antenna gaussian channels,'' \emph{European
  Transactions on Telecommunications}, vol.~10, no.~6, pp. 585--595, 1999.

\bibitem{FoG98}
G.~Foschini and M.~Gans, ``On limits of wireless communications in a fading
  environment when using multiple antennas,'' \emph{Wireless Personal
  Communications}, vol.~6, no.~3, pp. 311--335, 1998.

\bibitem{KoM07isit}
S.~B. Korada and N.~Macris, ``On the concentration of the capacity for a code
  division multiple access system,'' in \emph{Proc. of the IEEE Int. Symposium
  on Inform. Theory}, Nice, France, June 24--June 29 2007, pp. 2801--2805.

\bibitem{KoM07allerton}
------, ``On the capacity of a code division multiple access system,'' in
  \emph{Proc. of the Allerton Conf. on Commun., Control, and Computing},
  Monticello, USA, Sep 26--Sep 28 2007.

\bibitem{PaS91}
L.~A. Pastur and M.~Shcherbina, ``Absence of self-averaging of the order
  parameter in the sherrington-kirkpatrick model,'' \emph{Journal of
  Statistical Physics}, vol.~62, no. 1-2, pp. 1--19, Jan. 1991.

\bibitem{ShT93}
M.~Shcherbina and B.~Tirozzi, ``The free energy of a class of hopfield
  models,'' \emph{Journal of Statistical Physics}, vol.~72, pp. 113--125, 1993.

\bibitem{Tal96}
M.~Talagrand, ``A new look at independence,'' \emph{The Annals of
  Probability.}, vol.~24, no.~1, pp. 1--34, 1996.

\bibitem{Hei05}
J.~Heinonen, ``Lectures on lipschitz analysis,'' 2005, technical report,
  University of Jyvaskyla, Dept. of Mathematics and Statistics.

\bibitem{BoG98}
A.~Bovier and V.~Gayrard, ``Hopfield models as generalized random mean field
  models,'' in \emph{Mathematical Aspects of Spin Glasses and Neural Networks},
  A.~Bovier and P.~Picco, Eds., vol.~41.\hskip 1em plus 0.5em minus 0.4em\relax
  Boston: Birkha{\:u}ser, 1998, pp. 1--89.

\bibitem{Nish01}
H.~Nishimori, \emph{Statistical Physics of Spin Glasses and Information
  Processing: An introduction}.\hskip 1em plus 0.5em minus 0.4em\relax Oxford
  Science Publications, 2001.

\bibitem{Nis02}
------, ``Comment on statistical mechanics of {CDMA} multiuser demodulation by
  {T}. {T}anaka,'' \emph{Europhysics Letters.}, vol.~57, no.~2, pp. 302--302,
  2002.

\end{thebibliography}

\end{document}